\title{Breaking the $2^n$ barrier for $5$-coloring and $6$-coloring}
\author{Or Zamir\footnote{Blavatnik School of Computer Science, Tel Aviv University, Israel.}}
\date{}
\newtheorem{theorem}{Theorem}[section]
   \newtheorem{lemma}[theorem]{Lemma}
   \newtheorem{corollary}[theorem]{Corollary}
   \newtheorem{claim}[theorem]{Claim}
   \newtheorem{definition}[theorem]{Definition}
\newtheorem{open}{Open Problem}
\newtheorem*{open*}{Open Problem}
   \newtheorem{observation}{Observation}
\renewcommand{\leq}{\leqslant}
\renewcommand{\geq}{\geqslant}
\newcommand\restr[2]{{
  \left.\kern-\nulldelimiterspace 
  #1 
  \vphantom{\big|} 
  \right|_{#2} 
  }}
\begin{document}

\maketitle

\begin{abstract}
The coloring problem (i.e., computing the chromatic number of a graph) can be solved in $O^*(2^n)$ time, as shown by Bj{\"o}rklund, Husfeldt and Koivisto in 2009.
For $k=3,4$, better algorithms are known for the $k$-coloring problem.
$3$-coloring can be solved in $O(1.33^n)$ time (Beigel and Eppstein, 2005) and $4$-coloring can be solved in $O(1.73^n)$ time (Fomin, Gaspers and Saurabh, 2007).
Surprisingly, for $k>4$ no improvements over the general~$O^*(2^n)$ are known.
We show that both $5$-coloring and $6$-coloring can also be solved in $O\left(\left(2-\varepsilon\right)^n\right)$ time for some~$\varepsilon>0$.
As a crucial step, we obtain an exponential improvement for computing the chromatic number of a very large family of graphs.
In particular, for any constants~$\Delta,\alpha>0$, the chromatic number of graphs with at least~$\alpha\cdot n$ vertices of degree at most~$\Delta$ can be computed in $O\left(\left(2-\varepsilon\right)^n\right)$ time, for some~$\varepsilon = \varepsilon_{\Delta,\alpha} > 0$.
This statement generalizes previous results for bounded-degree graphs (Bj{\"o}rklund, Husfeldt, Kaski, and Koivisto, 2010) and graphs with bounded average degree (Golovnev, Kulikov and Mihajilin, 2016). 
We generalize the aforementioned statement to List Coloring, for which no previous improvements are known even for the case bounded-degree graphs.
\end{abstract}

\section{Introduction}
The problem of $k$-coloring a graph, or determining the \emph{chromatic number} of a graph (i.e., finding the smallest $k$ for which the graph is $k$-colorable) is one of the most classic and well studied NP-Complete problems.
Computing the chromatic number is listed as one of the first NP-Complete problems in Karp's paper from 1972 \cite{karp1972reducibility}.
In a similar fashion to $k$-SAT, the problem of $2$-coloring is polynomial, yet $k$-coloring is NP-complete for every $k\geq 3$ (proven independently by Lov{\'a}sz \cite{lovasz1973coverings} and Stockmeyer \cite{stockmeyer1973planar}).
An algorithm solving $3$-coloring in sub-exponential time would imply, via the mentioned reductions, that $3$-SAT can also be solved in sub-exponential time. 
It is strongly believed that this is not possible (as stated in a widely believed conjecture called \emph{The Exponential Time Hypothesis} \cite{impagliazzo2001complexity}), and thus it is believed that exact algorithms solving $k$-coloring must be exponential. 

There is a substantial and ever-growing body of work exploring exponential-time worst-case algorithms for NP-Complete problems. A 2003 survey of Woeginger \cite{woeginger2003exact} covers and refers to dozens of papers exploring such algorithms for many problems including satisfiability, graph coloring, knapsack, TSP, maximum independent sets and more.
Subsequent review article of Fomin and Kaski \cite{fomin2013exact} and book of Fomin and Kratsch \cite{fomin2010exact} further cover the topic of exact exponential-time algorithms.

For satisfiability (commonly abbreviated as SAT), the running time of the trivial algorithm enumerating over all possible assignments is $O^*(2^n)$.
No algorithms solving SAT in time $O^*\left(\left(2-\varepsilon\right)^n\right)$ for any $\varepsilon>0$ are known, and a popular conjecture called The \emph{Strong} Exponential Time Hypothesis \cite{calabro2009complexity} states that no such algorithm exists. 
On the other hand, it is known that for every fixed $k$ there exists a constant $\varepsilon_k>0$ such that $k$-SAT can be solved in $O^*\left(\left(2-\varepsilon_k\right)^n\right)$ time.
A result of this type was first published by Monien and Speckenmeyer in 1985 \cite{monien1985solving}. 
A long list of improvements for the values of $\varepsilon_k$ were published since, including the celebrated 1998 PPSZ algorithm of Paturi, Pudl{\'a}k, Saks and Zane \cite{paturi2005improved} and the recent improvement over it by Hansen, Kaplan, Zamir and Zwick \cite{hansen2019faster}.

For coloring, on the other hand, the situation is less understood.
The trivial algorithm solving $k$-coloring by enumerating over all possible colorings takes $O^*(k^n)$ time.
Thus, it is not even immediately clear that computing the chromatic number of a graph can be done in $O^*(c^n)$ time for a constant $c$ independent of~$k$.
In 1976, Lawler \cite{lawler1976note} introduced the idea of using dynamic-programming to find the minimal number of independent sets covering the graph. 
The trivial implementation of this idea results in an $O^*(3^n)$ algorithm. 
More sophisticated bounds on the number of maximal independent sets in a graph and fast algorithms to enumerate over them (Moon and Moser \cite{moon1965cliques}, Paull and Unger \cite{paull1959minimizing}) resulted in an $O^*(2.4422^n)$ algorithm. 
This was improved several times (including Eppstein \cite{eppstein2001small} and Byskov \cite{byskov2004enumerating}), until finally an algorithm computing the chromatic number in $O^*(2^n)$ time was devised by Bj{\"o}rklund, Husfeldt and Koivisto in 2009 \cite{bjorklund2009set}. This settled an open problem of Woeginger \cite{woeginger2003exact}.
A relatively recent survey of Husfeldt \cite{husfeldt_2015} covers the progress on graph coloring algorithms.

For $k=3,4$, better algorithms are known for the $k$-coloring problem.
Schiermeyer \cite{schiermeyer1993deciding} showed that $3$-coloring can be solved in $O^*(1.415^n)$ time. 
Biegel and Eppstein \cite{beigel20053} gave algorithms solving $3$-coloring in $O^*(1.3289^n)$ time and $4$-coloring in $O^*(1.8072^n)$ time in 2005. 
Fomin, Gaspers and Saurabh \cite{fomin2007improved} have improved the running time of $4$-coloring to $O^*(1.7272^n)$ in 2007.
Unlike the situation in $k$-SAT, for every $k>4$ the best known running time for $k$-coloring is $O^*(2^n)$, the same as computing the chromatic number. 
Thus, a very fundamental question was left wide open.

\begin{open}\label{open5}
Can $5$-coloring be solved in $O^*\left(\left(2-\varepsilon \right)^n\right)$ time, for some $\varepsilon>0$?
\end{open}

More generally,

\begin{open}\label{openk}
Can $k$-coloring be solved in $O^*\left(\left(2-\varepsilon_k \right)^n\right)$ time, for some $\varepsilon_k>0$, for every $k$?
\end{open}

In our work, we answer Problem~\ref{open5} affirmatively, the answer extends to $6$-coloring as well. We also make steps towards settling Problem~\ref{openk}.

The main technical theorem of our paper follows.

\begin{definition}\label{defbounded}
For $0\leq \alpha \leq 1$ and $\Delta>0$ we say that a graph $G=\left(V(G),E(G)\right)$ is  \emph{\mbox{$(\alpha,\Delta)$-bounded}} if it contains at least $\alpha\cdot |V(G)|$ vertices of degree at most $\Delta$.
\end{definition}
\begin{theorem}\label{mainthm}
For every $\Delta,\alpha>0$ there exists $\varepsilon_{\Delta,\alpha}>0$ such that we can compute the chromatic number of \\ \mbox{$(\alpha,\Delta)$-bounded} graphs in $O\left(\left(2-\varepsilon_{\Delta,\alpha}\right)^n\right)$ time.
\end{theorem}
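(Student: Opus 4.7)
The plan is to reduce the chromatic-number computation to a list-coloring subproblem on a slightly smaller graph that retains the low-degree structure, and to solve that subproblem via an extension of the Bj\"orklund-Husfeldt-Koivisto framework tailored to bounded-degree subgraphs.

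First, let $L$ denote the set of vertices of degree at most $\Delta$, so that $|L|\geq\alpha n$. Because $G[L]$ has maximum degree at most $\Delta$, a greedy procedure yields an independent set $I\subseteq L$ of size at least $\gamma n$, where $\gamma := \alpha/(\Delta+1)$; this will be the structural hook the algorithm exploits. To decide whether $\chi(G)\leq k$, I would use the pigeonhole observation that in any proper $k$-coloring some color class $C$ satisfies $|C\cap I|\geq |I|/k$, so it suffices to enumerate every $J\subseteq I$ of size at least $\gamma n/k$ and, for each, test whether $G-J$ admits a proper list coloring over $[k]$ where vertices of $N(J)$ have list $\{2,\ldots,k\}$ and every other vertex has list $[k]$. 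A direct check shows that $\chi(G)\leq k$ iff some such $J$ yields a feasible list coloring.

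The key technical step is to solve the residual list-coloring problem in $(2-\delta)^{n-|J|}$ time for some $\delta=\delta_{\Delta,\alpha}>0$. Since $G-J$ retains all of $L\setminus I$ with their degrees only decreased, it is still $(\alpha',\Delta)$-bounded for $\alpha' \geq \alpha\Delta/(\Delta+1)$, so it suffices to prove a list-coloring analogue of Theorem~\ref{mainthm}. I would prove this analogue by extending the Bj\"orklund-Husfeldt-Koivisto inclusion-exclusion: letting $\iota$ denote the number of independent sets and $V_i := \{v : i \in L_v\}$, a list $k$-coloring of $H$ exists iff
\[
\sum_{X\subseteq V(H)}(-1)^{|V(H)|-|X|}\prod_{i=1}^{k}\iota\bigl(H[X\cap V_i]\bigr) > 0,
\]
and then trimming the M\"obius inversion in the spirit of Bj\"orklund-Husfeldt-Kaski-Koivisto so as to exploit the $\Omega(n)$ vertices of degree at most $\Delta$ on the low-degree side. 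The trimming should reduce the effective subset sum on the low-degree part from $2^{|L|}$ to $(2-\delta')^{|L|}$ for some $\delta'>0$, beating the naive bound.

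Combining, the running time of the chromatic-number algorithm is at most $\sum_{j\geq\gamma n/k}\binom{|I|}{j}(2-\delta)^{n-j} \leq 2^{\gamma n}(2-\delta)^{n-\gamma n/k}\cdot\mathrm{poly}(n)$, which is $(2-\varepsilon_{\Delta,\alpha})^n$ for a suitable $\varepsilon>0$, since the savings factor $(2-\delta)/2<1$ applied to a constant fraction of vertices dominates the $2^{\gamma n}$ overhead from enumerating $J$. The main obstacle is the list-coloring analogue itself: lists break the symmetry of the single $k$-th power $\iota(G[X])^k$ into the heterogeneous product $\prod_i \iota(H[X\cap V_i])$, and the trimming argument must be redesigned to extract savings from the bounded-degree subgraph despite this asymmetry---which is where the bulk of the technical work in the paper lies.
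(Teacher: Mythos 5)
Your outer reduction layer breaks the runtime before the hard part even begins. Enumerating all $J\subseteq I$ with $|J|\geq |I|/k$ and spending $(2-\delta)^{n-|J|}$ on each costs $\sum_{j\geq |I|/k}\binom{|I|}{j}(2-\delta)^{n-j}$, which up to subexponential factors is $(2-\delta)^{n}\bigl(\tfrac{3-\delta}{2-\delta}\bigr)^{|I|}\approx 2^{n}\,(1-\delta/2)^{n}\,(3/2)^{\gamma n}$ (the maximizing $j\approx |I|/3$ satisfies your size constraint for $k\geq 3$). This beats $2^n$ only if $\delta\gtrsim 0.8\,\gamma$, i.e., only if the per-vertex savings of the inner routine is a constant multiple of the fraction $\gamma=\alpha/(\Delta+1)$ of vertices being exploited. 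Your assertion that the savings "dominates the $2^{\gamma n}$ overhead" is backwards: the overhead is $2^{\Theta(\gamma n)}$ while the savings is $(1-\delta/2)^{\Theta(n)}$ with $\delta$ unspecified, and every trimming-type argument of this kind (including the one in this paper, whose removal-lemma constant is at most $(C+1)^{-\Delta}$ by Appendix~\ref{removallb}) produces a $\delta$ exponentially small in $\Delta$, hence far below $\gamma$. The pigeonhole layer is also pointless: if the list-coloring analogue you invoke existed for $(\alpha',\Delta)$-bounded graphs, you could apply it to $G$ itself with every list equal to $[k]$ and skip the enumeration entirely.

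The larger gap is that this list-coloring analogue --- beating $2^n$ for list coloring on graphs with a constant fraction of degree-$\leq\Delta$ vertices --- is essentially the theorem to be proved, and "trimming the M\"obius inversion in the spirit of Bj\"orklund--Husfeldt--Kaski--Koivisto" is a pointer, not an argument; the paper notes that no such improvement was previously known even for bounded-degree graphs. The paper's mechanism is also genuinely different from BHKK-style trimming. It extracts an independent set $S$ of low-degree vertices and counts covers of $G-S$ by $k$ independent sets that fail to use the full palette on each neighborhood $N(s)$, via a second inclusion--exclusion over subsets $S'\subseteq S$; the saving comes from the fact that the summand $\beta(V',S')$ vanishes unless $V'$ meets every $N(s)$ for $s\in S'$, so when the neighborhoods are pairwise disjoint only $2^{n-|S|}(2-2^{-\Delta})^{|S|}$ terms must be touched (Lemma~\ref{sumlemma}), computed by a restricted fast M\"obius transform. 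Overlapping neighborhoods are handled by the removal lemma (Theorem~\ref{removalthm}) together with brute-force enumeration of the colors of a small exceptional set $V_0$, balanced so that $k^{|V_0|}$ is absorbed. You would also need to address the dependence of $\delta$ on $k$ when iterating over $k$ to compute $\chi(G)$ (the paper does this by peeling low-degree vertices once $k>\Delta$). None of these ingredients appears in your sketch, so the proposal does not constitute a proof.
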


In other words, we can answer Problem~\ref{openk} affirmatively unless the graph has almost only vertices of super-constant degrees. This theorem generalizes a few previous results.
A similar statement for the restricted case of bounded degree graphs was obtained by Bj\"{o}rklund et al. in \cite{bjorklund2010trimmed}. 
Golovnev, Kulikov and Mihajilin \cite{golovnev2016families} used a variant of FFT to get an algorithmic improvement for computing the chromatic number of graphs with bounded average degree.
Prior to that, Cygan and Pilipzcuk \cite{cygan2015faster} obtained an improvement for the running time required for the Traveling Salesman Problem for graphs with bounded average degree.

It is important to stress that Theorem~\ref{mainthm} is much more general than the mentioned results.
In particular, $(\alpha,\Delta)$-bounded graphs may have~$\Omega(n^2)$ edges (and in turn average degree~$\Omega(n)$). The techniques used for graphs with bounded average degree do not extend to this case. 
Another important difference is that our algorithms extend to finding a coloring of the graph while the mentioned ones solve the decision problem, as later discussed in Section~\ref{decvssearch}.
Moreover, the generality of Theorem~\ref{mainthm} is crucial for our derivation of the reductions resulting in the improvements for~$5$ and~$6$ coloring.

In the List Coloring problem (defined formally in Section~\ref{prel}), we are given a graph and lists $C_v$ of colors for each vertex $v$, and are asked to find a valid coloring of the graph such that each vertex $v$ is colored by some color appearing in its list $C_v$. In the $k$-list-coloring problem each list $C_v$ is of size at most $k$.

List Coloring can also be solved in $O^*(2^n)$ time \cite{bjorklund2009set}, yet no improvements were known even for the bounded-degree case.
We extend Theorem~\ref{mainthm} to $k$-list-coloring, for any constant $k$, as follows. 
\begin{theorem}\label{mainthmlst}
For every $k,\Delta,\alpha>0$ there exists $\varepsilon_{k,\Delta,\alpha}>0$ such that we can solve $k$-list-coloring for \\ \mbox{$(\alpha,\Delta)$-bounded} graphs in $O\left(\left(2-\varepsilon_{k,\Delta,\alpha}\right)^n\right)$ time, regardless of the size of the universe of colors.
\end{theorem}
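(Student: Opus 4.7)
The plan is to follow the architecture of the proof of Theorem~\ref{mainthm} and adapt each step to the list-coloring setting. As a trivial preprocessing I would discard any color that appears in no list, ensuring $|U|\le kn$; the $(\alpha,\Delta)$-bounded hypothesis is preserved under this reduction.

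The counting identity I would work with is the list-coloring analogue of the Bj{\"o}rklund--Husfeldt--Koivisto inclusion--exclusion. For each color $c\in U$, set $V_c:=\{v\in V:c\in C_v\}$ and let $s_c(S)$ denote the number of independent sets contained in $V_c\cap S$. Then
\[
g(V)\;=\;\sum_{S\subseteq V}(-1)^{|V\setminus S|}\prod_{c\in U}s_c(S)
\]
counts the ordered $|U|$-tuples $(I_c)_{c\in U}$ of independent sets with $I_c\subseteq V_c$ whose union equals $V$; in particular $g(V)>0$ iff a valid list-coloring exists, since from any such tuple one colors each vertex $v$ by an arbitrary $c$ with $v\in I_c$ and properness follows from the independence of each $I_c$. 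Evaluating this sum naively costs $O^*(2^n)$.

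To beat the $2^n$ bound I would reuse the trimmed-M\"obius machinery that drives Theorem~\ref{mainthm}. Designate a set $L\subseteq V$ with $|L|\ge\alpha n$ of vertices of degree at most $\Delta$ and set $H:=V\setminus L$. Splitting the outer sum by the intersection $S\cap H$ leaves an inner alternating sum over $T\subseteq L$. The proof of Theorem~\ref{mainthm} shows that, because every vertex of $L$ has at most $\Delta$ neighbours, this inner sum admits strictly sub-exponential evaluation in $|L|$, either by local enumeration on $G[L\cup N(L)]$ or by compressing the M\"obius table on $L$. The list data do not affect which subsets are compatible, only the per-vertex weights, so the same reduction in the effective search space should carry over verbatim.

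The main obstacle is the factor $\prod_{c\in U}s_c(\cdot)$: unlike the chromatic case in which the analogous quantity is the clean $s(\cdot)^{\chi}$, each color here sees its own subgraph $G[V_c]$. My planned workaround is to group the vertices of $L$ by their list-type; since $|U|\le kn$, there are only $\binom{|U|}{k}=\operatorname{poly}(n)$ distinct $k$-element lists, so the per-color contributions over $L$ collapse into a table of polynomial size that can be combined as an outer product over list-types. This adds only a polynomial overhead, so the $(2-\varepsilon)^n$ bound inherited from Theorem~\ref{mainthm} is preserved, yielding a constant $\varepsilon_{k,\Delta,\alpha}>0$ whose dependence on $k$ is explicit, as the statement permits.
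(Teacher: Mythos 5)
Your counting identity $g(V)$ is correct and does give the decision version of list coloring in $O^*(2^n)$, but everything after that point is where the theorem actually lives, and your proposal leaves it to the assertion that the machinery of Theorem~\ref{mainthm} ``carries over verbatim.'' It does not, for two concrete reasons. First, the savings in Theorem~\ref{mainthm} do not come from ``splitting the outer sum by $S\cap H$ and compressing the M\"obius table on $L$''; they come from fixing an independent set $S$ of low-degree vertices, counting only covers of $V\setminus S$ that fail to use the full palette on each $N(s)$, applying the removal lemma to $\{N(s)\}_{s\in S}$ to make these neighborhoods disjoint outside a small set $V_0$, enumerating colorings of $V_0$, and trimming the M\"obius transform to the sets $B(S')$. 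In the list setting the extendability condition changes: a vertex $s\in S$ is colorable iff some color \emph{in its own list} $C_s$ misses $N(s)$, not merely some color of the universe. Your inclusion--exclusion over $S'\subseteq S$ would therefore have to be taken over the events ``every $c\in C_s$ has $I_c\cap N(s)\neq\emptyset$,'' and your proposal never says how this interacts with the per-color restrictions $I_c\subseteq V_c$ or how the analogue of Lemma~\ref{sumlemma} survives. Second, the enumeration over $V_0$ must cost $k^{|V_0|}$ rather than $|U|^{|V_0|}$ for the balancing choice of $C$ in the removal lemma to yield a constant $\rho(\Delta,C)>0$; since $|U|$ can be $\Theta(n)$, this is exactly the point where ``regardless of the size of the universe'' has to be earned, and it is absent from your argument. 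Meanwhile the obstacle you single out as the main one --- that the product $\prod_{c\in U}s_c(\cdot)$ has many distinct factors --- is not actually an obstacle: $|U|\le kn$ is polynomial, so computing one trimmed M\"obius table per color is absorbed by the $O^*$ notation, and the ``group by list-type'' device is neither needed nor clearly meaningful (the factors are indexed by colors, each with its own $V_c$, not by vertices).

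The paper's proof avoids all of this with a reduction you may find instructive: add the universe $U$ as $|U|$ new pre-colored vertices, join each $v\in V$ to every $u\notin C_v$, and observe that list-colorability of $G$ is equivalent to extendability of this pre-coloring to an ordinary $|U|$-coloring of the new graph $G'$. The list constraint ``the color of $s$ must lie in $C_s$'' then becomes automatic, because every color $c\notin C_s$ is realized by a pre-colored neighbor of $s$ in $G'$. One then invokes Theorem~\ref{fixedthm}, whose running time depends only on the number of \emph{uncolored} vertices (which is still at most $n$) and treats the pre-colored color-vertices through the sets $S_j$, and enumerates the colorings of $V_0$ at cost $k^{|V_0|}$ since each list has size at most $k$. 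If you want to salvage your direct approach, you would need to re-derive the analogues of Lemmas~\ref{meaningofFptotal}, \ref{betarestriction} and \ref{sumlemma} for the list-restricted tuples $(I_c)_{c\in U}$ with the per-vertex palette condition; as written, the proposal has a genuine gap at precisely the step that produces the $(2-\varepsilon)^n$ bound.
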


Using Theorem~\ref{mainthm} as a crucial component, we devise the following reductions and corollaries.
\begin{theorem}\label{reduct}
Given an algorithm solving $(k-1)$-list-coloring in time $O\left(\left(2-\varepsilon\right)^n\right)$ for some constant $\varepsilon>0$, we can construct an algorithm solving $k$-coloring in time $O\left(\left(2-\varepsilon'\right)^n\right)$ for some (other) constant $\varepsilon'>0$.
Furthermore, the reduction is deterministic.
\end{theorem}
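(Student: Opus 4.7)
The plan is a dichotomy on the degree sequence of the input graph $G$. Fix parameters $\Delta$ (large) and $\alpha$ (small), to be chosen at the end in terms of $\varepsilon$ and $k$. If $G$ is $(\alpha,\Delta)$-bounded, apply Theorem \ref{mainthm} directly to compute the chromatic number of $G$ in $(2-\varepsilon_{\alpha,\Delta})^n$ time, and in particular decide $k$-colorability. So we may assume $G$ is not $(\alpha,\Delta)$-bounded, i.e., the set $L$ of vertices of degree at most $\Delta$ satisfies $|L| < \alpha n$, and hence $H := V \setminus L$ has more than $(1-\alpha)n$ vertices, each of degree exceeding $\Delta$.

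In this regime I find a small dominating set. Since every vertex of $H$ has at least $\Delta+1$ neighbors in $G$, the standard greedy/probabilistic argument (iteratively pick the vertex covering the largest number of still-uncovered elements of $H$) yields, deterministically in polynomial time, a set $S_1 \subseteq V$ of size $O(n \log \Delta / \Delta)$ such that every vertex of $H$ either lies in $S_1$ or has a neighbor in $S_1$. Let $S := S_1 \cup L$, so that $|S| \leq \beta n$ with $\beta := \alpha + O(\log \Delta / \Delta)$, and $S$ dominates every vertex of $V$. Now enumerate all $k^{|S|}$ assignments $c : S \to \{1,\dots,k\}$, discarding any that violate an edge inside $S$. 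For each surviving $c$, build a list-coloring instance on $G$ with $C_v := \{c(v)\}$ for $v \in S$ and $C_v := \{1,\dots,k\} \setminus \{c(u) : u \in N(v) \cap S\}$ for $v \notin S$. Since $S$ dominates $V$, every vertex outside $S$ has a neighbor in $S$, so $|C_v| \leq k-1$; on $S$ the lists are singletons. Thus each instance is a $(k-1)$-list-coloring instance on $n$ vertices, which we pass to the assumed $(2-\varepsilon)^n$-time algorithm.

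Correctness follows because every proper $k$-coloring of $G$ restricts to some $c$ on $S$ and extends to a valid list-coloring of the derived instance; conversely any solution combines with $c$ to give a proper $k$-coloring of $G$. The total running time is $k^{|S|} \cdot (2-\varepsilon)^n \leq (k^{\beta}(2-\varepsilon))^n$, which is $(2-\varepsilon')^n$ for some $\varepsilon' > 0$ as soon as $k^\beta < 2/(2-\varepsilon)$. This is achieved by first taking $\alpha$ small and then $\Delta$ large enough (both depending only on $k$ and $\varepsilon$) to force $\beta$ below $\log_k(2/(2-\varepsilon))$. All subroutines (Theorem \ref{mainthm}, the greedy dominating set, enumeration, and the assumed list-coloring algorithm) are deterministic, so the reduction is too. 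The only nontrivial ingredient is the dominating-set bound in the high-degree case; once that is in hand, the rest is parameter tuning, and I expect the main subtlety to be verifying that $\beta$ can be pushed below the threshold $\log_k(2/(2-\varepsilon))$ while keeping $\Delta,\alpha$ independent of $n$.
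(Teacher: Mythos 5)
Your proposal is correct and follows essentially the same route as the paper: the same dichotomy on $(\alpha,\Delta)$-boundedness, a small dominating set in the high-degree case, and enumeration of its $k^{|S|}$ colorings followed by a $(k-1)$-list-coloring call. The only cosmetic differences are that the paper obtains the dominating set by a derandomized probabilistic argument rather than greedy set cover, and passes the instance on $G[V\setminus S]$ (giving the slightly better bound $k^{|S|}(2-\varepsilon)^{n-|S|}$) rather than keeping $S$ with singleton lists; neither affects correctness.
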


\begin{theorem}\label{reduct2}
Given an algorithm solving $(k-2)$-list-coloring in time $O\left(\left(2-\varepsilon\right)^n\right)$ for some constant $\varepsilon>0$, we can construct an algorithm solving $k$-coloring with high probability in time $O\left(\left(2-\varepsilon'\right)^n\right)$ for some (other) constant $\varepsilon'>0$.
\end{theorem}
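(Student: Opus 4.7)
The plan is to mirror the deterministic reduction of Theorem~\ref{reduct} but to use randomization in order to excise two forbidden colors from each remaining vertex's list instead of only one. As in that result, split the argument on whether $G$ is $(\alpha,\Delta)$-bounded for constants $\alpha,\Delta$ to be chosen in terms of $\varepsilon$ and $k$. If it is, Theorem~\ref{mainthm} immediately computes the chromatic number in $O((2-\delta_1)^n)$ time and decides $k$-colorability. Otherwise, at least $(1-\alpha)n$ vertices of $G$ have degree exceeding $\Delta$, and we invoke the following randomized reduction.

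In the dense case, include each vertex in a set $S$ independently with probability $p$, where $p$ is a small constant chosen so that $k^{p}(2-\varepsilon)<2$; by Chernoff, $|S|\leq 2pn$ with high probability. For each of the at most $k^{|S|}$ proper $k$-colorings $\phi_S$ of $G[S]$, every $v\notin S$ receives the list $L_v:=[k]\setminus \phi_S(N(v)\cap S)$. The structural observation driving the reduction is that $|L_v|\leq k-2$ whenever $G[N(v)\cap S]$ contains at least one edge, since the two endpoints of such an edge must receive distinct colors in any proper coloring of $G[S]$. Set $B:=\{v\notin S : G[N(v)\cap S] \text{ is edgeless}\}$ and additionally branch over all $k^{|B|}$ color assignments to the vertices of $B$; for each of the $k^{|S|+|B|}$ resulting partial colorings invoke the given $(k-2)$-list-coloring algorithm on $V\setminus(S\cup B)$ with the induced lists. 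The running time is bounded by $k^{|S|+|B|}(2-\varepsilon)^{n-|S|-|B|}$, which is $O((2-\varepsilon')^n)$ provided $|S|+|B|$ is a sufficiently small fraction of $n$.

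The main obstacle is bounding $|B|$ in expectation. Writing $t_v$ for the number of triangles through $v$, we have $\Pr_S[v\in B]\leq (1-p^2)^{t_v}\leq e^{-p^2 t_v}$, so triangle-rich vertices land in $B$ only with small probability; this is precisely where the extra factor of $2$ (relative to the dominating-set argument of Theorem~\ref{reduct}) comes from, and is the reason randomization is required. The delicate subtlety is handling the high-degree vertices whose open neighborhoods are nearly independent sets, since for those the triangle-count $t_v$ is too small to make the bound useful. The plan is to tune $\alpha$ and $\Delta$ so that the combined count of low-degree vertices and locally triangle-poor high-degree vertices is at most some small $\gamma n$, then simply absorb them into $B$ at multiplicative cost $k^{\gamma n}$, which is still negligible compared to $(2-\varepsilon)^n$ when $\gamma$ is chosen smaller than $(\log_2(2/(2-\varepsilon)))/\log_2 k$. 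Controlling the count of locally triangle-poor vertices is the key technical challenge and is likely handled via a structural/Ramsey-type argument applied inside the subgraph induced on high-degree vertices, together with a secondary random-sampling step if needed. Finally, amplifying the success probability of the initial random sampling to $1-o(1)$ by polynomially many independent repetitions yields the claimed high-probability guarantee.
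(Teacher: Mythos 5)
Your first half matches the paper: the same split on $(\alpha,\Delta)$-boundedness, Theorem~\ref{mainthm} in the bounded case, and random sampling of a set $S$ followed by enumeration of its colorings in the dense case. But the mechanism by which two colors get removed from a vertex's list is different from the paper's, and the difference is exactly where your argument breaks. You remove two colors from $L_v$ only when the sampled set $N(v)\cap S$ contains an \emph{edge}, and you bound $\Pr[v\in B]$ by $(1-p^2)^{t_v}$ in terms of the triangle count $t_v$. This gives nothing on dense triangle-free graphs (e.g., blow-ups of triangle-free graphs, which can have minimum degree $\Theta(n)$ and every $t_v=0$): there $B=V\setminus S$ almost surely, the branching cost is $k^{\Theta(n)}$, and no tuning of $\alpha$ and $\Delta$ can make the set of ``locally triangle-poor high-degree vertices'' a small fraction of $V$, since it can be all of $V$. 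The appeal to an unspecified ``structural/Ramsey-type argument'' is precisely the missing proof, and it is not clear such an argument exists: even knowing $N(v)$ is an independent set only certifies that \emph{one} color can be excluded (an independent neighborhood may be monochromatic), which is the $(k-1)$-list reduction of Theorem~\ref{reduct}, not the $(k-2)$ reduction you need.

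The paper closes this gap with a genuinely different idea. It fixes an (unknown) target coloring $c$ and calls $v$ \emph{good} if two distinct color classes each occupy more than $\Delta'$ of $N(v)$; for good vertices the random sample hits both classes and two colors are excluded, with no reference to edges inside $N(v)$. The problematic vertices are then the \emph{bad} ones, whose neighborhood is nearly monochromatic under $c$, and these are handled by a separate mechanism: randomly pick $r$ neighbors of a random vertex and \emph{contract} them, which with constant probability preserves $k$-colorability (Lemma~\ref{red2largeb}) and shrinks the instance by $r-1$ vertices. The whole procedure is wrapped in a restart scheme ($\mathcal{A}_3$) that succeeds with probability about $(2-\varepsilon)^{-n}$ per run and is repeated $n(2-\varepsilon)^{n+1}$ times. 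Your proposal contains neither the coloring-relative notion of goodness nor any substitute for the contraction step, so the dense triangle-poor case is genuinely unresolved.
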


From which we finally conclude the following, answering Problem~\ref{open5} affirmatively.
\begin{theorem}\label{5col}
$5$-coloring can be solved in time $O\left(\left(2-\varepsilon\right)^n\right)$ for some constant $\varepsilon>0$.
\end{theorem}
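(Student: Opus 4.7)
The plan is to combine Theorem~\ref{reduct} with Theorem~\ref{mainthmlst} through a density dichotomy. Applying Theorem~\ref{reduct} with $k=5$, it suffices to solve $4$-list-coloring in $O\left(\left(2-\varepsilon\right)^n\right)$ time on an arbitrary graph, since the reduction is deterministic and black-box.

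To do so on an input $(G, \{C_v\}_{v\in V(G)})$, I would fix constants $\alpha,\Delta > 0$ (to be tuned at the end) and case-split on $G$. If $G$ is $(\alpha,\Delta)$-bounded, Theorem~\ref{mainthmlst} with $k=4$ solves the instance directly in $O\left(\left(2-\varepsilon_{4,\Delta,\alpha}\right)^n\right)$ time. Otherwise, at least $(1-\alpha)n$ vertices of $G$ are \emph{heavy}, meaning they have degree exceeding $\Delta$.

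For the dense case I would design a branching routine: repeatedly pick a heavy vertex $v$, branch on $v$'s color (at most $|C_v|\leq 4$ branches), and in each branch delete $v$ and remove its chosen color from every neighbor's list. A single branch shrinks the lists of more than $\Delta$ neighbors, so after a carefully chosen number of branching steps a constant fraction of the remaining vertices have lists of size at most $3$. One can then either iterate into a smaller-list regime or invoke an auxiliary dynamic program tailored to ``mostly $3$-list'' instances.

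The main obstacle is closing the dense-case recurrence to $O\left(\left(2-\varepsilon\right)^n\right)$ rather than the trivial $4^n$. The required savings must come from aggressive accounting of the $>\Delta$ list shrinkages per branching step: with a suitable potential function combining $n$ with the aggregate list size, each branch should drop the potential by strictly more than $\log_2 4$. Finally $\alpha$ and $\Delta$ are chosen so that the exponents obtained in the sparse and dense cases coincide, which is a routine balancing calculation that determines the final constant $\varepsilon>0$.
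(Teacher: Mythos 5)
The first step is right and matches the paper: apply Theorem~\ref{reduct} with $k=5$, so it suffices to solve $4$-list-coloring on arbitrary graphs in $O\left(\left(2-\varepsilon\right)^n\right)$ time. But at that point the paper simply invokes a known result: Beigel and Eppstein solve any $(4,2)$-CSP, hence $4$-list-coloring, in $O(1.81^n)$ time, and the proof is finished. You instead try to manufacture a $4$-list-coloring algorithm from scratch via a sparse/dense dichotomy, and your dense case has a genuine gap.

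The gap is the claim that ``a single branch shrinks the lists of more than $\Delta$ neighbors.'' In list coloring the universe of colors is arbitrary, and a neighbor's list shrinks only if the color you assign to the heavy vertex $v$ actually belongs to that neighbor's list. Nothing forces this: $v$ may have list $\{a,b,c,d\}$ while all of its neighbors have lists disjoint from it, in which case every branch removes one vertex and shrinks nothing, and the recurrence degenerates to $4$ branches per vertex eliminated --- far worse than $2^n$. Even in the instances produced by Theorem~\ref{reduct} (lists that are $4$-subsets of $[5]$), after a few rounds of branching the surviving lists diverge and the same failure occurs. Your potential-function accounting therefore never gets off the ground, and the ``auxiliary dynamic program for mostly $3$-list instances'' that you fall back on is itself an unproved $(2-\varepsilon)^n$-type algorithm of exactly the kind the argument is supposed to deliver (it does exist --- again via Beigel--Eppstein-style CSP branching --- but you neither construct nor cite it). The fix is simply to replace your entire dense-case machinery, and indeed the whole dichotomy, with the citation to the $O(1.81^n)$ algorithm for $(4,2)$-CSP; the sparse/dense split you propose is already performed \emph{inside} the proof of Theorem~\ref{reduct}, so repeating it in the list-coloring subroutine buys nothing.
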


\begin{theorem}\label{6col}
$6$-coloring can be solved with high probability in time $O\left(\left(2-\varepsilon\right)^n\right)$ for some constant $\varepsilon>0$.
\end{theorem}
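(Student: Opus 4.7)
The plan is to combine Theorem~\ref{reduct2} with Theorem~\ref{mainthmlst}. Applying Theorem~\ref{reduct2} with $k=6$ reduces $6$-coloring to $4$-list-coloring via a randomized reduction that preserves a $(2-\ast)^n$ speedup, so it suffices to supply a $4$-list-coloring subroutine running in time $O((2-\varepsilon)^n)$ on the instances that the reduction actually feeds us.

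For the list-coloring step I would use the $k=4$ case of Theorem~\ref{mainthmlst}, which beats $2^n$ on $(\alpha,\Delta)$-bounded graphs for any constants $\alpha,\Delta>0$. Two cases cover all inputs. If the input graph is itself $(\alpha,\Delta)$-bounded for appropriate constants, we bypass the reduction entirely and compute the chromatic number directly with Theorem~\ref{mainthm}, which decides $6$-colorability. Otherwise almost all vertices have degree exceeding $\Delta$, and the branching inside Theorem~\ref{reduct2} — on the colors assigned to some high-degree vertices — removes enough of them that each residual $4$-list-coloring subinstance contains a linear fraction of low-degree vertices, and is therefore $(\alpha,\Delta)$-bounded and handled by Theorem~\ref{mainthmlst}.

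Chaining the two $O((2-\ast)^n)$ bounds yields the claimed $O((2-\varepsilon)^n)$ running time, succeeding with high probability (the randomization is inherited from Theorem~\ref{reduct2}, since Theorem~\ref{mainthmlst} is deterministic). The main obstacle is constant calibration: the branching parameters inside Theorem~\ref{reduct2} determine $\alpha$ and $\Delta$, which in turn determine the improvement $\varepsilon_{4,\Delta,\alpha}$ available from Theorem~\ref{mainthmlst}, and together with the reduction's overhead these must leave strictly positive slack below $2$ in the final exponent. Because each ingredient already beats $2^n$ on its own subproblem, such a balance can be struck, and no new algorithmic idea is required beyond Theorems~\ref{reduct2} and~\ref{mainthmlst}.
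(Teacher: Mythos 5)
Your proposal has a genuine gap in the list-coloring step. Theorem~\ref{reduct2} reduces $6$-coloring to $4$-list-coloring on \emph{arbitrary} induced subgraphs of the high-degree part of the input, and the claim that these residual instances are $(\alpha,\Delta)$-bounded is false. The reduction only branches on a sublinear-in-degree (in fact, small constant-fraction) set of vertices $R_0\cup R$; removing a small linear fraction of vertices from a graph in which almost every vertex has superconstant degree leaves a graph in which almost every vertex still has superconstant degree. Concretely, take a balanced complete $6$-partite graph: it is $6$-colorable, every vertex has degree $\tfrac{5}{6}n$, and every induced subgraph on a constant fraction of the vertices again has all degrees linear in $n$, hence is not $(\alpha,\Delta)$-bounded for any constants. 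Theorem~\ref{mainthmlst} therefore gives you nothing on exactly the instances the reduction produces, and your two-case analysis does not cover all inputs.

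The missing ingredient is an \emph{unconditional} $O\left(\left(2-\varepsilon\right)^n\right)$-time algorithm for $4$-list-coloring, valid for all graphs. The paper supplies this from outside its own machinery: Beigel and Eppstein solve $(4,2)$-CSP, and hence $4$-list-coloring, in $O(1.81^n)$ time, and plugging this into Theorem~\ref{reduct2} with $k=6$ immediately yields Theorem~\ref{6col}. (This is also why the paper stops at $k=6$: for $k\geq 7$ one would need a $(2-\varepsilon)^n$-time algorithm for $5$-list-coloring on general graphs, which is not known.) Your instinct to reach for Theorem~\ref{mainthmlst} conflates the role of the $(\alpha,\Delta)$-bounded machinery, which is already used \emph{inside} the proof of Theorem~\ref{reduct2} to dispose of the bounded case, with the role of the black-box list-coloring solver that must handle the dense case.
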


We note that our $5$-coloring algorithm is deterministic, while our $6$-coloring algorithm is randomized with an exponentially small one-sided error probability.

As part of our work, we develop a new removal lemma for small subsets. This could be of independent interest. Very roughly, it states that every collection of small sets must have a large sub-collection that can be made pairwise-disjoint by the removal of a small subset of the universe. The exact statement follows.

\begin{theorem}\label{removalthm}
Let $\mathcal{F}$ be a collection of subsets of a universe $U$ such that every set $F\in \mathcal{F}$ is of size $|F|\leq \Delta$. Let $C>0$ be any constant. Then, there exist subsets $\mathcal{F}'\subseteq \mathcal{F}$ and $U'\subseteq U$, such that
\begin{itemize}
    \item $|\mathcal{F}'| > \rho(\Delta,C)\cdot |\mathcal{F}| + C\cdot |U'|$, where $\rho(\Delta,C)>0$ depends only on $\Delta, C$.
    \item The sets in $\mathcal{F}'$ are disjoint when restricted to $U\setminus U'$, i.e., for every $F_1,F_2 \in \mathcal{F}'$ we have $F_1\cap F_2 \subseteq U'$.
\end{itemize}
\end{theorem}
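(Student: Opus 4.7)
The plan is to induct on $\Delta$. For the base case $\Delta=1$, every $F\in\mathcal{F}$ is a singleton $\{u\}$, so the family decomposes along the universe: for each $u$ with multiplicity $f_u=|\{F : F=\{u\}\}|$ we compare two extremal strategies. Either include a single representative of $\{u\}$ in $\mathcal{F}'$ (contributing $1$ to $|\mathcal{F}'|$ with no cost to $U'$), or include all $f_u$ copies by placing $u$ in $U'$ (contributing $f_u$ at cost $C$). Taking the better option per element gives $|\mathcal{F}'| - C|U'| \ge \sum_u \max(1, f_u - C)$, and the elementary inequality $\max(1, f_u - C) \ge f_u/(C+1)$ summed over $u$ yields $|\mathcal{F}'| - C|U'| \ge |\mathcal{F}|/(C+1)$, so $\rho(1,C) = 1/(C+1)$ suffices.

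For the inductive step, assume $\rho(\Delta-1,C')>0$ is available for every $C'>0$. Fix an ordering of $U$ and, for each $F\in\mathcal{F}$, denote by $u_F\in F$ the first element of $F$ in this ordering. Apply the $\Delta=1$ case to the singleton family $\{\{u_F\}:F\in\mathcal{F}\}$ with the amplified constant $C':=C/\rho(\Delta-1,C)$ to obtain $\mathcal{F}_1\subseteq\mathcal{F}$ and $U'_1\subseteq U$ with $|\mathcal{F}_1|-C'|U'_1| \ge |\mathcal{F}|/(C'+1)$. Now apply the inductive hypothesis for $\Delta-1$ (with constant $C$) to the reduced family $\{F\setminus\{u_F\}:F\in\mathcal{F}_1\}$ of sets of size $\le\Delta-1$ to obtain $\mathcal{F}'\subseteq\mathcal{F}_1$ and $U'_2$ with $|\mathcal{F}'|-C|U'_2| \ge \rho(\Delta-1,C)\cdot|\mathcal{F}_1|$. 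Set $U' := U'_1\cup U'_2$. Chaining both bounds — substituting the first into the second and using the choice of $C'$ to exactly cancel the coefficient of $|U'_1|$ — yields
\[
|\mathcal{F}'| - C|U'| \;\ge\; \frac{\rho(\Delta-1,C)^2}{C+\rho(\Delta-1,C)}\cdot|\mathcal{F}|,
\]
so the inductive constant $\rho(\Delta,C) = \rho(\Delta-1,C)^{2}/(C+\rho(\Delta-1,C))$ remains strictly positive (though it shrinks rapidly in $\Delta$).

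The step I expect to be the main obstacle is verifying that $\mathcal{F}'$ really is disjoint outside $U'$. A common element $v\in F_1\cap F_2$ of two selected sets falls into one of three cases: (i) $v=u_{F_1}=u_{F_2}$, which the $\Delta=1$ guarantee places in $U'_1$ because $\{v\}$ occurs at least twice in the singleton family; (ii) $v\ne u_{F_1}$ and $v\ne u_{F_2}$, which the $(\Delta-1)$-level disjointness places in $U'_2$ through the two reduced sets; and the troublesome case (iii) $v=u_{F_1}\ne u_{F_2}$, where $v$ occurs as a non-distinguished element of $F_2$ and is caught by neither $U'_1$ nor $U'_2$. Resolving case (iii) — either by choosing the ordering (for instance prioritizing low-frequency elements, or picking $u_F$ to be a freq-$1$ element whenever one exists, and peeling such $F$ off first) so that it cannot arise for the filtered $\mathcal{F}'$, or by augmenting $U'$ with the few offending elements and amortizing their cost against the slack built into $C'$ — is the delicate step that pins down the exact recurrence for $\rho(\Delta,C)$.
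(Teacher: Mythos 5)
Your base case is fine, and the quantitative bookkeeping in the inductive step (choosing $C'=C/\rho(\Delta-1,C)$ so the $|U'_1|$ terms chain correctly, yielding $\rho(\Delta,C)=\rho(\Delta-1,C)^2/(C+\rho(\Delta-1,C))$) checks out. But the proof does not establish the disjointness of $\mathcal{F}'$ outside $U'$, which is the entire content of the theorem, and the gap is exactly your case (iii): an element $v$ with $v=u_{F_1}\neq u_{F_2}$ and $v\in F_2$ is invisible to both applications. The singleton-family step only controls collisions among \emph{distinguished} positions (the multiset $\{\{u_F\}\}$ knows nothing about $v$'s occurrences elsewhere), and the $(\Delta-1)$-step sees $F_2\setminus\{u_{F_2}\}\ni v$ but not $F_1\setminus\{u_{F_1}\}\not\ni v$, so $v$ lands in neither $U'_1$ nor $U'_2$. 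Neither of your two proposed repairs clearly closes this. Choosing $u_F$ to be a lowest-frequency element of $F$ does not prevent the configuration: take $F_1=\{v,w\}$, $F_2=\{v,x\}$ with $\deg(x)=1<\deg(v)=2<\deg(w)$; then $u_{F_1}=v$, $u_{F_2}=x$, and case (iii) occurs. Augmenting $U'$ with all offending elements can cost $\Theta(|\mathcal{F}'|)$ additions (every selected set's distinguished element may reappear as a non-distinguished element of another selected set), each charged $C$, which can wipe out the gain entirely; the ``slack'' in $C'$ lives in the singleton step and does not pay for occurrences of $v$ in non-distinguished positions. What the argument fundamentally needs to control is the \emph{total} degree of $v$ (its number of occurrences in any position), not its multiplicity as a distinguished element.

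For contrast, the paper avoids induction on $\Delta$ altogether and works directly with degrees: for a threshold $d$, put into $U'$ all universe elements of degree exceeding $d$, then greedily extract a genuinely pairwise-disjoint subfamily of size at least $|\mathcal{F}|/(\Delta d)$ from what remains. A single threshold is not good enough (there can be $\Delta|\mathcal{F}|/d$ high-degree elements, more than $|\mathcal{F}|/(\Delta d)$), but averaging the resulting bound over $d=1,\dots,D$ produces a harmonic sum $\frac{|\mathcal{F}|}{\Delta}H_D$ against a cost of at most $C\Delta|\mathcal{F}|$, and taking $D=\lfloor e^{1+C\Delta^2}\rfloor$ makes some threshold work. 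If you want to salvage your inductive scheme, you would need the first-stage selection to guarantee that the distinguished element of each surviving set either lies in $U'$ or appears in \emph{no other} surviving set at all; as written, that guarantee is absent and the theorem's second bullet is unproven.
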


In Appendix~\ref{removallb} we present an upper bound 
for the function $\rho$ appearing in Theorem~\ref{removalthm}.
This upper bound implies that the constant $\varepsilon$ we can obtain using our technique must be very small.

\subsection{Organization of the Paper}
The rest of the paper is organized as follows.
In Section~\ref{prel} we go over the preliminary tools that we use in the paper. 
In Section~\ref{prevalg} we further elaborate on the $O^*(2^n)$ algorithm of Bj\"{o}rklund, Husfeldt and Koivisto for computing the chromatic number of a graph \cite{bjorklund2009set}.

The main algorithmic contribution of the paper appears in Section~\ref{fasteralg}, in which we prove Theorem~\ref{mainthm}.
The section is partitioned into two main parts. In Section~\ref{maxdegsub} we present our ideas in a simpler manner and get a result limited to bounded degree graphs. Then, in Section~\ref{subsecmain}, which is more technically involved, we complete the proof of Theorem~\ref{mainthm}.
In Section~\ref{genlstcol} we extend the result to the proof of Theorem~\ref{mainthmlst}.

As part of Section~\ref{subsecmain}, we use Theorem~\ref{removalthm}, a combinatorial result independent of the algorithmic tools of Section~\ref{fasteralg}. 
The proof of Theorem~\ref{removalthm} appears in Section~\ref{sec:removal}.

In Section~\ref{sec:red} we use Theorem~\ref{mainthm} as a main ingredient in a reduction from $k$-coloring to $(k-1)$-list-coloring. In this section, we prove Theorems~\ref{reduct} and~\ref{5col}.
In Section~\ref{sec:red2} we refine the ideas used in Section~\ref{sec:red} and construct a reduction from $k$-coloring to $(k-2)$-list-coloring. In this section, we prove Theorems~\ref{reduct2} and~\ref{6col}.

We finally conclude the paper and present a few open problems in Section~\ref{conclusions}.

\section{Preliminaries}\label{prel}

The terminology used throughout the paper is standard.
For a graph $G$ we denote by $V(G)$ and $E(G)$ its vertex-set and edge-set, respectively.
Throughout the paper, $n$ is used to denote $|V(G)|$.
For a subset $V'\subseteq V(G)$ we denote by $G[V']$ the sub-graph of $G$ induced by $V'$.
For $v\in V$ we denote by $\deg(v)$ the degree of $v$ in $G$, by $N(v)$ the set of neighbours of $v$, and by $N[v]:=N(V)\cup\{v\}$.

For $0\leq \alpha \leq 1$ and $\Delta>0$ we say that a graph $G=\left(V(G),E(G)\right)$ is  \emph{\mbox{$(\alpha,\Delta)$-bounded}} if it contains at least $\alpha\cdot |V(G)|$ vertices of degree at most $\Delta$. 
Note that if $\alpha=1$ this definition coincides with the standard definition of a bounded degree graph.

In the \emph{$k$-coloring problem}, we are given a graph $G$ and need to decide whether there exists a $k$-coloring $c:V(G)\rightarrow [k]$ of $G$, such that for every $(u,v)\in E(G)$ we have $c(u)\neq c(v)$.
If a graph has a $k$-coloring, we say that it is $k$-colorable.
In the \emph{chromatic number problem}, we are given a graph $G$ and need to compute $\chi(G)$, the minimal integer $k$ for which $G$ is $k$-colorable.

In the \emph{$k$-list-coloring problem}, we are given a graph $G$ and a set $C_v\subseteq U$ of size $|C_v|\leq k$ for every $v\in V(G)$, where $U$ is some arbitrary universe. We need to decide whether there exists a coloring $c:V(G)\rightarrow U$ such that for every $v\in V(G)$ we have $c(v)\in C_v$ and for every $(u,v)\in E(G)$ we have $c(u)\neq c(v)$.

In a general $(a,b)$-CSP (Constraint Satisfaction Problem, see \cite{kumar1992algorithms} or \cite{schoning1999probabilistic} for a complete definition and discussions) we are given a list of \emph{constraints}\footnote{A general constraint on a set $x_1,\ldots,x_b$ of $a$-ary variables is a subset $T$ of the $a^b$ possible assignments in $\{x_1,\ldots x_b\}\rightarrow[a]$. The constraint is satisfied by an assignment $c$, possibly on more variables, if $\restr{c}{\{x_1,\ldots x_b\}} \in T$. }
on the values of subsets of size $b$ of $n$ $a$-ary variables, and need to decide whether there exists an assignment of values to the variables for which all constraints are satisfied.
$k$-coloring and $k$-list-coloring are examples of $(k,2)$-CSP problems. $k$-SAT is an example of a $(2,k)$-CSP problem.

\subsection{Inverse M\"{o}bius Transform}\label{invmob}
Let $U$ be an $n$-element set.
The \emph{Inverse M\"{o}bius} transform (sometimes also called the \emph{Zeta} transform) \cite{rota} maps a function $f:P(U)\rightarrow \mathbb{R}$ from the power-set of $U$ into another function $\hat{f}:P(U)\rightarrow \mathbb{R}$ defined as
\[
\hat{f}(X) = \sum_{Y\subseteq X} f(Y).
\]
Naively, $\hat{f}(X)$ is computed using $2^{|X|}$ additions. Thus, we can compute all values of $\hat{f}$ in a straightforward manner with $O(3^n)$ operations.
Yates' method from 1937 (\cite{knuth1997seminumerical}, \cite{yates1978design}) improves on the above and computes all values of $\hat{f}$ using just $O(n2^n)$ operations.
The resulting algorithm is usually called \emph{the fast m\"{o}bius transform} or \emph{the fast zeta transform} (\cite{bjorklund2009set}, \cite{kennes1992computational}). 
The authors of \cite{bjorklund2010trimmed} and \cite{bjorklund2009set} use the fast Inverse M\"{o}bius Transform to devise algorithms for combinatorial optimization problems such as computing the chromatic and the domatic numbers of a graph. The algorithm of \cite{bjorklund2009set} is summarized in Section~\ref{prevalg}. 

A description of Yates' method follows.

\begin{lemma}
The Inverse M\"{o}bius Tranform $\hat{f}$ for some function $f:P(U)\rightarrow \mathbb{R}$ can be computed in $O(n2^n)$ time, where $n:=|U|$.
\end{lemma}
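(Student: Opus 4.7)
The plan is to compute $\hat{f}$ by a dynamic-programming sweep that ``sums in'' one element of $U$ at a time, maintaining after pass $i$ a partial-sum table that already incorporates all subset choices among the first $i$ elements but leaves the remaining $n-i$ elements fixed to match $X$.

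Fix an arbitrary ordering $u_1,\ldots,u_n$ of $U$. Define a sequence of functions $f_0,f_1,\ldots,f_n\colon P(U)\to\mathbb{R}$ by $f_0:=f$ and, for $i\ge 1$,
\[
f_i(X) \;=\; \begin{cases} f_{i-1}(X) & \text{if } u_i\notin X,\\ f_{i-1}(X) + f_{i-1}(X\setminus\{u_i\}) & \text{if } u_i\in X. \end{cases}
\]
I would prove by induction on $i$ the invariant
\[
f_i(X) \;=\; \sum_{\substack{Y\subseteq X \\ Y\cap\{u_{i+1},\ldots,u_n\}=X\cap\{u_{i+1},\ldots,u_n\}}} f(Y),
\]
i.e.\ the sum is over all $Y$ that agree with $X$ outside of $\{u_1,\ldots,u_i\}$ and can be anything below $X$ on $\{u_1,\ldots,u_i\}$. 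The base case $i=0$ is immediate since the only such $Y$ is $X$ itself. For the inductive step, if $u_i\notin X$, the two sides don't change from $i-1$ to $i$ (since the constraint involving $u_i$ forces $u_i\notin Y$, matching the $i{-}1$ version). If $u_i\in X$, the sum splits according to whether $u_i\in Y$ or not; the first case equals $f_{i-1}(X)$ and the second equals $f_{i-1}(X\setminus\{u_i\})$, and their sum matches the recurrence. Taking $i=n$ collapses the side condition entirely and yields $f_n(X)=\sum_{Y\subseteq X}f(Y)=\hat f(X)$.

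For the running-time bound, each pass from $f_{i-1}$ to $f_i$ can be done in place in an array indexed by subsets: for every $X$ with $u_i\in X$ perform one addition, and leave the other entries untouched. This is at most $2^n$ arithmetic operations per pass and $n$ passes overall, for a total of $O(n\cdot 2^n)$ operations. I would briefly note that the in-place update is safe because, when processing $X$ with $u_i\in X$, the value $f_{i-1}(X\setminus\{u_i\})$ being read has not yet been overwritten in this pass (its update only occurs when processing a set containing $u_i$).

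The only mildly delicate point is stating the invariant in the right form so that both branches of the recurrence are natural; the rest is a routine induction together with a bookkeeping count of operations, so I would not expect any real obstacle.
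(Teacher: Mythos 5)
Your proposal is correct and matches the paper's proof essentially verbatim: the same iteration $f_0=f$, $f_i(X)=f_{i-1}(X)+f_{i-1}(X\setminus\{u_i\})$ for $u_i\in X$, and the same invariant that $f_i(X)$ sums $f$ over all $Y\subseteq X$ agreeing with $X$ on $\{u_{i+1},\ldots,u_n\}$. Your additional remarks on the inductive split and the safety of the in-place update are correct elaborations of details the paper leaves implicit.
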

\begin{proof}
Denote by $U = \{u_1,\ldots,u_n\}$ some enumeration of $U$'s elements.
Denote by $f_0:=f$.
We preform $n$ iterations for $i=1,\ldots, n$, in which we compute all values of the function $f_i : P(U)\rightarrow \mathbb{R}$  defined using $f_{i-1}$ as follows.
\begin{equation*}
  f_i(X) =
    \begin{cases}
      f_{i-1}(X) + f_{i-1}(X\setminus \{u_i\}) & \text{if $u_i \in X$}\\
      f_{i-1}(X) & \text{otherwise}
    \end{cases}       
\end{equation*}

Namely, in the $i$-th iteration we add the values the function gets in the sub-cube defined by $u_i=0$ to the corresponding values in the sub-cube defined by $u_i=1$.

A simple induction on $i$ shows that $f_i(X) = \sum_{Y\in S_i(X)} f(Y)$ where $S_i(X)$ is the set of all subsets $Y\subseteq X$ such that
\[
\{u_j \in Y \; | \; j>i \} = \{u_j \in X \; | \; j>i\}
\]
In particular, by the end of the algorithm $f_n=\hat{f}$.
\end{proof}

\subsection{Decision versus Search}\label{decvssearch}
The $k$-coloring problem can be stated in two natural ways. 
In the first, given a graph $G$ decide whether it can be colored using $k$ colors. 
The the second, given a graph $G$ return a $k$-coloring for it if one exists, or say that no such coloring exists.
A few folklore reductions show that the two problems have the same running time up to polynomial factors. We state one for completeness. Others appear in the survey of \cite{husfeldt_2015}.
\begin{lemma}\label{decsearch}
Let $\mathcal{A}$ be an algorithm deciding whether a graph is $k$-colorable in $O(T(n))$ time. Then, there exists an algorithm $\mathcal{A'}$ that finds a $k$-coloring for $G$, if one exists, in $O^*(T(n))$ time.
\end{lemma}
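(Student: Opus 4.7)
The plan is to use a standard self-reducibility argument. Given $\mathcal{A}$, I would first run it once on $G$ itself; if $G$ is not $k$-colorable we are done. Otherwise we maintain a graph $H$ (initially $H=G$), each of whose vertices represents a set of original vertices of $G$ that have been identified, while preserving the invariant that $H$ is $k$-colorable and that any $k$-coloring of $H$ induces a $k$-coloring of $G$ (by coloring each original vertex with the color assigned to its representative in $H$).

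Next I would process the pairs $\{u,v\}$ of vertices of the evolving graph $H$ in some fixed order. For each such pair, I would call $\mathcal{A}$ on the graph $H+uv$ obtained by adding the edge $uv$. If $H+uv$ is $k$-colorable, I replace $H$ with $H+uv$; the invariant is preserved and we have committed to a pair of vertices that must receive different colors in whatever coloring we extract. If $H+uv$ is not $k$-colorable, then in every $k$-coloring of $H$ the vertices $u$ and $v$ get the same color, so I may safely identify them, replacing $H$ by $H/uv$ (contracting $u$ and $v$ into a single super-vertex carrying the union of the original vertices of $G$ they represent). The invariant is maintained because $H$ was $k$-colorable and every $k$-coloring of $H$ assigns the same color to $u$ and $v$, so it projects to a $k$-coloring of $H/uv$.

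Once all pairs of currently-present vertices have been processed, the final graph $H^*$ has the property that every pair of distinct vertices is adjacent, i.e., $H^*$ is a clique. Since it is still $k$-colorable, it has at most $k$ vertices, and a $k$-coloring is obtained by assigning a distinct color in $[k]$ to each vertex. Pulling this coloring back through the identifications yields a valid $k$-coloring of $G$.

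For the running time, the number of oracle calls is at most $\binom{n}{2}$: each call either adds a new edge or strictly decreases the vertex count, and the total number of such operations is polynomial in $n$. Each call is made on a graph with at most $n$ vertices, so it costs $O(T(n))$, giving total time $O^*(T(n))$ as claimed. There is no real obstacle here beyond verifying that the invariant is preserved on each branch of the case split; the only subtlety is making sure the oracle is only asked about graphs on at most $n$ vertices, which is immediate since contraction can only decrease $|V(H)|$.
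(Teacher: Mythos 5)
Your proof is correct and follows essentially the same folklore self-reduction as the paper: repeatedly query the decision oracle on the graph with one added edge, keeping the edge whenever $k$-colorability is preserved. The only difference is in the failure branch and the final extraction—you contract the two vertices (legitimate, since every $k$-coloring of $H$ must then agree on them) and end with a clique on at most $k$ vertices, whereas the paper simply leaves the non-edge in place and observes that the terminal graph is the complement of $k$ disjoint cliques, reading off the coloring classwise; both variants use polynomially many oracle calls on graphs of at most $n$ vertices.
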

\begin{proof}
We describe $\mathcal{A'}$.
First, use $\mathcal{A}(G)$ to decide whether $G$ is $k$-colorable, if it returns \emph{False} we return that no $k$-coloring exists.
Otherwise, repeat the following iterative process.
For every pair of distinct vertices $(u,v)\notin E(G)$ that is not an edge of $G$, use $\mathcal{A}\left(G':=\left(V(G),E(G)\cup\{(u,v)\}\right)\right)$ to check whether $G$ stays $k$-colorable after adding $(u,v)$ as an edge. 
If it does, add $(u,v)$ to $E(G)$.
We stop when no such pair $(u,v)$ exists.

The reader can verify that the resulting graph must be a complement of $k$ disjoint cliques, and thus we can easily construct a $k$-coloring.
\end{proof}

A problem comes up while trying to use this type of reductions in the settings of this paper. 
The aforementioned reduction adds edges to the graph, and therefore increases the degrees of vertices.
In particular, we cannot use it (or other similar reductions) in a black-box manner for statements like Theorem~\ref{mainthm}.
The algorithm of \cite{bjorklund2010trimmed} solves the decision version of $k$-coloring for bounded degree graphs, and cannot be trivially converted into an algorithm that finds a coloring.
The algorithms presented in this paper, on the other hand, can be easily converted into algorithms that find a $k$-coloring. This is briefly discussed later in Section~\ref{findingcol}.

\section{Overview of the $O^*(2^n)$ algorithm}\label{prevalg}
In this section we present a summary of Bj\"{o}rklund, Husfeldt and Koivisto's algorithm from~\cite{bjorklund2009set}.
We present a concise variant of their work that applies specifically to the coloring problem. The original paper covers a larger variety of set partitioning problems and thus the description in this section is simpler.

We begin by making the following very simple observation, yielding an equivalent phrasing of the coloring problem.
\begin{observation}
A graph $G$ is $k$-colorable if and only if its vertex set $V(G)$ can be \emph{covered} by $k$ independent sets.
\end{observation}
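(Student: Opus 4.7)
The plan is to establish both directions by directly translating between a proper $k$-coloring and a cover of $V(G)$ by $k$ independent sets, so the observation amounts to recognising that the two notions are the same object viewed from different angles.

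For the forward direction, I would start from a proper $k$-coloring $c: V(G) \rightarrow [k]$ and define the color classes $V_i := c^{-1}(i)$ for each $i \in [k]$. These classes form a partition of $V(G)$, hence in particular a cover by $k$ sets. Each $V_i$ is independent: for any two vertices $u,v \in V_i$ we have $c(u)=c(v)=i$, and since $c$ is a proper coloring this forces $(u,v) \notin E(G)$.

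For the reverse direction, suppose $I_1,\ldots,I_k \subseteq V(G)$ are independent sets whose union is $V(G)$. I would define $c: V(G) \rightarrow [k]$ by $c(v) := \min\{ i : v \in I_i \}$, which is well-defined precisely because the $I_i$ cover $V(G)$. To check that $c$ is a proper coloring, take any edge $(u,v) \in E(G)$ and suppose for contradiction that $c(u)=c(v)=i$; then $u,v \in I_i$, contradicting the independence of $I_i$.

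There is no real obstacle: the only mild subtlety is that the definition uses ``cover'' rather than ``partition'', so in the reverse direction one must break ties in a deterministic way (any fixed rule, such as the minimum-index rule above, works). Note also that this observation is the reason the chromatic number can be rephrased as the minimum number of independent sets needed to cover $V(G)$, which is precisely the viewpoint on which the set-cover / inclusion--exclusion algorithm of Section~\ref{prevalg} is built.
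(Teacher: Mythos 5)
Your proof is correct and is exactly the standard argument the paper has in mind (the paper states this as a "very simple observation" and omits the proof entirely). Both directions are handled properly, and your remark about resolving the cover-versus-partition distinction via a minimum-index rule is the one genuine subtlety worth spelling out.
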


A short outline of the algorithm follows, complete details appear below.
We need to decide whether $V(G)$ can be covered by $k$ independent sets. In order to do so, we compute the number of independent sets in every induced sub-graph and then use a simple inclusion-exclusion argument in order to compute the number of (ordered) covers of $V(G)$ by $k$ independent sets. We are interested in whether this number is positive. 

\begin{definition}
For a subset $V'\subseteq V(G)$ of vertices, let $i(G[V'])$ denote the number of independent sets in the induced sub-graph $G[V']$.
\end{definition}

We next show that using dynamic programming, we can quickly compute these values.
\begin{lemma}\label{computei}
We can compute the values of $i(G[V'])$ for all $V'\subseteq V$ in $O^*(2^n)$ time.
\end{lemma}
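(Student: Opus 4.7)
The plan is to reduce the computation of all $i(G[V'])$ to a single Inverse M\"obius Transform, and then invoke the fast transform from Section~\ref{invmob}.

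First I would define the indicator function $f : P(V(G)) \to \{0,1\}$ by setting $f(S) = 1$ if $S$ is an independent set in $G$ and $f(S) = 0$ otherwise. The key observation is that an induced subgraph $G[V']$ has an independent set $S$ as a subset exactly when $S \subseteq V'$ and $S$ itself is independent in $G$ (independence is a hereditary property, so we need not worry about the ambient graph). Consequently,
\[
i(G[V']) = |\{S \subseteq V' \;:\; S \text{ is independent in } G\}| = \sum_{S\subseteq V'} f(S) = \hat{f}(V').
\]
So the task reduces to computing all values of $\hat{f}$.

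Next I would compute the table of $f$-values in $O^*(2^n)$ time. A naive implementation that tests each subset for independence in polynomial time already gives the desired bound; one may also use an incremental scheme, enumerating subsets in an order where each $S$ is obtained from some $S'$ by adding one vertex $v$, in which case $S$ is independent iff $S'$ is independent and $v$ has no neighbour in $S'$, which can be tested in $O(n)$ time.

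With the $f$-table in hand, I would invoke the fast Inverse M\"obius Transform (Yates' method, as recalled in Section~\ref{invmob}) to produce the entire table of $\hat{f}$ in $O(n 2^n)$ time. Reading $\hat{f}(V') = i(G[V'])$ off this table for each $V' \subseteq V(G)$ finishes the computation within the claimed $O^*(2^n)$ budget. There is no real obstacle here: the only thing to notice is the clean identification of $i(G[V'])$ with the zeta transform of the independence indicator, after which the fast transform does all the work.
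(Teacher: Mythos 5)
Your proof is correct, but it takes a different route from the paper. You reduce the problem to a single zeta transform: you observe that the independent sets of $G[V']$ are exactly the independent subsets of $G$ contained in $V'$ (since independence is hereditary), so $i(G[V']) = \hat{f}(V')$ for the independence indicator $f$, and then you invoke Yates' fast Inverse M\"{o}bius Transform from Section~\ref{invmob}. The paper instead gives a direct dynamic program on subsets: fixing an arbitrary $v \in V'$, it splits the independent sets of $G[V']$ into those avoiding $v$ and those containing $v$, yielding the recurrence $i(G[V']) = i(G[V'\setminus\{v\}]) + i(G[V'\setminus N[v]])$, and fills the table in non-decreasing order of $|V'|$. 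Both arguments are valid and give $O^*(2^n)$. The paper's recurrence is more elementary and self-contained (it needs no transform machinery at this point), whereas your approach is conceptually cleaner and, notably, anticipates exactly the technique the paper itself uses later: in Section~\ref{maxdegsub} the quantity $\beta(V',S')$ is computed as the zeta transform $\hat{\beta_\mu}$ of an indicator function, which is precisely your idea applied to a restricted domain. So your proof is not only correct but arguably more in the spirit of the paper's main algorithmic contribution.
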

\begin{proof}
Let $v\in V'$ be an arbitrary vertex contained in $V'$.
The number of independent sets in $V'$ that do not contain $v$ is exactly $i(G[V'\setminus\{v\}])$.
On the other hand, the number of independent sets in $V'$ that do contain $v$ is exactly $i(G[V'\setminus N[v]])$.
Thus, we have
\[
i(G[V']) = i(G[V'\setminus\{v\}]) + i(G[V'\setminus N[v]]).
\]
We note that both $V'\setminus\{v\}$ and $V'\setminus N[v]$ are of size strictly less than $|V'|$. Thus, we can compute all $2^n$ values of $i(G[\cdot])$ using dynamic programming processing the sets in non-decreasing order of size.
\end{proof}

Consider the expression
\[
F(G) = \sum_{V'\subseteq V(G)} (-1)^{|V(G)|-|V'|}\cdot  i(G[V'])^k . 
\]
Using the values of $i(G[\cdot])$ computed in Lemma~\ref{computei}, we can easily compute the value of $F(G)$ by directly evaluating the above expression in $O^*(2^n)$ time.

\begin{lemma}\label{sumzero}
Let $S_1\subseteq S_2$ be sets. It holds that

\begin{equation*}
  \sum_{S_1\subseteq S \subseteq S_2} (-1)^{|S|}  =
    \begin{cases}
      0 & \text{if $S_1\neq S_2$}\\
      (-1)^{|S_2|} & \text{if $S_1=S_2$}
    \end{cases}       
\end{equation*}
\end{lemma}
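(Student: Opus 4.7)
The plan is to handle the two cases separately and reduce the nontrivial one to the classic identity $\sum_{T' \subseteq T}(-1)^{|T'|} = (1-1)^{|T|} = 0$ for $T \ne \emptyset$.

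First I would dispatch the case $S_1 = S_2$. Here the only set $S$ satisfying $S_1 \subseteq S \subseteq S_2$ is $S = S_1 = S_2$ itself, so the sum collapses to its single term $(-1)^{|S_2|}$, matching the stated value.

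Next, for $S_1 \subsetneq S_2$, I would introduce the ``gap'' $T := S_2 \setminus S_1$, which is nonempty by assumption. Every $S$ with $S_1 \subseteq S \subseteq S_2$ can be written uniquely as $S = S_1 \cup T'$ for some $T' \subseteq T$, and this decomposition is a bijection between the indexing set of the sum and $2^T$. Since $|S| = |S_1| + |T'|$, the sum factorizes as
\[
\sum_{S_1 \subseteq S \subseteq S_2}(-1)^{|S|} \;=\; (-1)^{|S_1|} \sum_{T' \subseteq T}(-1)^{|T'|} \;=\; (-1)^{|S_1|}(1-1)^{|T|} \;=\; 0,
\]
where the last equality uses the binomial theorem together with $|T| \ge 1$.

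There is no real obstacle here; the only point worth being careful about is ensuring the bijection $S \leftrightarrow T' = S \setminus S_1$ between $\{S : S_1 \subseteq S \subseteq S_2\}$ and $2^T$ is set up explicitly before applying the binomial identity, so that the parity $(-1)^{|S|}$ decomposes cleanly as a product.
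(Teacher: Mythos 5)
Your proof is correct. It does, however, take a different route from the paper's. You reduce the nontrivial case to the binomial identity $\sum_{T'\subseteq T}(-1)^{|T'|}=(1-1)^{|T|}=0$ via the bijection $S\mapsto S\setminus S_1$ between $\{S : S_1\subseteq S\subseteq S_2\}$ and the power set of the gap $T=S_2\setminus S_1$; the paper instead fixes a single element $v\in S_2\setminus S_1$ and pairs each admissible $S$ with $S\,\triangle\,\{v\}$, so the terms cancel in pairs by a sign-reversing involution, with no appeal to the binomial theorem. The two arguments are essentially equivalent in content (your factorization is the involution iterated over all elements of $T$ at once), but the paper's pairing is slightly more elementary, needing only one distinguished element rather than the full bijective decomposition, whereas your version makes the combinatorial structure of the interval $[S_1,S_2]$ as a Boolean lattice explicit and generalizes more readily to weighted or $q$-analogue versions of the identity. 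Both handle the case $S_1=S_2$ identically, by observing the sum has a single term.
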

\begin{proof}
If $S_1\subsetneq S_2$ then there exists a vertex $v\in S_2\setminus S_1$. We can pair each set $S_1\subseteq S \subseteq S_2$ with $S \triangle \{v\}$, its symmetric difference with $\{v\}$.
Clearly, in each pair of sets one is of odd size and one is of even size, and thus their signs cancel each other. Therefore, the sum is zero.
In the second case, the claim is straightforward.
\end{proof}

\begin{lemma}\label{sumzerouse}
$F(G)$ equals the number of $k$-tuples $(I_0,\ldots,I_{k-1})$ of independent sets in $G$ such that $V(G) = I_0\cup\ldots\cup I_{k-1}$.
\end{lemma}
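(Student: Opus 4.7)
The plan is to perform a standard swap-of-summation inclusion-exclusion argument, combining the polynomial interpretation of $i(G[V'])^k$ with Lemma~\ref{sumzero}. The key observation is that $i(G[V'])^k$ has a natural combinatorial meaning: it counts the ordered $k$-tuples $(I_0,\ldots,I_{k-1})$ of independent sets of $G$ such that $I_j \subseteq V'$ for every~$j$, equivalently, such that $I_0 \cup \cdots \cup I_{k-1} \subseteq V'$. So the first step is to expand the defining sum of $F(G)$ and rewrite it as a double sum indexed by $k$-tuples of independent sets and by subsets $V' \subseteq V(G)$, then swap the order of summation.

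After swapping, each $k$-tuple $(I_0,\ldots,I_{k-1})$, with union $S := I_0 \cup \cdots \cup I_{k-1}$, is weighted by
\[
\sum_{S \subseteq V' \subseteq V(G)} (-1)^{|V(G)|-|V'|}.
\]
The plan is to pull out the factor $(-1)^{|V(G)|}$ and use $(-1)^{-|V'|} = (-1)^{|V'|}$ to bring the inner sum into the form handled by Lemma~\ref{sumzero}, taking $S_1 = S$ and $S_2 = V(G)$. By that lemma, the inner sum vanishes whenever $S \subsetneq V(G)$, and equals $(-1)^{|V(G)|} \cdot (-1)^{|V(G)|} = 1$ exactly when $S = V(G)$. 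Hence the only surviving terms are those $k$-tuples whose union covers $V(G)$, each contributing~$1$, which is exactly the count claimed.

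There is no real obstacle here: the argument is purely bookkeeping once one recognizes the combinatorial meaning of $i(G[V'])^k$. The only small thing to verify carefully is the sign handling, namely that after rewriting $(-1)^{|V(G)|-|V'|} = (-1)^{|V(G)|}(-1)^{|V'|}$ the inner sum matches the exact hypothesis of Lemma~\ref{sumzero}; the constant $(-1)^{|V(G)|}$ factored outside cancels against the $(-1)^{|V(G)|}$ that appears in the $S_1=S_2$ case of that lemma, leaving the clean coefficient~$1$.
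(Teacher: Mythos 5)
Your proposal is correct and follows essentially the same route as the paper: interpret $i(G[V'])^k$ as counting ordered $k$-tuples of independent sets with union contained in $V'$, swap the order of summation, and apply Lemma~\ref{sumzero} with $S_1$ the union of the tuple and $S_2=V(G)$. The explicit sign bookkeeping you carry out is exactly what the paper's proof does implicitly, so there is nothing to add.
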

\begin{proof}
As $i(G[V'])$ counts the number of independent sets in $G[V']$, raising it to the $k$-th power (namely, $i(G[V'])^k$) counts the number of $k$-tuples of independent sets in $G[V']$. 

Let $(I_0,\ldots,I_{k-1})$ be a $k$-tuple of independent sets in $G$.
It appears exactly in terms of the sum corresponding to sets $V'$ such that $I_0\cup\ldots\cup I_{k-1} \subseteq V' \subseteq V(G)$. Each time this $k$-tuple is counted, it is counted with a sign determined by the parity of $V'$. 
By Lemma~\ref{sumzero}, the sum of the signs corresponding to sets $I_0\cup\ldots\cup I_{k-1} \subseteq V' \subseteq V(G)$ is zero if $I_0\cup\ldots\cup I_{k-1} \neq V(G)$ and one if $I_0\cup\ldots\cup I_{k-1} = V(G)$.
\end{proof}

We conclude with
\begin{corollary}
$F(G)$ can be computed in time $O^*(2^n)$, and $G$ is $k$-colorable if and only if $F(G)>0$.
\end{corollary}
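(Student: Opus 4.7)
My plan is to assemble the corollary directly from the three ingredients already in place: the opening observation that $k$-colorability is equivalent to the existence of a cover of $V(G)$ by $k$ independent sets, Lemma~\ref{computei} which produces all $i(G[V'])$ values, and Lemma~\ref{sumzerouse} which gives the combinatorial meaning of $F(G)$. Since both halves of the corollary follow almost mechanically from these, there is no substantial obstacle; the main task is to assemble and account for the polynomial overhead.

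For the running time, I would first invoke Lemma~\ref{computei} to build, in $O^*(2^n)$ time, a table storing $i(G[V'])$ for every $V'\subseteq V(G)$. Then I would evaluate $F(G)$ in a single pass: iterate over all $2^n$ subsets $V'$, look up $i(G[V'])$ in the table, raise it to the $k$-th power, multiply by the sign $(-1)^{|V(G)|-|V'|}$, and accumulate into a running total. Because $i(G[V']) \leq 2^n$, the numbers involved have bit-length polynomial in $n$, so each of the $2^n$ terms costs only polynomial arithmetic work. Combined with the preprocessing, this is $O^*(2^n)$ overall.

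For the characterization, Lemma~\ref{sumzerouse} identifies $F(G)$ with the number of ordered $k$-tuples $(I_0,\ldots,I_{k-1})$ of independent sets in $G$ whose union is $V(G)$, so in particular $F(G)$ is a non-negative integer. By the opening observation of this section, $G$ is $k$-colorable if and only if $V(G)$ admits such a cover by $k$ independent sets, which happens precisely when at least one such $k$-tuple exists, i.e., when $F(G)>0$. Stringing these equivalences together yields the stated biconditional and closes the argument.
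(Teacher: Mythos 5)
Your proof is correct and matches the paper's (implicit) argument exactly: the paper states this corollary as an immediate consequence of Lemma~\ref{computei}, Lemma~\ref{sumzerouse}, and the opening observation, which is precisely how you assemble it. Your added remark that the intermediate values have polynomial bit-length is a reasonable bit of extra care that the paper leaves unstated.
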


\section{Faster Coloring Algorithms for \mbox{$(\alpha,\Delta)$-bounded} Graphs}\label{fasteralg}
The main purpose of this section is proving Theorem~\ref{mainthm}.

We first outline our approach. 
Let $G$ be a graph with a constant chromatic number $\chi(G)\leq k$.
It is well known that $G$ must contain a large independent set. 
Let $S$ be an independent set in $G$. 
We think of $|S|$ as a constant fraction of $|V(G)|$, when we consider $k$ as a constant. 
Let $c:\left(V(G)\setminus S\right) \rightarrow [k]$ be a $k$-coloring of the induced sub-graph $G[V(G)\setminus S]$.
We say that $c$ can be \emph{extended} to a $k$-coloring of $G$ if there exists a proper $k$-coloring $c':V(G)\rightarrow [k]$ such that $\restr{c'}{V(G)\setminus S} = c$.
For a subset $V' \subseteq V(G)\setminus S$ of vertices, we say that \emph{$c$ does not use the full palette on $V'$} if $|c(V')|<k$, namely, if $c$ does not use all $k$ colors on the vertices of $V'$.
Clearly, a proper $k$-coloring $c$ of $V(G)\setminus S$ can be extended to a proper $k$-coloring of $G$ if and only if $|c(N(s))|<k$ for every $s\in S$. 
Our approach, on a high-level, is to construct an algorithm that finds an extendable $k$-coloring of $V(G)\setminus S$.
We aim to do so in $O\left(2^{|V(G)\setminus S|} \left(2-\varepsilon\right)^{|S|}\right)$ time. 

\begin{figure}[H]
\center
\includegraphics[trim={0cm 4.5cm 10cm 0cm}, clip,scale=0.4,page=1]{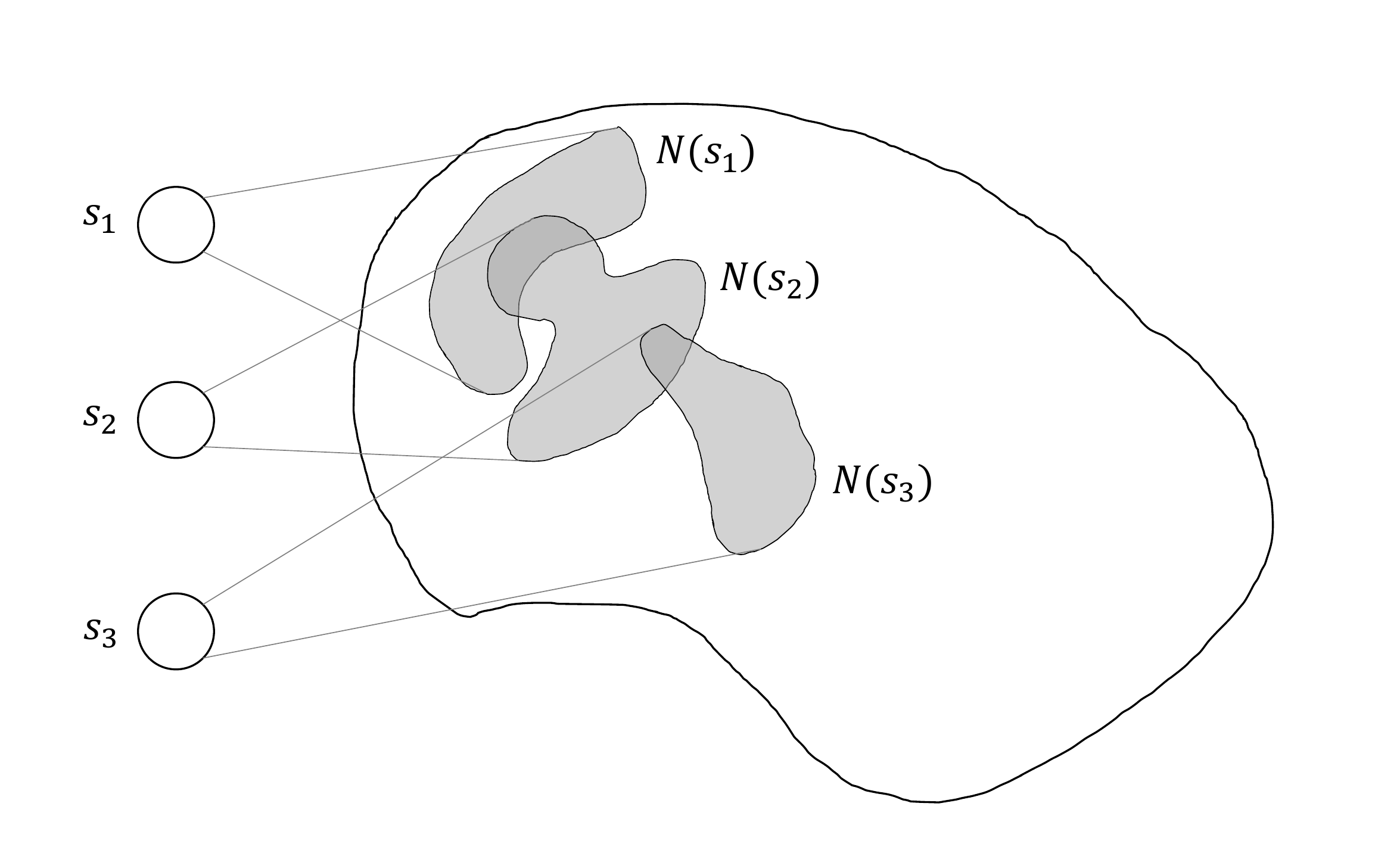}
\label{fig:approach1}
\end{figure}

In Section~\ref{maxdegsub} we consider a restricted version of the problem in which the independent set $S$ has the following two additional properties.
First, we assume that every vertex $s\in S$ is of degree $\deg(s)\leq \Delta$, where $\Delta$ is some constant.
Second, we assume that no pair of vertices $s_1,s_2 \in S$ share a neighbor in $G$.
Equivalently, the neighborhoods $N(s)$ for every $s\in S$ are all disjoint.
Under these conditions, we present an algorithm that runs in $O\left(2^{|V(G)\setminus S|} \left(2-\varepsilon\right)^{|S|}\right)$ time, where $\varepsilon$ depends only on $\Delta$.
As $\varepsilon$ does not depend on $k$, we can in fact compute the chromatic number of $G$ exponentially faster than $O^*(2^n)$ if $G$ contains an independent set $S$ with these properties.
We also observe that if $G$ is of maximum degree $\Delta$ then it contains a large such independent set $S$.
Our algorithm is based on methods that generalize Section~\ref{prevalg}, and on a simple approach to implicitly compute values of the Inverse M\"{o}bius Transform.

\begin{figure}[H]
\center
\includegraphics[trim={0cm 4.5cm 10cm 0cm}, clip,scale=0.4,page=2]{colfigs.pdf}
\label{fig:approach2}
\end{figure}

In Section~\ref{subsecmain} we modify the algorithm of Section~\ref{maxdegsub} and remove the second assumption on $S$. Namely, we now only assume that $S$ is an independent set and that for every $s\in S$ we have $\deg(s)\leq \Delta$.
Our algorithm still runs in $O\left(2^{|V(G)\setminus S|} \left(2-\varepsilon\right)^{|S|}\right)$ time.
A main ingredient in the modification is a new removal lemma for small subsets. The proof of this combinatorial lemma is given in Section~\ref{sec:removal} and its statement is used in a black-box manner in this section.

In Section~\ref{genlstcol} we extend the result to List Coloring.

\subsection{$k$-coloring bounded-degree graphs}\label{maxdegsub}
In this subsection we begin illustrating the ideas leading towards proving Theorem~\ref{mainthm}. We also prove the following (much) weaker statement.
\begin{theorem}\label{maxdegthm}
For every $k,\Delta$ there exists $\varepsilon_{k,\Delta}>0$ such that we can solve $k$-coloring for graphs with maximum degree $\Delta$ in $O\left(\left(2-\varepsilon_{k,\Delta}\right)^n\right)$ time.
\end{theorem}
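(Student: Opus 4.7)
The plan is to reduce Theorem~\ref{maxdegthm} to the main algorithmic result of Section~\ref{maxdegsub} (the inner algorithm previewed in the outline above), which computes the chromatic number of $G$ in time $O(2^{|V(G)\setminus S|}(2-\varepsilon)^{|S|})$ whenever we are given an independent set $S\subseteq V(G)$ satisfying two additional properties: every $s\in S$ has $\deg(s)\leq \Delta$, and the neighborhoods $\{N(s)\}_{s\in S}$ are pairwise disjoint. Since the $\varepsilon$ in that bound depends only on $\Delta$, it suffices to exhibit such an $S$ of size $\Omega(n)$ in any graph of maximum degree $\Delta$.

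To produce $S$, I would pass to the square graph $G^2$ on vertex set $V(G)$, in which two vertices are adjacent iff they are within distance~$2$ in $G$. Every vertex of $G^2$ has degree at most $\Delta^2$, so a standard greedy argument yields in polynomial time an independent set $S$ of $G^2$ of size at least $n/(\Delta^2+1)$. Such an $S$ is automatically independent in $G$; each $s\in S$ has $\deg_G(s)\leq \Delta$ by assumption on $G$; and for any distinct $s_1,s_2\in S$ the neighborhoods $N(s_1),N(s_2)$ are disjoint, because otherwise $s_1$ and $s_2$ would be at $G$-distance exactly~$2$ and hence adjacent in $G^2$. Thus $S$ meets both conditions required by the algorithm of Section~\ref{maxdegsub}.

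Feeding $(G,S)$ into that algorithm, and using $|V(G)\setminus S| = n-|S|$, the running time becomes
\[
2^{n-|S|}(2-\varepsilon)^{|S|} \;=\; 2^n \cdot \bigl(1-\tfrac{\varepsilon}{2}\bigr)^{|S|} \;\leq\; 2^n \cdot \bigl(1-\tfrac{\varepsilon}{2}\bigr)^{n/(\Delta^2+1)} \;=\; (2-\varepsilon_{k,\Delta})^n,
\]
with $\varepsilon_{k,\Delta} := 2 - 2\bigl(1-\tfrac{\varepsilon}{2}\bigr)^{1/(\Delta^2+1)} > 0$. Deciding $k$-colorability follows immediately by comparing the computed $\chi(G)$ with $k$.

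The genuine technical obstacle is therefore not this reduction, which is routine, but the construction of the inner algorithm itself: one must generalize the BHK inclusion--exclusion of Section~\ref{prevalg} so that every $V'\subseteq V(G)$ is split as $V' = A'\cup T$ with $A'\subseteq V(G)\setminus S$ and $T\subseteq S$, and the resulting sum $\sum_{T\subseteq S}$ is evaluated \emph{implicitly}---without materializing all $2^{|S|}$ terms---by exploiting the disjointness of $\{N(s)\}_{s\in S}$ and the identity $i(G[A'\cup T]) = \sum_{I\subseteq T} i(G[A'\setminus N(I)])$ to expose additional structure. That implicit inverse-M\"{o}bius computation on the $S$-side is where the main work of Section~\ref{maxdegsub} lies, and where I would expect to spend almost all of the effort.
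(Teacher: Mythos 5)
Your proposal is correct and matches the paper's proof of this theorem: the greedy independent set in $G^2$ is exactly the paper's Lemma~\ref{maxdeglemma} (greedily removing $2$-neighborhoods of size at most $1+\Delta^2$ to get a distance-$\geq 3$ set of size $n/(1+\Delta^2)$), and the final running-time arithmetic is the same as the paper's. The real content is indeed deferred to Theorem~\ref{easymainthm}, exactly as in the paper, where the sum over subsets of $S$ is handled by an inclusion--exclusion over which neighborhoods $N(s)$ are fully hit, together with a restricted inverse M\"{o}bius transform over the sets $B(S')$.
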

In fact, as a graph $G$ with maximum degree $\Delta$ has chromatic number $\chi(G)\leq \Delta+1$, we can compute the chromatic number of a graph with degrees bounded by $\Delta$ in time $O\left(\left(2-\varepsilon_{\Delta+1,\Delta}\right)^n\right)$.

As outlined in the beginning of this section, our approach begins by finding a large independent set with some additional properties. 
We show that a graph with bounded degrees must contain a very large independent set $S$ such that the distance between each pair of vertices in $S$ is at least three.
In other words, $S$ is an independent set, and no pair of vertices in $S$ share a neighbor. In particular, the neighborhoods $N(s)$ for $s\in S$ are all disjoint.
The core theorem of this subsection is 
\begin{theorem}\label{easymainthm}
Let $G$ be a graph and $S\subseteq V(G)$ a set of vertices such that the distance between each two vertices in $S$ is at least three and the degree of each vertex in $S$ is at most $\Delta$. For any $k$, we can solve $k$-coloring for $G$ in $O^*\left(2^{|V(G)|-|S|} \cdot (2-2^{-\Delta})^{|S|}\right)$ time.
\end{theorem}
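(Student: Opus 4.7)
The plan is to refine the BHK framework from Section~\ref{prevalg} by exploiting the local structure around $S$. I will compute the same sum $F(G) = \sum_{V'\subseteq V(G)}(-1)^{|V(G)\setminus V'|}i(G[V'])^k$, whose positivity decides $k$-colorability, but organize the enumeration so that each $s\in S$ contributes strictly less than a factor of $2$. Splitting every $V' = T'\sqcup S'$ with $T' := V'\cap T$, $S' := V'\cap S$, and $T := V(G)\setminus S$, we rewrite
\begin{equation*}
F(G) = \sum_{T'\subseteq T}(-1)^{|T\setminus T'|}\,I(T'),\qquad I(T') := \sum_{S'\subseteq S}(-1)^{|S\setminus S'|}\,i(G[T'\cup S'])^k.
\end{equation*}
Set $U_s := N(s)$ (a subset of $T$ of size $\leq\Delta$) and $Z(T') := \{s\in S : U_s\cap T' = \emptyset\}$.

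The pivotal observation is that for $s\in Z(T')$, the DP recurrence $i(G[X\cup\{s\}]) = i(G[X]) + i(G[X\setminus N(s)])$ collapses to $i(G[X\cup\{s\}]) = 2\,i(G[X])$ whenever $X = T'\cup S'$ with $s\notin S'$, because $N(s)\subseteq T$ is disjoint from $X$. This lets us pair the terms of $I(T')$ indexed by $S'$ and $S'\cup\{s\}$ into $(-1)^{|S\setminus S'|}(1-2^k)\,i(G[T'\cup S'])^k$. Applying this pairing once for each $s\in Z(T')$ yields
\begin{equation*}
I(T') = (2^k-1)^{|Z(T')|}\sum_{S'\subseteq S\setminus Z(T')}(-1)^{|(S\setminus Z(T'))\setminus S'|}\,i(G[T'\cup S'])^k,
\end{equation*}
so only $2^{|S|-|Z(T')|}$ summands remain---one per \emph{valid} pair $(T',S')$ (meaning $S'\cap Z(T') = \emptyset$).

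To bound the running time I exploit the disjointness of the $U_s$'s, which comes from the hypothesis that vertices of $S$ are at pairwise distance $\geq 3$. Writing $W := T\setminus\bigcup_s U_s$, the sum factors across $W$ and the individual $U_s$'s:
\begin{equation*}
\sum_{T'\subseteq T} 2^{|S|-|Z(T')|} \;=\; 2^{|W|}\prod_{s\in S}\bigl(2^{|U_s|+1}-1\bigr) \;=\; 2^{|T|}\prod_{s\in S}\bigl(2-2^{-|U_s|}\bigr) \;\leq\; 2^{|T|}(2-2^{-\Delta})^{|S|},
\end{equation*}
so the total number of summands already matches the claimed bound.

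The main remaining obstacle, and the technical heart of the proof, is to evaluate each $i(G[T'\cup S'])$ within the same budget, since applying Lemma~\ref{computei} to all of $G$ alone costs $O^*(2^n)$. I would precompute $i(G[Y])$ for every $Y\subseteq T$ in $O^*(2^{|T|})$ time via Lemma~\ref{computei} on $G[T]$ (covering the base case $S'=\emptyset$), and then process valid pairs in nondecreasing order of $|S'|$ using the standard recurrence
\begin{equation*}
i(G[T'\cup S']) \;=\; i(G[T'\cup(S'\setminus\{s\})]) + i(G[(T'\setminus U_s)\cup(S'\setminus\{s\})])
\end{equation*}
for an arbitrary $s\in S'$. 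The crucial verification is that both children remain valid: for $(T', S'\setminus\{s\})$ this is immediate, and for $(T'\setminus U_s, S'\setminus\{s\})$ it follows because $U_s$ is disjoint from $U_{s'}$ for every $s'\in S'\setminus\{s\}$, so $T'\cap U_{s'}$ is preserved. Hence the recursion only visits valid pairs, and its total cost---together with assembling $F(G)$---is $O^*(2^{|T|}(2-2^{-\Delta})^{|S|}) = O^*(2^{|V(G)|-|S|}(2-2^{-\Delta})^{|S|})$, as claimed.
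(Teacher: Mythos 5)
Your proof is correct, but it reaches the bound by a genuinely different route than the paper. The paper introduces an extra layer of inclusion--exclusion over subsets $S'\subseteq S$: it defines $\beta(V',S')$ as the number of independent sets in $G[V']$ meeting every $N(s)$, $s\in S'$, builds a new quantity $H(G,S)$ whose combinatorial meaning is the number of covers of $V(G)\setminus S$ by $k$ independent sets that never use the full palette on any $N(s)$ (i.e., extendable colorings), proves separately that $H(G,S)>0$ iff $G$ is $k$-colorable, and saves time because $\beta(V',S')=0$ whenever $V'$ misses some $N(s)$, so a Möbius transform truncated to the sets $B(S')$ suffices. You instead stay entirely inside the vanilla BHK functional $F(G)$ on all of $V(G)$ and collapse the sum over the $S$-coordinates algebraically, via the doubling identity $i(G[X\cup\{s\}])=2\,i(G[X])$ when $N(s)\cap X=\emptyset$ (valid here because distance $\geq 3$ makes $S$ independent and $N(s)\subseteq T$), which factors out $(2^k-1)^{|Z(T')|}$ and leaves only the ``valid'' pairs; you then evaluate $i(G[T'\cup S'])$ by the independent-set recurrence restricted to valid pairs, where disjointness of the neighborhoods is exactly what keeps the recursion closed. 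The two arguments converge on the identical counting bound ($\sum_{T'}2^{|S|-|Z(T')|}=\sum_{S'\subseteq S}|B(S')|$), and your indexing of valid pairs can be done by the same Cartesian-product bijection the paper uses for $B(S')$. What your version buys is economy: no new correctness lemma is needed, since you compute the original $F(G)$ and invoke BHK's correctness as a black box. What the paper's version buys is extensibility: phrasing the quantity as ``extendable colorings of $G-S$'' is precisely what survives into Section~\ref{subsecmain}, where a pre-colored set $V_0$ is introduced and the factor $\beta_j(V',S')=\beta_j(V',S'\cup S_j)$ replaces disjointness; your doubling identity would break there because $I\cup V_0^j$ can meet $N(s)$ through $V_0^j$ even when $V'$ does not.
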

It is important to note that the existence of such a set $S$ is our sole use of the bound on the graph degrees. Note that the bound of Theorem~\ref{easymainthm} does not depend on $k$. Thus, we get an exponential improvement for computing the chromatic number of a graph $G$ that contains a large enough set $S$ with the stated properties. 

Before proving Theorem~\ref{easymainthm}, we describe a simple algorithm for finding a set $S$ with the required properties in bounded-degree graphs.

\begin{lemma}\label{maxdeglemma}
Let $G$ be a graph with maximum degree at most $\Delta$. There exists a set $S\subseteq V(G)$ of at least $\frac{1}{1+\Delta^2}\cdot |V(G)|$ vertices such that the distance between every distinct pair $s_1,s_2\in S$ is at least three. Furthermore, we can find such $S$ efficiently.
\end{lemma}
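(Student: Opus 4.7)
The plan is to reduce the problem to finding an ordinary independent set in an auxiliary graph of bounded degree, and then apply the standard greedy/Turán-type argument.

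Define the auxiliary graph $H$ on the same vertex set $V(G)$ by placing an edge between $u$ and $v$ whenever the distance between them in $G$ is exactly $1$ or $2$. By construction, a set $S \subseteq V(G)$ has pairwise $G$-distance at least $3$ if and only if $S$ is independent in $H$, so it suffices to exhibit a large independent set of $H$ that can be found efficiently.

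Next I would bound the maximum degree of $H$. For any $v \in V(G)$, the neighbors of $v$ in $H$ are exactly the vertices lying at $G$-distance $1$ or $2$ from $v$. There are at most $\deg_G(v) \leq \Delta$ vertices at distance $1$, and every vertex at distance $2$ is a neighbor of some neighbor of $v$, contributing at most $\Delta \cdot (\Delta - 1)$ additional vertices. Hence the degree of $v$ in $H$ is at most $\Delta + \Delta(\Delta-1) = \Delta^2$, so $\Delta(H) \leq \Delta^2$.

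Finally, I would apply the classical greedy algorithm for independent sets on $H$: maintain a working set $W$ initialized to $V(G)$; while $W$ is nonempty, pick an arbitrary $v \in W$, add $v$ to $S$, and delete $v$ together with all of its $H$-neighbors that lie in $W$ (at most $1 + \Delta^2$ vertices in total). Each iteration adds one vertex to $S$ and removes at most $1 + \Delta^2$ vertices from $W$, so the process terminates after at least $\lceil |V(G)|/(1+\Delta^2)\rceil$ iterations, yielding $|S| \geq |V(G)|/(1+\Delta^2)$. The set $S$ is independent in $H$ by construction, hence every two distinct vertices of $S$ have $G$-distance at least $3$. The running time is polynomial, since each iteration takes polynomial time (the $H$-neighborhood of a single vertex can be enumerated by a depth-$2$ BFS in $G$, and we need not construct $H$ explicitly). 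There is no real obstacle here; the only point to be slightly careful about is the degree bound in $H$, which comes out cleanly to $\Delta^2$ and drives the $1/(1+\Delta^2)$ factor in the statement.
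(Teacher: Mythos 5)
Your proof is correct and is essentially the paper's own argument: the paper runs the same greedy procedure directly on $G$ (repeatedly picking a vertex and deleting its closed $2$-neighborhood of size at most $1+\Delta+\Delta(\Delta-1)=1+\Delta^2$), which is exactly your greedy on the auxiliary square-type graph $H$. The reformulation via $H$ is only a cosmetic difference.
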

\begin{proof}
We construct $S$ in a greedy manner.
We begin with $S=\emptyset$ and $V'=V(G)$.
As long as $V'$ is not empty we pick an arbitrary vertex $v\in V'$ and add it to $S$. 
We then remove from $V'$ the vertex $v$ and every vertex of distance at most two from it.

By construction, the minimum distance between a pair of vertices in $S$ is at least three.
The size of the $2$-neighborhood of a vertex is bounded by $1+\Delta+\Delta\cdot (\Delta-1) = 1+\Delta^2$ and thus we get the desired lower bound on the size of $S$.
\end{proof}

Theorem~\ref{maxdegthm} now follows from Lemma~\ref{maxdeglemma} and Theorem~\ref{easymainthm}.

\begin{proof}[Proof of Theorem~\ref{maxdegthm}]
Let $G$ be a graph of maximum degree at most $\Delta$ and let $k$ be an integer.
By Lemma~\ref{maxdeglemma}, we can construct a set $S$ of size $|S|\geq \frac{1}{1+\Delta^2}\cdot |V(G)|$ satisfying the conditions of Theorem~\ref{easymainthm}.
Thus, by Theorem~\ref{easymainthm}, we can solve $k$-coloring for $G$ in time
\[
O^*\left(2^{n-\frac{1}{1+\Delta^2}\;n}\cdot (2-2^{-\Delta})^{\frac{1}{1+\Delta^2}\;n}\right)
=
O^*\left(\left(2\cdot \left(1 - 2^{-\left(\Delta+1\right)}\right)^{\frac{1}{1+\Delta^2}} \right)^{n\;}\right).
\]
\end{proof}

In the rest of the subsection we prove Theorem~\ref{easymainthm}.

\begin{definition}
For subsets $V'\subseteq V(G)\setminus S$ and $S'\subseteq S$ denote by $\beta(V',S')$ the number of independent sets $I$ in $G[V']$ that intersect every neighborhood $N(s)$ of $s\in S'$, that is, $I\cap N(s) \neq \emptyset$ for every $s \in S'$.
\end{definition}

Consider, for a subset $S'\subseteq S$, the following sum
\[
h(G,S') := \sum_{V'\subseteq V(G)\setminus S} (-1)^{|V(G)|-|V'|} \;\beta\left(V', S'\right)^k.
\]

The following proof is almost identical to the proof of Lemma~\ref{sumzerouse} in Section~\ref{prevalg}.

\begin{lemma}\label{meaningofFp}
$h(G,S')$ is the number of covers of $V(G)\setminus S$ by $k$-tuples $(I_0,\ldots,I_{k-1})$ of independent sets in $G[V(G)\setminus S]$ such that $I_i\cap N(s)\neq \emptyset$ for every $s\in S'$ and every $0\leq i \leq k-1$. 
\end{lemma}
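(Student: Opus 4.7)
The proof will closely mirror the structure of Lemma~\ref{sumzerouse}: both the shape of the formula and the inclusion-exclusion engine are the same, only the universe and the class of objects being counted have changed. My plan is to rewrite $h(G,S')$ as a double sum over $V'$ and over tagged $k$-tuples, swap the order of summation, and then invoke Lemma~\ref{sumzero} on the inner sum.

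The first step is to unpack $\beta(V',S')^k$ as a count of $k$-tuples. By definition $\beta(V',S')$ counts independent sets $I$ in $G[V']$ that hit $N(s)$ for every $s\in S'$, so raising it to the $k$-th power counts $k$-tuples $(I_0,\ldots,I_{k-1})$ of independent sets in $G[V']$ such that $I_j\cap N(s)\neq\emptyset$ for every $0\leq j\leq k-1$ and every $s\in S'$. Call such $k$-tuples \emph{$S'$-piercing}. Note that any $S'$-piercing $k$-tuple in $G[V(G)\setminus S]$ is automatically an $S'$-piercing $k$-tuple in $G[V']$ for exactly those $V'$ with $I_0\cup\cdots\cup I_{k-1}\subseteq V'\subseteq V(G)\setminus S$, since all independent sets involved are already contained in $V(G)\setminus S$.

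The second step is to swap the order of summation. Let $\mathcal{T}$ be the set of all $S'$-piercing $k$-tuples of independent sets in $G[V(G)\setminus S]$, and for $T=(I_0,\ldots,I_{k-1})\in\mathcal{T}$ write $U(T):=I_0\cup\cdots\cup I_{k-1}$. Then
\[
h(G,S') \;=\; \sum_{T\in\mathcal{T}} \;\sum_{U(T)\subseteq V'\subseteq V(G)\setminus S} (-1)^{|V(G)|-|V'|}.
\]
Applying Lemma~\ref{sumzero} to the inner sum with $S_1=U(T)$ and $S_2=V(G)\setminus S$, the inner sum vanishes unless $U(T)=V(G)\setminus S$, in which case it equals $\pm 1$ (the global sign, which is the same for every surviving term, is determined by $|V(G)|$ and $|V(G)\setminus S|$ and can be absorbed into the definition — I will verify the signs match so that the final count is a nonnegative integer). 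The $k$-tuples $T$ that survive are exactly those whose components cover $V(G)\setminus S$, which is precisely the set of objects the lemma claims $h(G,S')$ counts.

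There is no real obstacle; the only thing to be careful about is bookkeeping of the sign $(-1)^{|V(G)|-|V'|}$ versus $(-1)^{|V(G)\setminus S|-|V'|}$, which differ by the fixed factor $(-1)^{|S|}$ — one verifies that with the convention used in the paper, the surviving terms each contribute $+1$, so $h(G,S')$ genuinely equals (rather than just equals up to a sign) the advertised cardinality. Once this is checked, the statement of the lemma follows immediately from the two steps above.
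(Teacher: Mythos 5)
Your proof is correct and follows exactly the same route as the paper's: expand $\beta(V',S')^k$ as a count of $S'$-piercing $k$-tuples, exchange the order of summation so each tuple contributes over all $V'$ with $I_0\cup\cdots\cup I_{k-1}\subseteq V'\subseteq V(G)\setminus S$, and collapse the inner alternating sum with Lemma~\ref{sumzero}. One remark on the sign check you deferred: with the literal exponent $(-1)^{|V(G)|-|V'|}$ the surviving term $V'=V(G)\setminus S$ contributes $(-1)^{|S|}$ rather than $+1$ (the normalization that makes each cover count exactly once is $(-1)^{|V(G)\setminus S|-|V'|}$), so your promised verification would actually land on the opposite conclusion — but this is a harmless normalization slip present in the paper itself, not a gap in your argument.
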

\begin{proof}
Each value of $\beta(V',S')$ counts independent sets in $G[V']$ that intersect every neighborhood $N(s)$ for $s\in S'$. 

Each $k$-tuple $(I_0,\ldots,I_{k-1})$ of that type is counted in terms corresponding to sets $V'$ such that
\[
I_0\cup\ldots\cup I_{k-1} \subseteq V' \subseteq V(G)\setminus S.
\]
By Lemma~\ref{sumzero} the multiplicity with which such $k$-tuple is counted is one if 
\[
I_0\cup\ldots\cup I_{k-1} = V(G)\setminus S.
\]
and zero otherwise.
\end{proof}

Consider the following expression.
\[
H(G,S) := \sum_{S'\subseteq S}  (-1)^{|S'|} \; h(G,S')
\]

$H(G,S)$ is the number of covers of $V(G)\setminus S$ by $k$-tuples of independent sets that \emph{do not use the full palette} on any neighborhood $N(s)$ for $s\in S$. 
The precise claim follows.

\begin{lemma}\label{meaningofFptotal}
$H(G,S)$ is the number of covers of $V(G)\setminus S$ by $k$-tuples $(I_0,\ldots,I_{k-1})$ of independent sets in $G[V(G)\setminus S]$ such that for every $s\in S$ there exists $0\leq i \leq k-1$ such that $I_i \cap N(s) = \emptyset$.
\end{lemma}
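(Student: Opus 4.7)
The plan is a standard inclusion–exclusion argument layered on top of Lemma~\ref{meaningofFp}. Fix any $k$-tuple $(I_0,\ldots,I_{k-1})$ of independent sets in $G[V(G)\setminus S]$ that covers $V(G)\setminus S$, and associate with it the ``bad'' subset
\[
T(I_0,\ldots,I_{k-1}) := \{\,s\in S : I_i\cap N(s)\neq \emptyset \text{ for every }0\le i\le k-1\,\}\subseteq S.
\]
A vertex $s\in S$ lies in $T$ precisely when the cover uses the full palette on $N(s)$, i.e.\ the cover \emph{cannot} be extended to include $s$. So what we want to show is that $H(G,S)$ counts exactly those covers with $T=\emptyset$.

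I would then re-examine $h(G,S')$ through this lens. By Lemma~\ref{meaningofFp}, the cover $(I_0,\ldots,I_{k-1})$ is counted in $h(G,S')$ if and only if every $I_i$ meets every $N(s)$ for $s\in S'$, which is equivalent to $S'\subseteq T(I_0,\ldots,I_{k-1})$. Substituting this into the definition of $H(G,S)$ and swapping the order of summation, the total multiplicity with which $(I_0,\ldots,I_{k-1})$ is counted in $H(G,S)$ equals
\[
\sum_{S'\subseteq T}(-1)^{|S'|}.
\]

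The final step is the familiar cancellation: this alternating sum equals $1$ when $T=\emptyset$ and $0$ otherwise (either by direct computation, or by pairing each $S'$ with its symmetric difference with some distinguished element of $T$, mirroring the pairing argument used in Lemma~\ref{sumzero}). Hence the only covers that survive in $H(G,S)$ with nonzero weight are exactly those with $T=\emptyset$, and each such cover is counted once. Unpacking $T=\emptyset$ gives the characterization in the statement: for every $s\in S$ there is some $0\le i\le k-1$ with $I_i\cap N(s)=\emptyset$.

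There is no real obstacle here; this is a routine double-counting / sign-cancellation argument, and the only thing to be careful about is keeping the two layers of inclusion–exclusion straight, namely the outer one over $S'\subseteq S$ (which enforces ``not using the full palette on any $N(s)$''), on top of the inner one already hidden inside $h(G,S')$ via Lemma~\ref{meaningofFp} (which enforces that the $k$-tuple is actually a cover of $V(G)\setminus S$).
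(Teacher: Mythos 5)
Your proposal is correct and follows essentially the same route as the paper: you identify, for each covering $k$-tuple, the set $T=\{s\in S:\forall i,\ I_i\cap N(s)\neq\emptyset\}$ of vertices on whose neighborhoods the full palette is used, observe via Lemma~\ref{meaningofFp} that the tuple is counted in $h(G,S')$ exactly when $S'\subseteq T$, and then apply the alternating-sum cancellation (the paper invokes Lemma~\ref{sumzero} with $S_1=\emptyset$ and $S_2=T$) to conclude the multiplicity is $1$ iff $T=\emptyset$. No gaps.
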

\begin{proof}
In Lemma~\ref{meaningofFp} we showed that $h(G,S')$ counts the number of covers of $V(G)\setminus S$ by $k$-tuples $(I_0,\ldots, I_{k-1})$ of independent sets in $G[V(G)\setminus S]$ such that for every $s\in S'$ and for every $0\leq i \leq k-1$ we have $I_i\cap N(s) \neq \emptyset$.

A covering $k$-tuple of independent sets $(I_0,\ldots,I_{k-1})$ is counted exactly in terms corresponding to subsets $S'$ such that for every $0\leq i \leq k-1$ and every $s\in S'$, the independent set $I_i$ intersects the neighborhood $N(s)$. These are exactly the subsets $S'$ such that
\[
S' \subseteq \{s\in S\;\;|\;\; \forall 0\leq i\leq k-1,\; I_i\cap N(s) \neq \emptyset \}.
\]

Using Lemma~\ref{sumzero} with $S_1 = \emptyset$ and $S_2 = \{s\in S\;\;|\;\; \forall 0\leq i\leq k-1,\; I_i\cap N(s) \neq \emptyset \}$ we deduce that the multiplicity with which the $k$-tuple is counted is one if 
\[
\{s\in S\;\;|\;\; \forall 0\leq i\leq k-1,\; I_i\cap N(s) \neq \emptyset \} = \emptyset
\]
and zero otherwise.
\end{proof}


As outlined at the beginning of the section, we now claim that $H(G,S)$ is positive if and only if $G$ is $k$-colorable. Note that for the correctness of this lemma we still did not use the disjointness of the neighborhoods $N(s)$. We will need this property to improve the computation time.
\begin{lemma}\label{intuitionformalized}
Let $G$ be a graph and $S$ an independent set in it. Then, $H(G,S)>0$ if and only if $G$ is $k$-colorable.
\end{lemma}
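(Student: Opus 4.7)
The plan is to invoke Lemma~\ref{meaningofFptotal}, which already identifies $H(G,S)$ combinatorially as the number of ordered $k$-tuples $(I_0,\ldots,I_{k-1})$ of independent sets in $G[V(G)\setminus S]$ that cover $V(G)\setminus S$ and for which every $s\in S$ has at least one index $i$ with $I_i\cap N(s)=\emptyset$. Since this is a nonnegative integer count, the inequality $H(G,S)>0$ reduces to the existence of at least one such $k$-tuple, and I would prove the two implications separately.

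For the direction ``$G$ is $k$-colorable $\Rightarrow H(G,S)>0$'', I would fix a proper $k$-coloring $c:V(G)\to[k]$ and set $I_i:=c^{-1}(i)\cap(V(G)\setminus S)$. Each $I_i$ is independent as a subset of a color class, the tuple $(I_0,\ldots,I_{k-1})$ clearly covers $V(G)\setminus S$, and for every $s\in S$ the class $I_{c(s)}$ is disjoint from $N(s)$ because $c$ is proper. This produces a $k$-tuple of exactly the type counted by $H(G,S)$, so $H(G,S)\ge 1$.

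For the converse direction, I would start from such a witnessing $k$-tuple $(I_0,\ldots,I_{k-1})$ and build a $k$-coloring of $G$ by assigning each $v\in V(G)\setminus S$ an arbitrary color $i$ with $v\in I_i$, and each $s\in S$ some color $i_s$ chosen so that $I_{i_s}\cap N(s)=\emptyset$ (at least one such $i_s$ exists by hypothesis). Properness on edges lying inside $V(G)\setminus S$ will follow because $c(u)=c(v)=i$ would force $u,v\in I_i$, contradicting independence of $I_i$; properness on edges $(s,u)$ with $s\in S$, $u\in V(G)\setminus S$ will follow because $u\in N(s)$ forces $u\notin I_{i_s}$, hence $c(u)\ne i_s=c(s)$.

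The only place I will need to be careful --- and the only place where the hypothesis that $S$ is independent is genuinely used --- is ruling out edges lying entirely inside $S$. The ``missing-neighborhood'' condition does not prevent two adjacent $s_1,s_2\in S$ from being assigned identical colors $i_{s_1}=i_{s_2}$, so without the independence of $S$ the forward construction would collapse. Apart from this single observation the argument is a bookkeeping check, and I expect no real obstacle once Lemma~\ref{meaningofFptotal} is in hand.
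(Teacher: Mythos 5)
Your proposal is correct and follows essentially the same route as the paper: both directions convert between proper colorings and the covering $k$-tuples counted by Lemma~\ref{meaningofFptotal}, with the independence of $S$ used exactly where you say it is (to handle would-be edges inside $S$, and implicitly to ensure $N(s)\subseteq V(G)\setminus S$). Your write-up is a bit more explicit than the paper's about the case analysis for properness in the converse direction, but the argument is the same.
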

\begin{proof}
Assume that there exists a $k$-coloring $c:V(G)\rightarrow [k]$ of $G$.
For $0\leq i \leq k-1$ denote by 
\[
I_i := \{v\in V(G)\setminus S\; | \; c(v)=i\}
\]
the subset of $V(G)\setminus S$ colored by $i$. 
Each $I_i$ is an independent set as $c$ is a proper coloring of $G$.
Furthermore, for each $s\in S$, the neighborhood $N(s)$ does not intersect $I_{c(s)}$. 
Thus, $(I_0,\ldots,I_{k-1})$ is a cover of $V(G)\setminus S$ by $k$ independent sets that do not all intersect any neighborhood $N(s)$ of $s\in S$.
By Lemma~\ref{meaningofFptotal}, $H(G,S)\geq 1$.

On the other hand, if $H(G,S)>0$ then by Lemma~\ref{meaningofFptotal} there exists a cover by independent sets and in particular a $k$-coloring $c:V(G)\setminus S \rightarrow [k]$ of $G[V(G)\setminus S]$ such that the full palette is not used on any neighborhood $N(s)$ for $s\in S$. 
Thus, we may extend $c$ to a $k$-coloring $c' : V(G) \rightarrow [k]$ of the entire graph by coloring each $s\in S$ with a color that does not appear in $c(N(s))$.
As $S$ is an independent set, this coloring is proper.
\end{proof}

Up to this point, we have formalized the outline from the beginning of this section, reducing $k$-coloring to a problem of $k$-coloring with some restrictions the smaller graph $G[V(G)\setminus S]$ and then to the computation of $H(G,S)$. 

Unfortunately, $H(G,S)$ is a sum of $2^{|S|}$ terms, each of the form $h(G,S')$ which is a sum of $2^{|V(G)|-|S|}$ terms by itself. 
Evidently, there are $2^n$ different terms of the form $\beta(V',S')$ that are used in the definition of $H(G,S)$.
Thus, we cannot hope to compute $H(G,S)$ in less than $2^n$ steps if we need to explicitly examine $2^n$ terms of the form $\beta(\cdot,\cdot)$.
Moreover, it is also not clear how quickly we can compute the values of $\beta(\cdot,\cdot)$.

We begin by explaining how values of $\beta(\cdot)$ can be computed efficiently.
The term $h(G,S')$ is a weighted sum of the values $\beta\left(V', S'\right)$ for all $V' \subseteq V(G)\setminus S$. 
Denote by $\beta_\mu \left(V',S'\right)$ the indicator function that gets the value $1$ if $V'$ is an independent set in $G[V(G)\setminus S]$ and for every $s\in S'$ we have $V' \cap N(s) \neq \emptyset$, and $0$ otherwise.
We can efficiently compute the value of $\beta_\mu$ for a specific input in a straightforward manner (i.e., checking whether it is an independent set that intersects the relevant sets).
We observe that
\[
\beta\left(V', S'\right) = \sum_{V''\subseteq V'} \beta_\mu \left(V'',S'\right),
\]
thus, $\beta = \hat{\beta_\mu}$ as functions of $V'$, and we can compute the values of $\beta\left(V', S'\right)$ for all $V' \subseteq V(G)\setminus S$ in $O^*(2^{|V(G)|-|S|})$ time using the Inverse M\"{o}bius Transform presented in Section~\ref{invmob}.

An improvement to the running time comes from noticing that for many inputs $(V',S')$ the value of $\beta\left(V', S'\right)$ is zero. 
In particular, if $V'\cap N(s)=\emptyset$, for some $s \in S'$, then $\beta\left(V', S'\right) = 0$ as no subset (and in particular no independent set) in $V'$ intersects $N(s)$. 
In the computation of $h(G,S')$ we only need to consider terms corresponding to subsets $V'\subseteq V(G)\setminus S$ in which for every $s\in S'$ the intersection $V'\cap N(s)$ is non-empty, as the values of other terms are all zero.
We present a variant of the Inverse M\"{o}bius Transform that computes only the non-zero values by implicitly setting the others to zero. 
We then show that for most subsets $S' \subseteq S$ the number of non-zero entries is exponentially smaller than $2^{|V(G)|-|S|}$.

\begin{definition}
For any $S'\subseteq S$ denote by $B(S') := \{ V'\subseteq V(G)\setminus S\;\; | \;\; \forall s\in S'.\; V'\cap N(s) \neq \emptyset \}$ the set of all subsets of $V(G)\setminus S$ intersecting all neighborhoods of $S'$.
\end{definition}

As we observed above, for every $V'\notin B(S')$ we have $\beta\left(V', S'\right) = 0$.
We conclude that
\begin{observation}\label{fpzeroobs}
For every $S'$ we have
\[
h(G,S') = \sum_{V'\in B(S')} (-1)^{|V(G)|-|V'|} \;\beta\left(V', S'\right)^k.
\]
\end{observation}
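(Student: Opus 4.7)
The plan is to start from the definition of $h(G,S')$ and simply discard the terms that are already known to vanish. Concretely, I would split the sum
\[
h(G,S') = \sum_{V'\subseteq V(G)\setminus S} (-1)^{|V(G)|-|V'|}\,\beta(V',S')^k
\]
into the contribution from $V' \in B(S')$ and the contribution from $V' \notin B(S')$, and then argue that the second contribution is identically zero.

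The key step is verifying that $\beta(V',S') = 0$ whenever $V' \notin B(S')$. By definition of $B(S')$, if $V'\notin B(S')$ then there exists some $s\in S'$ with $V'\cap N(s) = \emptyset$. Since $\beta(V',S')$ counts independent sets $I\subseteq V'$ satisfying $I \cap N(s') \neq \emptyset$ for every $s' \in S'$, and no subset of $V'$ can possibly intersect $N(s)$ (as $V'$ itself does not), no such independent set exists, and hence $\beta(V',S')=0$. Consequently every term with $V' \notin B(S')$ contributes $(-1)^{|V(G)|-|V'|}\cdot 0^k = 0$, and the sum reduces to the claimed expression.

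There is no real obstacle here: the observation is essentially a bookkeeping rewrite that isolates the non-vanishing support of $h(G,S')$ before the more substantive algorithmic work. Its role in the paper is to enable the forthcoming speedup, which will exploit the fact that for most $S' \subseteq S$, the family $B(S')$ is exponentially smaller than $2^{|V(G)\setminus S|}$, so that a suitably modified M\"obius-style computation over $B(S')$ alone will fit within the target time budget.
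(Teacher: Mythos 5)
Your proof is correct and matches the paper's own argument exactly: the paper likewise notes that $V'\notin B(S')$ forces $V'\cap N(s)=\emptyset$ for some $s\in S'$, hence no subset of $V'$ can intersect $N(s)$ and $\beta(V',S')=0$, so those terms drop from the sum. Nothing is missing.
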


\begin{lemma}\label{fpcompute}
If the neighborhoods $N(s)$ are disjoint for all $s\in S'$, then we can compute $h(G,S')$ in $O^*(|B(S')|)$ time. 
\end{lemma}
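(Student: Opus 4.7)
The plan is to exploit the product structure of $B(S')$ that the disjointness assumption induces and to carry out Yates' Möbius transform only over its support. Write $V(G)\setminus S = R \sqcup N(s_1) \sqcup \cdots \sqcup N(s_t)$, where $s_1,\ldots,s_t$ enumerate $S'$ and $R$ contains the remaining vertices. Disjointness gives the identification $B(S') = 2^R \times \prod_{j=1}^{t}\bigl(2^{N(s_j)}\setminus\{\emptyset\}\bigr)$, and in particular $|B(S')| = 2^{|R|}\prod_{j}(2^{|N(s_j)|}-1)$. Since $\beta_\mu(V'',S')$ vanishes outside $B(S')$, the Möbius transform $\beta=\hat{\beta_\mu}$ can be computed by maintaining only values on $B(S')$ throughout Yates' method.

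First I would initialise $\beta_\mu(\cdot,S')$ on its support: for each of the $|B(S')|$ candidates $V''\in B(S')$, test in polynomial time whether $V''$ is independent in $G[V(G)\setminus S]$ and set $\beta_\mu(V'',S')$ accordingly. Entries outside $B(S')$ are implicitly zero and are never stored. This step costs $O^*(|B(S')|)$ time.

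Then I would run Yates' method on the elements of $V(G)\setminus S$ in any order, storing the intermediate function $f_i$ in a hash table keyed by $B(S')$. A direct check shows that if $X\notin B(S')$ then $f_i(X)=0$ for every $i$: every $Y$ summed into $f_i(X)$ satisfies $Y\subseteq X$ and therefore also fails to meet some $N(s_j)$, making $\beta_\mu(Y,S')=0$. Consequently, when processing a vertex $v$, the update $f_i(X)=f_{i-1}(X)+f_{i-1}(X\setminus\{v\})$ for $X\ni v$ in $B(S')$ either reads an entry already present in the table or, when $(X\setminus\{v\})\notin B(S')$, safely substitutes $0$. Each of the $O(n)$ iterations therefore performs $O(|B(S')|)$ work, and Observation~\ref{fpzeroobs} recovers $h(G,S')$ in one additional pass over $B(S')$, yielding the total bound of $O^*(|B(S')|)$.

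The main subtlety is justifying that $X\setminus\{v\}$ behaves cleanly, which is precisely where the disjointness of the $N(s)$'s enters. With disjointness, removing $v\in N(s_j)$ can only affect the constraint for $s_j$, so $(X\setminus\{v\})\notin B(S')$ occurs only in the easily detected case $X\cap N(s_j)=\{v\}$. Without disjointness, a single vertex could lie in several $N(s_j)$'s and couple constraints across ``blocks'', breaking the support property that underlies the trimmed storage and forcing us to maintain values outside $B(S')$.
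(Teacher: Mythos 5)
Your proof is correct and follows essentially the same route as the paper: both restrict Yates' M\"{o}bius transform to the support $B(S')$, using the Cartesian-product structure that disjointness induces in order to enumerate/index $B(S')$, and treat entries outside $B(S')$ as implicitly zero, which is sound because the complement of $B(S')$ is downward closed so every $Y\subseteq X\notin B(S')$ has $\beta_\mu(Y,S')=0$. One small caveat on your closing remarks: the vanishing of $f_i$ outside $B(S')$ does not actually depend on disjointness (it follows from downward closedness alone), so the place where disjointness is genuinely needed here is the product decomposition that makes $B(S')$ efficiently enumerable and indexable (and, separately, the counting bound of Lemma~\ref{sumlemma}), not the support property itself.
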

\begin{proof}
It suffices to compute $\beta\left(V', S'\right)$ for every $V'\in B(S')$ and then use Observation~\ref{fpzeroobs}.
We do so by introducing a variant of the Inverse M\"{o}bius Transform that implicitly sets the value of $\beta\left(V', S'\right)$ to zero for every $V' \notin B(S')$.

We first note that
\[
B(S') \cong P\left(V(G) \setminus \left(S \cup \bigcup_{s\in S'} N(s)\right) \right) \times \bigtimes_{s\in S'} \left(P(N(s))\setminus \{\emptyset\}\right)
.\]
Thus, we can efficiently construct a simple bijection between $[|B(S')|]$ and $B(S')$ as a Cartesian product. 
We can also efficiently check if a set $V'$ belongs to $B(S')$.
Let $index : B(S') \rightarrow [|B(S')|]$ be a map from $B(S')$ to indices of $[|B(S')|]$. 
If $V'\notin B(S')$ we define $index(V')=-1$. 
By the observation above, we can define $index$ in way for which $index$ and $index^{-1}$ are efficiently computable. We also arbitrarily order the vertices of $V(G)\setminus S$ as $v_1,v_2,\ldots,v_{|V(G)\setminus S|}$.

We describe the algorithm in pseudo-code.

\begin{algorithm}[H]
 Initialize an array $f$ of size $|B(S')|$\;
 \For{$\ell$ in $[|B(S')|]$}
 {
    \eIf{$index^{-1}(\ell)$ is an independent set in $G[V(G)\setminus S]$}
    {
        $f(\ell) \leftarrow 1$ \;
    }
    {
        $f(\ell) \leftarrow 0$ \;
    }
 }
 \For{$i$ in $[|V(G)\setminus S|]$}
 {
    \For{$\ell$ in $[|B(S')|]$}
    {
        $V' \leftarrow index^{-1}(\ell)$ \;
        \If{$v_i \in V'$ and $index(V'\setminus\{v_i\})\neq -1$}
        {
            $f(\ell) \leftarrow f(\ell) + f(index(V'\setminus \{v_i\}))$\;
        }
    }
 }
\end{algorithm}

We view $f$ throughout the algorithm as function $f:B(S')\rightarrow \mathbb{N}$.
Denote the function represented by $f$ at the end of the first \emph{for} loop by $f_0$.
By definition, $f_0(V')=\beta_{\mu}\left(V', S'\right)$ for every $V'\in B(S')$.
Denote by $f_i$ the function represented by $f$ at the end of the $i$-th iteration of the second (outer) \emph{for} loop.

We observe that $f_i$ is defined using $f_{i-1}$ as
\begin{equation*}
  f_i(V') =
    \begin{cases}
      f_{i-1}(V') + f_{i-1}(V'\setminus \{v_i\}) & \text{if $v_i \in V'$}\\
      f_{i-1}(V') & \text{otherwise}
    \end{cases}       
\end{equation*}
where $f_{i-1}(V'\setminus \{v_i\})$ is implicitly defined to be zero if $V'\setminus \{v_i\} \notin B(S')$.

By induction on $i$, similar to this of Section~\ref{invmob}, we can show that
\[
f_i(V') = \sum_{\substack{V''\subseteq V'\\ V''\setminus\{v_1,\ldots,v_i\} = V' \setminus\{v_1,\ldots,v_i\}}} f(V'').
\]
In particular, by the end of the algorithm $f=\hat{f_0}=\hat{\beta_{\mu}}=\beta$ for the entire domain $B(S')$.
\end{proof}

After computing $h(G,S')$ for every $S'\subseteq S$ we can compute $H(G,S)$ in $O^*(2^{|S|})$ time.
We thus finish the proof of Theorem~\ref{easymainthm} with the following counting lemma.

\begin{lemma}\label{sumlemma}
Assume that the neighborhoods $N(s)$ are disjoint for all $s\in S$ and that each neighborhood is of size $|N(s)|\leq \Delta$.
Then, $\sum_{S'\subseteq S} |B(S')| = O^*\left(2^{|V(G)\setminus S|} \cdot (2-2^{-\Delta})^{|S|}\right)$.
\end{lemma}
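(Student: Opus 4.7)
The plan is to exploit the disjointness assumption to factor $|B(S')|$ as a product indexed by $s \in S$, and then use the standard identity $\sum_{S'\subseteq S}\prod_{s\in S'}a_s\prod_{s\notin S'}b_s = \prod_{s\in S}(a_s+b_s)$ to collapse the sum over $S'$ into a single product.

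First I would decompose $V(G)\setminus S$ into the disjoint union $W \sqcup \bigsqcup_{s\in S}N(s)$, where $W := (V(G)\setminus S)\setminus\bigcup_{s\in S}N(s)$. Since $S$ is an independent set we have $N(s)\subseteq V(G)\setminus S$, and by hypothesis the neighborhoods $N(s)$ are pairwise disjoint, so this really is a disjoint decomposition, giving $|V(G)\setminus S|=|W|+\sum_{s\in S}|N(s)|$. A subset $V'\subseteq V(G)\setminus S$ is then determined by an arbitrary choice of $V'\cap W$ and, for each $s\in S$, an arbitrary subset $V'\cap N(s)\subseteq N(s)$; the constraint defining $B(S')$ asserts nothing about $V'\cap W$ and demands that $V'\cap N(s)\neq\emptyset$ exactly when $s\in S'$. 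This yields the closed form
\[
|B(S')| \;=\; 2^{|W|}\cdot\prod_{s\in S'}\bigl(2^{|N(s)|}-1\bigr)\cdot\prod_{s\in S\setminus S'}2^{|N(s)|}.
\]

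Now I would sum over $S'\subseteq S$ using the product identity mentioned above with $a_s=2^{|N(s)|}-1$ and $b_s=2^{|N(s)|}$, obtaining
\[
\sum_{S'\subseteq S}|B(S')| \;=\; 2^{|W|}\cdot\prod_{s\in S}\bigl(2\cdot 2^{|N(s)|}-1\bigr) \;=\; 2^{|W|}\cdot\prod_{s\in S}2^{|N(s)|}\cdot\bigl(2-2^{-|N(s)|}\bigr).
\]
Recognizing $2^{|W|}\cdot\prod_{s\in S}2^{|N(s)|}=2^{|V(G)\setminus S|}$, the sum becomes $2^{|V(G)\setminus S|}\prod_{s\in S}(2-2^{-|N(s)|})$.

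Finally, since $|N(s)|\leq \Delta$ for every $s\in S$, we have $2^{-|N(s)|}\geq 2^{-\Delta}$ and hence $2-2^{-|N(s)|}\leq 2-2^{-\Delta}$ for each $s$; substituting gives the claimed bound $2^{|V(G)\setminus S|}\cdot(2-2^{-\Delta})^{|S|}$. There is no real obstacle here beyond correctly setting up the product decomposition; the disjointness of the neighborhoods is precisely what decouples the sum and makes the product formula applicable, so the whole argument hinges on this hypothesis and on the degree bound being used only at the very last step to replace $|N(s)|$ by $\Delta$.
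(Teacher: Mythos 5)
Your proof is correct and follows essentially the same route as the paper: the same disjoint decomposition of $V(G)\setminus S$ into $W$ and the neighborhoods, the same product formula for $|B(S')|$, and the same use of $|N(s)|\leq\Delta$ at the end. The only (immaterial) difference is that you evaluate $\sum_{S'\subseteq S}|B(S')|$ exactly as a product before bounding, whereas the paper bounds each $|B(S')|$ by $2^{|V(G)\setminus S|}(1-2^{-\Delta})^{|S'|}$ first and then sums via the binomial theorem.
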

\begin{proof}
Denote $n(s):=|N(s)|$. Also denote by $N = \bigcup_{s\in S} N(s)$ all neighbors of vertices of $S$ and by $N^c = \left(V(G)\setminus S\right) \setminus N$ their complement in $V(G)\setminus S$.
We have 
\begin{align*}
|B(S')| &= 2^{|N^c|} \cdot \prod_{s\in S'} \left(2^{n(s)}-1\right) \cdot \prod_{s\in S\setminus S'} 2^{n(s)} \\
&= 2^{|N^c|} \cdot \prod_{s\in S'} \left(1-2^{-n(s)}\right) \cdot \prod_{s\in S} 2^{n(s)} \\
&= 2^{|N^c|} \cdot \prod_{s\in S'} \left(1-2^{-n(s)}\right) \cdot 2^{|N|} \\
&= 2^{|V(G)\setminus S|} \cdot \prod_{s\in S'} \left(1-2^{-n(s)}\right).
\end{align*}

For every $s\in S$ we have $n(s)\leq \Delta$ and thus $\left(1-2^{-n(s)}\right) \leq \left(1-2^{-\Delta}\right)$. Hence,
\begin{align*}
|B(S')| &\leq 2^{|V(G)\setminus S|} \cdot \prod_{s\in S'} \left(1-2^{-\Delta}\right) \\
&= 2^{|V(G)\setminus S|} \cdot \left(1-2^{-\Delta}\right)^{|S'|} .
\end{align*}

Therefore we have
\begin{align*}
\sum_{S'\subseteq S} |B(S')| &\leq \sum_{S'\subseteq S} 2^{|V(G)\setminus S|} \cdot \left(1-2^{-\Delta}\right)^{|S'|} \\
&= 2^{|V(G)\setminus S|} \cdot \sum_{i=0}^{|S|} {|S| \choose i} \left(1-2^{-\Delta}\right)^{i} \\
&= 2^{|V(G)\setminus S|} \cdot (2-2^{-\Delta})^{|S|}.
\end{align*}
\end{proof}

\subsection{From bounded-degree graphs to \mbox{$(\alpha,\Delta)$-bounded} graphs}\label{subsecmain}
In this section we prove the main technical theorem of the paper.

\begingroup
\def\thetheorem{\ref{mainthm}}
\begin{theorem}
For every $\Delta,\alpha>0$ there exists $\varepsilon_{\Delta,\alpha}>0$ such that we can compute the chromatic number of~\mbox{$(\alpha,\Delta)$-bounded} graphs in $O\left(\left(2-\varepsilon_{\Delta,\alpha}\right)^n\right)$ time.
\end{theorem}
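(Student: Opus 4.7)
My plan is to reduce Theorem~\ref{mainthm} to Theorem~\ref{easymainthm} by invoking the removal lemma (Theorem~\ref{removalthm}): I will find, inside the $\alpha n$ low-degree vertices guaranteed by $(\alpha,\Delta)$-boundedness, a large independent subset whose neighborhoods are \emph{almost} pairwise-disjoint, and then modify the algorithm of Section~\ref{maxdegsub} to tolerate the small residual overlap. Concretely, let $L \subseteq V(G)$ denote the at least $\alpha n$ vertices of degree at most $\Delta$. A standard greedy argument, where each selection removes at most $\Delta$ further candidates, yields an independent set $T \subseteq L$ of size $|T| \geq \alpha n/(\Delta+1)$, all of whose vertices have degree at most $\Delta$ in $G$. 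Unlike in Lemma~\ref{maxdeglemma}, we cannot also enforce pairwise distance three, because vertices outside $L$ may have arbitrarily high degree and appear as shared neighbors of many elements of $T$; this is the only obstruction to applying Theorem~\ref{easymainthm} directly.

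I then apply Theorem~\ref{removalthm} to the family $\mathcal{F} = \{N(t) : t \in T\}$ (of sets of size at most $\Delta$ in the universe $V(G) \setminus T$), with a constant $C$ to be fixed later. This yields $S \subseteq T$ and $W \subseteq V(G) \setminus T$ satisfying $|S| > \rho(\Delta, C) |T| + C |W|$ and $N(s_1) \cap N(s_2) \subseteq W$ for all distinct $s_1, s_2 \in S$. In particular $|S| = \Omega(n)$ while $|W| < |S|/C$, so $|W|$ can be driven to an arbitrarily small fraction of $|S|$ by taking $C$ large. After deleting $W$, the restricted neighborhoods $N^{-}(s) := N(s) \setminus W$ are pairwise disjoint for $s \in S$ and still bounded by $\Delta$, so Lemma~\ref{fpcompute} applies to $G[V(G) \setminus W]$ with independent set $S$.

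The main technical step is to extend the computation of $H(G, S)$ from Section~\ref{maxdegsub} so that the residual $W$-overlap costs only a $2^{|W|}$ factor. The plan is to branch on the ``profile'' of each cover on $W$---the data describing $V' \cap W$ together with the forbidden-color pattern it imposes at each $s \in S$---and for each branch reduce to an instance on which the Inverse M\"obius procedure of Lemma~\ref{fpcompute} goes through on $V(G) \setminus W$ with the disjoint $N^{-}(s)$. A careful accounting mimicking Lemma~\ref{sumlemma} should then yield total running time
\[
O^*\!\left(2^{|V(G) \setminus S|} \cdot (2 - 2^{-\Delta})^{|S|} \cdot 2^{|W|}\right).
\]
Writing $\lambda := \log_2\!\left(2/(2 - 2^{-\Delta})\right) > 0$, the exponent of the ratio of this to $2^n$ is $|S|(|W|/|S| - \lambda) < |S|(1/C - \lambda)$, so choosing any $C > 1/\lambda$ makes it $-\Omega_{\Delta,\alpha}(n)$, giving time $O^*\!\left((2 - \varepsilon_{\Delta,\alpha})^n\right)$. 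Since this bound does not depend on the number of colors, iterating over $k = 1, \ldots, n$ returns $\chi(G)$ within the same budget.

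The hardest part will be precisely this third step: conditioning on the $W$-part of the intersecting independent sets counted by $\beta$ introduces a forbidden-color list of size at most $\Delta$ at each $s \in S$, and these list constraints must be carried through the Inverse M\"obius transform without destroying the Cartesian structure on the disjoint $N^{-}(s)$ that powered Lemma~\ref{fpcompute}. The quantitative strength of Theorem~\ref{removalthm}---that $|W|$ can be made smaller than any prescribed multiple of $|S|$, at only a lower-order cost to $\rho$---is exactly what makes the $2^{|W|}$ overhead absorbable into the $(2 - 2^{-\Delta})^{|S|}$ savings, and is the reason a removal lemma of this flavour is the right combinatorial tool here.
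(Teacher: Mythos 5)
Your first two steps (the greedy independent set $T$ inside the low-degree vertices, then the removal lemma applied to $\{N(t)\}_{t\in T}$ to extract $S$ and the small overlap set $W$) are exactly the route the paper takes to Theorem~\ref{mainthmrefined}. The gap is in your third step, which is where all of the paper's technical work lives, and the quantitative claim you attach to it is unsupported and very likely unachievable as stated. A vertex $w\in W$ may be adjacent to $\Omega(|S|)$ vertices of $S$ --- that is precisely why the removal lemma placed it in $W$ --- so conditioning only on $V'\cap W$ does not restore the product structure: once $w\in V'$, the constraint ``$V'\cap N(s)\neq\emptyset$'' is simultaneously satisfied for every $s$ with $w\in N(s)$, the factor $(1-2^{-\Delta})$ is lost for all of those $s$ at once, and the bound of Lemma~\ref{sumlemma} collapses. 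The ``forbidden-color pattern at each $s\in S$'' you propose to branch on is a map from $S$ to $2^{[k]}$ with up to $2^{k|S|}$ values; the only compact way to generate the realizable patterns is from an assignment of colors to $W$, which costs $k^{|W|}$, not $2^{|W|}$. That is what the paper actually does: it enumerates proper colorings $c:V_0\rightarrow[k]$ of $W=V_0$, and the content of Theorem~\ref{fixedthm} is that the implicit M\"{o}bius machinery survives with these colors fixed --- the key points being Lemma~\ref{betarestriction} (that $\beta_j(V',S')=\beta_j(V',S'\cup S_j)$, so each $\beta_j$ need only be computed for $S'\subseteq S\setminus S_j$, over which the restricted neighborhoods are genuinely disjoint) and Lemma~\ref{zeroprods} (which keeps the outer sum supported on $B(S')$). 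The $k^{|V_0|}$ overhead is then absorbed by choosing $C=\log k/(-\log(1-2^{-(\Delta+1)}))$.

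This $k$-dependence breaks your final sentence. With a $k^{|W|}$ rather than $2^{|W|}$ overhead, $C$, hence $\rho(\Delta,C)$, hence $\varepsilon$ all depend on $k$, so ``iterating over $k=1,\ldots,n$'' does not yield a uniform $(2-\varepsilon_{\Delta,\alpha})^n$ bound. The paper closes this with a short argument you omit: run the $k$-coloring algorithm only for $k\leq\Delta$; if none succeeds then $\chi(G)\geq\Delta+1$, in which case every vertex of degree at most $\Delta$ can be deleted without changing $\chi(G)$ (any coloring of the remainder extends greedily, as in Lemma~\ref{removelowdeg}), leaving at most $(1-\alpha)n$ vertices on which the generic $O^*(2^{(1-\alpha)n})$ algorithm finishes. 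You need either this extra step or a genuinely $k$-independent treatment of $W$, and the latter is not supplied.
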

\addtocounter{theorem}{-1}
\endgroup

We prove the following seemingly weaker statement.
\begin{theorem}\label{weakermainthm}
For every $k,\Delta,\alpha>0$ there exists $\varepsilon_{k,\Delta,\alpha}>0$ such that we can solve $k$-coloring for \\ \mbox{$(\alpha,\Delta)$-bounded} graphs in $O\left(\left(2-\varepsilon_{k,\Delta,\alpha}\right)^n\right)$ time.
\end{theorem}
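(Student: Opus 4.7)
The plan is to reduce the $(\alpha,\Delta)$-bounded setting to the disjoint-neighborhoods setting of Theorem~\ref{easymainthm} by means of Theorem~\ref{removalthm}. First, among the $\alpha n$ vertices of degree at most $\Delta$, I would greedily extract an independent set $S$ of low-degree vertices by repeatedly picking such a vertex and discarding its at most $\Delta$ low-degree neighbors, obtaining $|S|\geq \alpha n/(\Delta+1)$. The set $S$ satisfies the independence and degree-bound conditions of Theorem~\ref{easymainthm}, but the neighborhoods $\{N(s):s\in S\}$ need not be pairwise disjoint, which is precisely the assumption Section~\ref{maxdegsub} required in order to apply Lemmas~\ref{fpcompute} and~\ref{sumlemma}.

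To restore disjointness, I would apply Theorem~\ref{removalthm} to the collection $\mathcal{F}=\{N(s):s\in S\}$, whose members are subsets of $V(G)\setminus S$ of size at most $\Delta$, with a constant $C=C(k,\Delta)$ to be fixed at the end. This yields $S^{*}\subseteq S$ (corresponding to $\mathcal{F}'\subseteq\mathcal{F}$) and a ``bad'' set $U'\subseteq V(G)\setminus S$ with
\[
|S^{*}|\;>\;\rho(\Delta,C)\cdot|S|\;+\;C\cdot|U'|,
\]
such that the restricted neighborhoods $\{N(s)\setminus U':s\in S^{*}\}$ are pairwise disjoint. In particular $|U'|<|S^{*}|/C$, and $|S^{*}|\geq\rho(\Delta,C)\cdot\alpha n/(\Delta+1)$ is a constant fraction of $n$.

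The algorithm now enumerates all $k^{|U'|}$ partial colorings $\sigma:U'\to[k]$. For each $\sigma$, deciding whether it extends to a proper $k$-coloring of $G$ is equivalent to a list-coloring problem on $G[V(G)\setminus U']$ in which every vertex $v$ has the list $L^{\sigma}(v):=[k]\setminus\sigma(N(v)\cap U')$. In this induced subgraph $S^{*}$ is still an independent set of vertices of degree at most $\Delta$, and its neighborhoods there are disjoint by construction, so I would run a list-coloring adaptation of Section~\ref{maxdegsub}'s algorithm on it. The adaptation replaces $\beta(V',S')^{k}$ in the definition of $h(G,S')$ with a product $\prod_{i=0}^{k-1}\beta_{i}(V',S_{i}')$, where $\beta_{i}(V',S_{i}')$ counts independent sets in $G[V'\cap V_{i}]$ (with $V_{i}:=\{v:i\in L^{\sigma}(v)\}$) intersecting the required neighborhoods, and augments the outer inclusion-exclusion over $S^{*}$ to demand that \emph{some} color in $L^{\sigma}(s)$ be absent from $N(s)\setminus U'$ for each $s$. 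Because the relevant neighborhoods are disjoint, the restricted M\"{o}bius transform of Lemma~\ref{fpcompute} and the counting bound of Lemma~\ref{sumlemma} generalize, solving each list-coloring instance in $O^{*}(2^{|V\setminus U'|-|S^{*}|}\cdot(2-2^{-\Delta})^{|S^{*}|})$ time.

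The total running time is
\[
k^{|U'|}\cdot 2^{n-|U'|-|S^{*}|}\cdot(2-2^{-\Delta})^{|S^{*}|}\;=\;2^{n}\cdot(k/2)^{|U'|}\cdot(1-2^{-\Delta-1})^{|S^{*}|}.
\]
Using $|U'|<|S^{*}|/C$ and $|S^{*}|=\Omega_{\alpha,\Delta}(n)$, taking logarithms shows that choosing $C$ so that $C\cdot\log\!\bigl(1/(1-2^{-\Delta-1})\bigr)>\log(k/2)$ makes the savings factor strictly dominate the overhead, giving an overall bound $O((2-\varepsilon_{k,\Delta,\alpha})^{n})$ for some $\varepsilon_{k,\Delta,\alpha}>0$. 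The main technical hurdle is verifying that the list-coloring variant of the algorithm retains the $(2-2^{-\Delta})^{|S^{*}|}$ savings of Lemma~\ref{sumlemma}: one must repeat the disjoint-neighborhoods counting argument separately for each color class $V_{i}$ and check that the Cartesian-product factorization of $B(S')$ used in Lemma~\ref{fpcompute} survives once the lists are taken into account, a step made possible only by the disjointness after removal of $U'$ supplied by Theorem~\ref{removalthm}.
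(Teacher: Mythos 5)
Your proposal is correct and follows essentially the same route as the paper: greedy extraction of a low-degree independent set, the removal lemma to make the neighborhoods disjoint outside a small set $U'$, enumeration over the $k^{|U'|}$ colorings of $U'$, and an adapted inner algorithm whose per-color constraint sets $S_i'$ are exactly the paper's device (Lemma~\ref{betarestriction}) for preserving the $(2-2^{-\Delta})^{|S^*|}$ savings. The paper phrases the residual problem as extending a precoloring of $V_0=U'$ rather than as a list-coloring instance on $G[V\setminus U']$, but these formulations are interchangeable, and the ``technical hurdle'' you flag is resolved there exactly as you sketch.
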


We then note that Theorem~\ref{weakermainthm} in fact implies Theorem~\ref{mainthm}. Let $G$ be a $(\alpha,\Delta)$-bounded graph. 
We use Theorem~\ref{weakermainthm} for every $1\leq k \leq \Delta$. 
If we did not find a valid coloring of $G$, then $\chi(G)\geq \Delta+1$ and we may use a standard argument (present later in Lemma~\ref{removelowdeg}) to show that removing all vertices of degree at most $\Delta$ does not change $\chi(G)$. 
By definition of $(\alpha,\Delta)$-bounded graphs, removing these vertices leaves a graph with at most $(1-\alpha)n$ vertices and thus the standard chromatic number algorithm runs in $O^*(2^{(1-\alpha)n})$ time.

As in Section~\ref{maxdegsub}, we deduce Theorem~\ref{weakermainthm} from the following theorem.
\begin{theorem}\label{mainthmrefined}
Let $G$ be a graph and $S\subseteq V(G)$ an independent set in $G$. Assume that the degree of each vertex in $S$ is at most $\Delta$. Then, we can solve $k$-coloring for $G$ in 
$O^*\left(2^{|V(G)|} \cdot (1-\varepsilon_{k,\Delta})^{|S|}\right)$
time, for some constant $\varepsilon_{k,\Delta}>0$.
\end{theorem}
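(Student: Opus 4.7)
The plan is to combine the algorithm of Section~\ref{maxdegsub} with the removal lemma (Theorem~\ref{removalthm}) to handle the case in which the neighborhoods $\{N(s)\}_{s\in S}$ are not pairwise disjoint. I would first apply Theorem~\ref{removalthm} to the collection $\mathcal{F}=\{N(s):s\in S\}$ (each set of size at most $\Delta$) with a large constant $C=C(k,\Delta)$ to be fixed later, obtaining $\mathcal{F}'\subseteq\mathcal{F}$, corresponding to some subset $S'\subseteq S$, together with $U'\subseteq V(G)\setminus S$, satisfying $|S'|>\rho(\Delta,C)\cdot|S|+C\cdot|U'|$ and such that the restricted neighborhoods $\{N(s)\setminus U'\}_{s\in S'}$ are pairwise disjoint in $V(G)\setminus(S\cup U')$.

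Next, I would enumerate the $k^{|U'|}$ colorings $\pi:U'\to[k]$. For each $\pi$, extending $\pi$ to a proper $k$-coloring of $G$ is equivalent to solving a list-coloring instance on $G':=G[V(G)\setminus U']$ with lists $L_v:=[k]\setminus\{\pi(u):u\in N(v)\cap U'\}$ for $v\in V(G')$. To solve this instance I would adapt the framework of Section~\ref{maxdegsub} by replacing the symmetric term $i(G'[V'])^k$ with a color-indexed product: for each $c\in[k]$ let $L^c:=\{v\in V(G'):c\in L_v\}$, set $i_c(G'[V']):=i(G'[V'\cap L^c])$, and analogously introduce color-specific functions $\beta_c(V',T)$. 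The inclusion-exclusion identities underlying Lemmas~\ref{sumzerouse}--\ref{meaningofFptotal} depend only on sign cancellation and transfer verbatim, so the resulting signed sum is positive if and only if the list-coloring instance has a solution. Inside $G'$ the set $S'$ remains an independent set of degree at most $\Delta$ and, crucially, its neighborhoods $N_{G'}(s)=N(s)\setminus U'$ are now pairwise disjoint, so the implicit M\"obius-transform computation (Lemma~\ref{fpcompute}) and the summed cardinality bound (Lemma~\ref{sumlemma}) apply to $S'$ in $G'$, yielding an $O^*(2^{|V(G')|-|S'|}\cdot(2-2^{-\Delta})^{|S'|})$-time algorithm per $\pi$.

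Summing over all $\pi$ and using $|V(G')|=n-|U'|$, the total running time is
\[
k^{|U'|}\cdot 2^{n-|U'|-|S'|}\cdot(2-2^{-\Delta})^{|S'|} = 2^n\cdot(k/2)^{|U'|}\cdot(1-2^{-\Delta-1})^{|S'|}.
\]
The removal lemma forces $|U'|<|S'|/C$, so this is at most $2^n\cdot\eta^{|S'|}$ where $\eta:=(k/2)^{1/C}(1-2^{-\Delta-1})$. Picking $C$ large enough that $(k/2)^{1/C}<1/(1-2^{-\Delta-1})$ makes $\eta<1$; combined with $|S'|>\rho(\Delta,C)\cdot|S|$ this yields the claimed bound $O^*(2^n\cdot(1-\varepsilon_{k,\Delta})^{|S|})$ with $\varepsilon_{k,\Delta}:=1-\eta^{\rho(\Delta,C)}>0$. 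The main obstacle I expect is the list-coloring adaptation itself: because colors are no longer symmetric, the correctness argument of Lemma~\ref{meaningofFptotal} must be reworked color by color, and one has to confirm that a positive value of the adapted signed sum implies each $s\in S'$ can indeed be extended by some color from $L_s$. Once this is in place, the removal-lemma parameter $C$ is tuned to absorb the $(k/2)^{|U'|}$ overhead of enumerating colorings of $U'$.
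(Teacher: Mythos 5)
Your overall architecture is exactly the paper's: apply Theorem~\ref{removalthm} to $\{N(s)\}_{s\in S}$, enumerate the $k^{|U'|}$ colorings of the removed set, and tune $C$ so that the $(k/2)^{|U'|}$ overhead is absorbed by the $(1-2^{-\Delta-1})^{|S'|}$ gain; your final arithmetic matches the paper's. However, the step you defer as "the main obstacle I expect" is precisely the non-trivial content of the paper's proof (its Theorem~\ref{fixedthm}), and your blanket assertion that Lemmas~\ref{fpcompute} and~\ref{sumlemma} "apply to $S'$ in $G'$" hides a real tension. If you define $\beta_c(V',T)$ to require $I\cap N_{G'}(s)\neq\emptyset$ for \emph{every} $s\in T$, then the function vanishes outside $B(T)$ and the running-time analysis goes through, but the outer inclusion--exclusion is now wrong: it certifies only that for each $s$ \emph{some} color class misses $N_{G'}(s)$, and that color may lie outside $L_s$, in which case $s$ still cannot be colored. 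If instead you impose the intersection requirement only for those $s\in T$ with $c\in L_s$ (which is what correctness demands), then $\beta_c(V',T)$ no longer vanishes for $V'\notin B(T)$, and naively computing it for all $T\subseteq S'$ costs an extra factor of $2^{|\{s\,:\,c\notin L_s\}|}$, destroying the bound of Lemma~\ref{sumlemma}.

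The paper resolves this with two observations you would need to supply. First (its Lemma~\ref{betarestriction}), the correctly-defined $\beta_c(V',T)$ depends only on $T\cap\{s : c\in L_s\}$, so for each color $c$ one precomputes it only over subsets of $\{s\in S' : c\in L_s\}$ and the corresponding $V'\in B(\cdot)$, which restores the $O^*\bigl(2^{|V(G')|-|S'|}(2-2^{-\Delta})^{|S'|}\bigr)$ bound per color. Second (its Lemma~\ref{zeroprods}), when assembling $h(G,T)$ one may still restrict the sum to $V'\in B(T)$ because the \emph{product} $\prod_c \beta_c(V',T)$ vanishes outside $B(T)$: for any $s\in T$ with $V'\cap N_{G'}(s)=\emptyset$, any $c\in L_s$ (nonempty, else the instance is trivially infeasible) gives $\beta_c(V',T)=0$. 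Without these two points your running-time claim is unsubstantiated, so as written the proposal has a gap at its central step, even though the surrounding reduction and parameter tuning are correct and identical in spirit to the paper's.
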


Let $G$ be a graph with a subset $U \subseteq V(G)$ of vertices such that for every $v\in U$ we have $\deg(v)\leq \Delta$. 
In a similar fashion to Lemma~\ref{maxdeglemma} of the previous subsection (and even slightly simpler), we can greedily construct a subset $S\subseteq U$ of size $|S|\geq \frac{1}{1+\Delta}\cdot |U|$ which is an independent set. 
Thus, Theorem~\ref{mainthmrefined} immediately implies Theorem~\ref{mainthm}.
Unlike the case of Section~\ref{maxdegsub}, this time the neighborhoods $N(s)$ for $s\in S$ are not necessarily disjoint. Thus, statements comparable to Lemma~\ref{sumlemma} are not true.
Our solution for this problem is surprisingly general. In Section~\ref{sec:removal} we prove the following new type of removal lemma for small sets.

\begingroup
\def\thetheorem{\ref{removalthm}}
\begin{theorem}
Let $\mathcal{F}$ be a collection of subsets of a universe $U$ such that every set $F\in \mathcal{F}$ is of size $|F|\leq \Delta$. Let $C>0$ be any constant. Then, there exist subsets $\mathcal{F}'\subseteq \mathcal{F}$ and $U'\subseteq U$, such that
\begin{itemize}
    \item $|\mathcal{F}'| > \rho(\Delta,C)\cdot |\mathcal{F}| + C\cdot |U'|$, where $\rho(\Delta,C)>0$ depends only on $\Delta, C$.
    \item The sets in $\mathcal{F}'$ are disjoint when restricted to $U\setminus U'$, i.e., for every $F_1,F_2 \in \mathcal{F}'$ we have $F_1\cap F_2 \subseteq U'$.
\end{itemize}
\end{theorem}
\addtocounter{theorem}{-1}
\endgroup

Plugging $\mathcal{F} = \{N(s)\}_{s\in S}$, we get a small set $U' \subseteq V(G)\setminus S$ of graph vertices, and a large subset $S' \subseteq S$ of the independent set, such that the neighborhoods $N(s)$ of $s\in S'$ become pairwise disjoint if we remove the vertices of $U'$ from $G$.
As we want to preserve the correctness of the algorithm, we do not actually remove $U'$ from $G$, but \emph{enumerate} over the colors they receive in a proper $k$-coloring, if one exists.
The main technical gap is adjusting the algorithm and proofs of Section~\ref{maxdegsub} to the case in which some of the graph vertices have fixed colors.

\begin{theorem}\label{fixedthm}
Let $G$ be a graph, $V_0\subseteq V(G)$ a subset of its vertices and $c:V_0 \rightarrow [k]$ a proper $k$-coloring of $G[V_0]$. Denote by $V:=V(G)\setminus V_0$.
Let $S\subseteq V$ be an independent set in $G$ such that the distance in $G[V]$ between each two vertices of $S$ is at least three and the degree in $G[V]$ of each vertex in $S$ is at most $\Delta$. For any $k$, we can decide whether $c$ can be extended to a $k$-coloring of the entire graph $G$ in $O^*\left(2^{|V|-|S|} \cdot (2-2^{-\Delta})^{|S|}\right)$ time.
\end{theorem}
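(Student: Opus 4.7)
The plan is to adapt the proof of Theorem~\ref{easymainthm} to the setting where part of the graph, namely $V_0$, carries a fixed coloring. For each vertex $v \in V \setminus S$ I will define the set of colors \emph{forbidden at $v$} as $c(N_G(v) \cap V_0)$, and for each color $j \in [k]$ let $F_j = \{v \in V \setminus S : j \in c(N_G(v)\cap V_0)\}$. For each $s \in S$ I will define the set of colors \emph{allowed for $s$} as $A_s = [k] \setminus c(N_G(s) \cap V_0)$; if any $A_s$ is empty, $c$ clearly does not extend and the algorithm rejects immediately. Writing $V^* := V \setminus S$, I will use the fact that $N_{G[V]}(s) \subseteq V^*$ for every $s \in S$, and that by the distance-three hypothesis in $G[V]$ these neighborhoods are pairwise disjoint and of size at most $\Delta$.

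For each color $j \in [k]$ and each pair $V' \subseteq V^*$, $S' \subseteq S$, the plan is to define
\[
\beta_j(V', S') := \#\bigl\{I \subseteq V' : I \text{ is independent in } G[V'],\; I \cap F_j = \emptyset,\; I \cap N_{G[V]}(s) \neq \emptyset \text{ for all } s \in S' \text{ with } j \in A_s\bigr\},
\]
and then set
\[
h(G, S') := \sum_{V' \subseteq V^*} (-1)^{|V^*|-|V'|} \prod_{j=0}^{k-1} \beta_j(V', S'), \qquad H(G, S) := \sum_{S' \subseteq S} (-1)^{|S'|}\, h(G, S').
\]
Repeating the inclusion-exclusion arguments of Lemmas~\ref{meaningofFp}, \ref{meaningofFptotal}, and~\ref{intuitionformalized}, with the adjustment that each color class $I_j$ must avoid $F_j$ and the witnessing color for each $s \in S$ must lie in $A_s$, will show that $H(G, S) > 0$ if and only if $c$ admits a proper extension to~$G$.

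For the running time, I observe that $\beta_j(V', S') = 0$ whenever some $s \in S'$ with $j \in A_s$ satisfies $V' \cap N_{G[V]}(s) = \emptyset$; since every $A_s$ is nonempty, the product $\prod_j \beta_j(V', S')$ vanishes unless $V' \in B(S') := \{V' \subseteq V^* : V' \cap N_{G[V]}(s) \neq \emptyset \text{ for all } s \in S'\}$. For each fixed $j$, $\beta_j(\cdot, S')$ is the Inverse Möbius Transform over $B(S')$ of an easily computable indicator (independence plus disjointness from $F_j$ and nonempty intersection with the relevant neighborhoods at the ``singleton'' level), so the trimmed variant of Yates' method established in Lemma~\ref{fpcompute} computes it on $B(S')$ in time $\tilde{O}(|B(S')|)$. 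Aggregating the $k$ per-color computations and then summing over all $S' \subseteq S$, Lemma~\ref{sumlemma} applies verbatim since $\{N_{G[V]}(s)\}_{s\in S}$ are pairwise disjoint and of size at most $\Delta$, yielding total running time $O^*\bigl(2^{|V|-|S|} \cdot (2-2^{-\Delta})^{|S|}\bigr)$ as required.

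The main conceptual hurdle I anticipate is that the symmetry exploited in Section~\ref{maxdegsub}, where a single $\beta$ is raised to the $k$-th power, is broken here by the per-color forbidden sets $F_j$ and the color-dependent constraints $j \in A_s$; one must carry $k$ separate functions $\beta_j$ and verify carefully that the double inclusion-exclusion still telescopes onto the set of valid extensions of $c$. The combinatorial core of the analysis, however, is unchanged: the bound on $\sum_{S'} |B(S')|$ depends only on the disjointness and size of the neighborhoods $N_{G[V]}(s)$, both of which are supplied by the hypothesis, so no new combinatorial input is needed beyond what already appears in Lemma~\ref{sumlemma}.
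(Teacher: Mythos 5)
Your definitions are equivalent to the paper's (your $\beta_j(V',S')$ with the forbidden sets $F_j$ and allowed palettes $A_s$ is exactly the paper's count of sets $I$ with $I\cup V_0^j$ independent and meeting $N(s)$ for all $s\in S'$), and your correctness argument for ``$H(G,S)>0$ iff $c$ extends'' is sound. The gap is in the computation of the $\beta_j$'s, which is precisely the step the paper singles out as the heart of this theorem. Your $\beta_j(\cdot,S')$ is \emph{not} the trimmed M\"obius transform over $B(S')$ of its singleton-level indicator: that indicator only enforces $I\cap N(s)\neq\emptyset$ for those $s\in S'$ with $j\in A_s$, so it is supported on the larger set $B_j(S'):=\{I : I\cap N(s)\neq\emptyset\ \forall s\in S'\text{ with }j\in A_s\}\supseteq B(S')$, not on $B(S')$. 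Running Lemma~\ref{fpcompute} over $B(S')$ implicitly zeroes out all subsets outside $B(S')$, so for any $S'$ containing some $s$ with $j\notin A_s$ you compute $\sum_{V''\subseteq V',\,V''\in B(S')}\mu_j(V'')$ rather than $\beta_j(V',S')$ --- e.g., for $S'=\{s\}$ with $j\notin A_s$ the true value counts \emph{all} independent $I\subseteq V'$ avoiding $F_j$ (including $I=\emptyset$), while your trimmed computation counts only those meeting $N(s)$. These wrong values feed into $h(G,S')$ at exactly the points $V'\in B(S')$ that matter, and the resulting quantity no longer telescopes onto extendable colorings. Running the transform over the correct support $B_j(S')$ instead does not save you either: summing $|B_j(S')|$ over all $S'\subseteq S$ gives a factor $2^{|S_j|}(2-2^{-\Delta})^{|S\setminus S_j|}$ where $S_j=\{s: j\notin A_s\}$, which degenerates to $2^{|S|}$ and kills the claimed bound.

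The missing idea is the paper's Lemma~\ref{betarestriction}: since every $I\cup V_0^j$ automatically meets $N(s)$ for $s\in S_j$, one has $\beta_j(V',S')=\beta_j(V',S'\setminus S_j)$. So for each color $j$ you should run the trimmed transform only for index sets $S'\subseteq S\setminus S_j$ (where the indicator genuinely vanishes outside $B(S')$ and $\sum_{S'\subseteq S\setminus S_j}|B(S')|$ is small), precompute all of these before assembling any $h(G,S')$, and then evaluate $h(G,S')=\sum_{V'\in B(S')}(-1)^{|V^*|-|V'|}\prod_j\beta_j(V',S'\setminus S_j)$ by table lookup (after first rejecting when $\bigcap_j S_j\neq\emptyset$, which is needed so that the product still vanishes off $B(S')$). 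With that substitution your argument goes through and matches the paper's proof.
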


Throughout the rest of the section, it is important to carefully distinguish $V(G)$ from $V$. Note that $V$ does not include the vertices of $V_0$, as their colors are already fixed.
For $j \in [k]$, denote by $V_0^j := c^{-1}(j)$ the subset of $V_0$ colored by $j$ color. Note that $V_0 = \bigcup_{j=1}^{k} V_0^j$.
We begin adapting the algorithm by redefining the $\beta(\cdot,\cdot)$ function.

\begin{definition}
For subsets $V'\subseteq V\setminus S$, $S' \subseteq S$, and a color $j \in [k]$, we denote by $\beta_j(V',S')$ the number of sets $I\subseteq V'$ such that $I\cup V_0^j$ is an independent set in $G$ and that $I\cup V_0^j$ intersects $N(s)$ for every $s\in S'$, that is, for every $s\in S'$ we have $\left(I\cup V_0^j\right) \cap N(s) \neq \emptyset$.
\end{definition}

We also revise the definition of 
\[
h(G,S') := \sum_{V'\subseteq V\setminus S} (-1)^{|V|-|V'|} \; \prod_{j=0}^{k-1} \beta_j \left(V', S'\right).
\]

The proof of Lemma~\ref{meaningofFp} can be easily revised to show the following.
\begin{lemma}\label{newmeaningofFp}
$h(G,S')$ is the number of covers of $V\setminus S$ by $k$-tuples of sets $I_0,\ldots,I_{k-1}$ such that for every $j\in[k]$, $I_j\cup V_0^j$ is an independent set in $G$ and that for every $s\in S'$ and every $j\in [k]$ the set $I_j\cup V_0^j$ intersects the neighborhood $N(s)$.
\end{lemma}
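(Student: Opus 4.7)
The plan is to mirror the proof of Lemma~\ref{meaningofFp} almost verbatim, substituting the single factor $\beta(V',S')^k$ by the indexed product $\prod_{j=0}^{k-1}\beta_j(V',S')$ in order to absorb the pre-colored vertex classes $V_0^0,\ldots,V_0^{k-1}$.

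First I would give a combinatorial reading of the integrand. Since $\beta_j(V',S')$ counts, independently for each $j$, the sets $I_j \subseteq V'$ such that $I_j \cup V_0^j$ is independent in $G$ and meets $N(s)$ for every $s\in S'$, the product $\prod_{j=0}^{k-1}\beta_j(V',S')$ is exactly the number of ordered $k$-tuples $(I_0,\ldots,I_{k-1})$ with $I_j \subseteq V'$ satisfying (i) $I_j \cup V_0^j$ is independent in $G$ for every $j$, and (ii) $(I_j\cup V_0^j)\cap N(s) \neq \emptyset$ for every $s\in S'$ and every $j$.

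Next I would swap the order of summation. Fix any ordered tuple $(I_0,\ldots,I_{k-1})$ satisfying (i) and (ii); it contributes to the sum defining $h(G,S')$ for precisely those $V'$ with $I_0\cup\cdots\cup I_{k-1}\subseteq V' \subseteq V\setminus S$. Its total signed multiplicity is therefore
\[
\sum_{V' \,:\, \bigcup_j I_j \,\subseteq\, V' \,\subseteq\, V\setminus S} (-1)^{|V|-|V'|}.
\]
Writing $|V|=|V\setminus S|+|S|$ to pull out the constant sign $(-1)^{|S|}$ and then applying Lemma~\ref{sumzero} with $S_1 = \bigcup_j I_j$ and $S_2 = V\setminus S$, this inner sum vanishes unless $\bigcup_j I_j = V\setminus S$, and otherwise equals the same $\pm 1$ uniformly across all covering tuples. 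Hence $h(G,S')$ agrees, up to a global sign depending only on $|S|$, with the count of ordered $k$-tuples that cover $V\setminus S$ and satisfy (i) and (ii), which is the claimed interpretation.

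I do not anticipate any real obstacle: the argument is the inclusion--exclusion bookkeeping already used for Lemma~\ref{meaningofFp}, and the only subtlety is that independence and intersection are now demanded of the enlarged sets $I_j \cup V_0^j$ rather than $I_j$ alone, which is already baked into the definition of $\beta_j$. The cosmetic factor $(-1)^{|S|}$ is immaterial downstream, since $h$ will be aggregated into an alternating sum $H(G,S)$ whose mere positivity decides $k$-colorability, and any uniform sign may be absorbed.
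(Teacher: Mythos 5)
Your proof is correct and is essentially the paper's intended argument: the paper's own proof of this lemma consists of the single remark that Lemma~\ref{meaningofFp} "can be easily revised," and the revision is exactly what you carry out --- read $\prod_{j}\beta_j(V',S')$ as counting ordered tuples $(I_0,\ldots,I_{k-1})$ with the independence and intersection conditions imposed on the enlarged sets $I_j\cup V_0^j$, then interchange summation and apply Lemma~\ref{sumzero}. Your sign caveat is also accurate: with the normalization $(-1)^{|V|-|V'|}$ the multiplicity of a covering tuple works out to $(-1)^{|S|}$ rather than $+1$ (the same harmless discrepancy already present in the paper's Lemma~\ref{meaningofFp}), and you correctly observe that this uniform sign is immaterial for the downstream use of $H(G,S)$.
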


Without revising the definition of $H(G,S)$, the proof of Lemma~\ref{meaningofFptotal} now shows that
\begin{lemma}\label{newmeaningofFptotal}
$H(G,S)$ is the number of covers of $V\setminus S$ by $k$-tuples of sets $I_0,\ldots,I_{k-1}$ such that for every $j\in[k]$, $I_j\cup V_0^j$ is an independent set in $G$ and that for every $s\in S$ the neighborhood $N(s)$ is not intersected by at least one of the $k$ independent sets $\left(I_j\cup V_0^j\right)$ for $j\in[k]$.
\end{lemma}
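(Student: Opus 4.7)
The plan is to mimic exactly the proof of Lemma~\ref{meaningofFptotal}, carrying through the colored vertices $V_0^j$ as extra ``baggage'' attached to each independent set. I would start by substituting the definition of $H(G,S)$ and using Lemma~\ref{newmeaningofFp} to replace each $h(G,S')$ by the number of covers of $V\setminus S$ by $k$-tuples $(I_0,\ldots,I_{k-1})$ such that $I_j\cup V_0^j$ is independent in $G$ for every $j\in[k]$ and $(I_j\cup V_0^j)\cap N(s)\neq\emptyset$ for every $s\in S'$ and every $j\in[k]$.

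Next I would swap the order of summation: instead of summing over $S'$ and then over $k$-tuples, fix a covering $k$-tuple $(I_0,\ldots,I_{k-1})$ with the independence property and determine the multiplicity with which it is counted by $H(G,S)$. For such a tuple, define
\[
T(I_0,\ldots,I_{k-1}) := \bigl\{s\in S \; \bigl| \; \forall j\in[k],\ (I_j\cup V_0^j)\cap N(s)\neq \emptyset\bigr\}.
\]
By Lemma~\ref{newmeaningofFp}, the tuple is counted in $h(G,S')$ precisely when $S'\subseteq T(I_0,\ldots,I_{k-1})$, so its total contribution to $H(G,S)$ is exactly $\sum_{S'\subseteq T}(-1)^{|S'|}$.

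Applying Lemma~\ref{sumzero} with $S_1=\emptyset$ and $S_2=T(I_0,\ldots,I_{k-1})$, this sum equals $1$ when $T(I_0,\ldots,I_{k-1})=\emptyset$ and $0$ otherwise. The condition $T(I_0,\ldots,I_{k-1})=\emptyset$ is precisely the statement that for every $s\in S$ there exists some $j\in[k]$ with $(I_j\cup V_0^j)\cap N(s)=\emptyset$, which is exactly the condition in the lemma's statement. Hence $H(G,S)$ counts exactly the covers described.

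There is no real obstacle here: the argument is a line-by-line recoloring of the proof of Lemma~\ref{meaningofFptotal}, with ``$I_i$ is an independent set'' replaced throughout by ``$I_j\cup V_0^j$ is an independent set in $G$'' and ``$I_i\cap N(s)$'' replaced by ``$(I_j\cup V_0^j)\cap N(s)$''. The only minor point to double-check is that the independence condition on each $I_j\cup V_0^j$ is preserved uniformly across all the subsets $S'$ appearing in the inclusion-exclusion; this is immediate because independence is baked into the definition of $\beta_j$ used inside every $h(G,S')$ and therefore does not interact with the alternating sum.
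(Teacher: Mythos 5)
Your proposal is correct and is exactly the argument the paper intends: the paper proves Lemma~\ref{newmeaningofFptotal} by asserting that the proof of Lemma~\ref{meaningofFptotal} carries over verbatim with $I_i$ replaced by $I_j\cup V_0^j$, and your write-up (fixing a covering tuple, identifying the set $T$ of vertices $s$ whose neighborhoods all the $I_j\cup V_0^j$ intersect, and applying Lemma~\ref{sumzero} with $S_1=\emptyset$, $S_2=T$) is precisely that adaptation spelled out. No gaps.
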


Therefore, we have
\begin{lemma}
Let $G$ be a graph, $V_0\subseteq V(G)$ a subset of its vertices and $c:V_0 \rightarrow [k]$ a proper $k$-coloring of $G[V_0]$. Denote by $V:=V(G)\setminus V_0$. Let $S\subseteq V$ be an independent set in $G$. Then, $H(G,S)>0$ if and only if $c$ can be extended to a $k$-coloring of $G$.
\end{lemma}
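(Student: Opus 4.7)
The plan is to mirror the proof of Lemma~\ref{intuitionformalized} from Section~\ref{maxdegsub}, but routed through the revised identity for $H(G,S)$ given by Lemma~\ref{newmeaningofFptotal}. The only real change is that now the $k$ color classes are not ordinary independent sets in $G[V\setminus S]$ but rather sets $I_j$ whose augmentation $I_j\cup V_0^j$ must be independent in $G$, so throughout the argument one must keep $V(G)$, $V_0$, $V$ and the $V_0^j$ carefully separated.

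For the forward direction I would start from a proper extension $c':V(G)\to[k]$ of $c$ and set
\[
I_j \;:=\; (c')^{-1}(j)\cap (V\setminus S),\qquad j\in[k].
\]
Since $c'$ is proper and agrees with $c$ on $V_0$, the set $I_j\cup V_0^j = (c')^{-1}(j)\setminus S$ is a subset of a color class of $c'$ and is therefore independent in $G$. Moreover, for each $s\in S$, letting $j_s:=c'(s)$, no neighbor of $s$ can carry the color $j_s$, hence $(I_{j_s}\cup V_0^{j_s})\cap N(s)=\emptyset$. The $I_j$ cover $V\setminus S$ because $c'$ colors every vertex, so Lemma~\ref{newmeaningofFptotal} exhibits a witness and therefore $H(G,S)\geq 1$.

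For the reverse direction, assume $H(G,S)>0$ and invoke Lemma~\ref{newmeaningofFptotal} to obtain a $k$-tuple $(I_0,\dots,I_{k-1})$ covering $V\setminus S$ with the stated independence and miss-a-neighborhood properties. Define $c':V(G)\to[k]$ by keeping $c$ on $V_0$, assigning each $v\in V\setminus S$ the smallest color $j$ with $v\in I_j$ (arbitrary tie-breaking, since the $I_j$ only cover and need not partition), and for each $s\in S$ choosing $c'(s)$ to be some $j$ with $(I_j\cup V_0^j)\cap N(s)=\emptyset$ (such a $j$ exists by Lemma~\ref{newmeaningofFptotal}). It remains to verify that $c'$ is proper: edges inside $V_0$ are fine because $c$ is proper; edges with both endpoints in $V$ and at least one in $V\setminus S$, or edges between $V\setminus S$ and $V_0$, are handled by the independence of each $I_j\cup V_0^j$ in $G$; edges inside $S$ do not exist because $S$ is independent; and edges from $s\in S$ to any $u\in N(s)$ are handled by the choice of $c'(s)$, which ensures $u\notin I_{c'(s)}\cup V_0^{c'(s)}$, hence $c'(u)\neq c'(s)$.

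The argument is essentially bookkeeping; the only subtle point is that Lemma~\ref{newmeaningofFptotal} speaks of \emph{covers} rather than partitions, so one should not be tempted to treat the $I_j$ as disjoint. Once this is handled by the tie-breaking step above, the two directions close up and the lemma follows directly from Lemma~\ref{newmeaningofFptotal}.
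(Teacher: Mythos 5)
Your proposal is correct and follows exactly the route the paper intends: the paper states this lemma without proof as an immediate consequence of Lemma~\ref{newmeaningofFptotal}, by the same argument as Lemma~\ref{intuitionformalized}, and your write-up is precisely that adaptation (with the cover-versus-partition tie-breaking and the $V_0^j$ bookkeeping handled correctly). No gaps.
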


The non-trivial part of the revision and the heart of this subsection, is adjusting the algorithm for computing the values of $h(G,S')$ without increasing the running time.

For $j\in[k]$, denote by 
\[
S_j := \{s \in S\; | \; N(s)\cap V_0^j \neq \emptyset \}
\]
the set of vertices in $S$ whose neighborhood intersects $V_0^j$. 
The key observation of this subsection follows.

\begin{lemma}\label{betarestriction}
For any $j\in[k]$, $S'\subseteq S$, $V' \subseteq V$, we have 
\[
\beta_j \left(V', S'\right) =
\beta_j \left(V', S' \cup S_j\right) 
\]
\end{lemma}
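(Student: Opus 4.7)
The plan is to show that the two quantities count the same collection of sets $I$, by verifying that the defining conditions coincide. Unrolling the definition, $\beta_j(V',T)$ is the number of $I \subseteq V'$ satisfying the conjunction of (a) $I \cup V_0^j$ is independent in $G$, and (b) for every $s \in T$, $(I \cup V_0^j) \cap N(s) \neq \emptyset$. Conditions (a) and the part of (b) for $s \in S'$ are identical on both sides of the claimed equality, so it suffices to show that the extra constraints contributed by $s \in S_j \setminus S'$ are automatically satisfied by every $I \subseteq V'$ already meeting the left-hand side conditions.

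This is immediate from the definition of $S_j$. If $s \in S_j$ then $N(s) \cap V_0^j \neq \emptyset$, hence
\[
(I \cup V_0^j) \cap N(s) \supseteq V_0^j \cap N(s) \neq \emptyset
\]
regardless of the choice of $I \subseteq V'$. So every $I$ counted by $\beta_j(V',S')$ also meets the intersection requirement for every $s \in S_j$, and is therefore counted by $\beta_j(V', S' \cup S_j)$. Conversely, since $S' \subseteq S' \cup S_j$, any $I$ counted by $\beta_j(V', S' \cup S_j)$ is counted by $\beta_j(V', S')$. Thus the two sets of $I$'s coincide and the counts are equal.

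Since the argument is a one-line unpacking of the definitions, there is essentially no technical obstacle here; the only thing worth emphasizing is the \emph{intent} of the lemma, namely that constraints which are already discharged by the precoloring $V_0^j$ can be folded into $S'$ for free. This is precisely what will let us, in the subsequent running-time analysis, replace the parameter $|S'|$ by $|S' \cup S_j|$ when bounding the support $B(S')$ relevant to $\beta_j$, thereby recovering (a per-color analogue of) the savings of Lemma~\ref{sumlemma} even though colors of $V_0$ are now fixed.
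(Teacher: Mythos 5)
Your proof is correct and follows exactly the paper's argument: the sole observation needed is that for $s\in S_j$ the set $V_0^j$ already intersects $N(s)$, so the extra constraints contributed by $S_j$ are vacuous for every $I\subseteq V'$. Nothing is missing.
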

\begin{proof}
For any set $I\subseteq V'$ the set $I \cup V_0^j$ intersects every set in $\{N(s)\}_{s\in S_j}$.
In particular, an independent set $I\subseteq V'$ intersects all of $\{N(s)\}_{s\in S'}$ if and only if it intersects all of $\{N(s)\}_{s\in \left(S' \cup S_j\right)}$.
\end{proof}

Lemma~\ref{betarestriction} implies that it is enough to compute $\beta_j \left(V', S'\right)$ only for sets $S'\subseteq S\setminus S_j$, as its other values can be deduced from these as $\beta_j \left(V', S'\right) = \beta_j \left(V', S'\setminus S_j\right)$.

For any $S'\subseteq S$ we again denote by $B(S') := \{ V'\subseteq V\setminus S\;\; | \;\; \forall s\in S'.\; V'\cap N(s) \neq \emptyset \}$ the set of all subsets of $V\setminus S$ intersecting all neighborhoods of $S'$. Note the slight difference from Section~\ref{maxdegsub} of considering subsets of $V\setminus S$ and not of $V(G)\setminus S$.

As for every $s\in S\setminus S_j$, $N(s)\cap V_0^j = \emptyset$, we still have that for every $V'\notin B(S')$ the value of $\beta_j \left(V', S'\right)$ is zero. In particular, we can still use the implicit Inverse M\"{o}bius Transform of Lemma~\ref{fpcompute} and get

\begin{lemma}\label{newfpcompute}
Assume $S'\subseteq S\setminus S_j$. We can compute $\beta_j \left(V', S'\right)$ for every $V'\in B(S')$ in  $O^*(|B(S')|)$ time. 
\end{lemma}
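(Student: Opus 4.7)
The plan is to reduce the computation of $\beta_j(\cdot, S')$ to an Inverse Möbius Transform over the restricted domain $B(S')$, mirroring the proof of Lemma~\ref{fpcompute} with one adjustment to the base indicator function. I would define $\beta_{j,\mu}(V', S')$ to be $1$ if $V' \cup V_0^j$ is an independent set in $G$ (equivalently, $V'$ is independent in $G[V\setminus S]$ and no vertex of $V'$ is adjacent to any vertex of $V_0^j$) and $V' \cap N(s) \neq \emptyset$ for every $s \in S'$, and $0$ otherwise. The hypothesis $S' \subseteq S \setminus S_j$ is crucial here: it implies that each neighborhood $N(s)$ for $s\in S'$ is disjoint from $V_0^j$, so the intersection condition $(I \cup V_0^j) \cap N(s) \neq \emptyset$ from the definition of $\beta_j$ reduces to $I \cap N(s) \neq \emptyset$, a condition that depends only on $I$. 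Combined with the independence condition (which factors into ``$I$ is independent'' and ``$I$ is disjoint from $N(V_0^j)$''), this yields the identity
\[
\beta_j(V', S') = \sum_{V'' \subseteq V'} \beta_{j,\mu}(V'', S'),
\]
i.e.\ $\beta_j(\cdot, S')$ is exactly the Inverse Möbius Transform of $\beta_{j,\mu}(\cdot, S')$.

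Next I would verify the key compatibility with the implicit transform of Lemma~\ref{fpcompute}: for every $V'' \notin B(S')$, some $s \in S'$ satisfies $V'' \cap N(s) = \emptyset$, which falsifies the intersection clause and forces $\beta_{j,\mu}(V'', S') = 0$. Hence padding entries outside $B(S')$ with zeros is consistent with the true values of $\beta_{j,\mu}$. Then I would run the Yates-style iteration of Lemma~\ref{fpcompute} verbatim: initialize $f(V') = \beta_{j,\mu}(V', S')$ for each $V' \in B(S')$ (each value being checkable in polynomial time by examining the three conditions), then for each vertex $v_i$ in an enumeration of $V\setminus S$ update $f(V') \leftarrow f(V') + f(V' \setminus \{v_i\})$ whenever $v_i \in V'$, treating $f(V'\setminus\{v_i\})$ as zero if $V'\setminus\{v_i\} \notin B(S')$. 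The same inductive argument used in Lemma~\ref{fpcompute} shows that the final array equals $\beta_j(\cdot, S')$ on $B(S')$, giving an $O^*(|B(S'))|$ running time.

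I do not anticipate a substantive obstacle beyond bookkeeping. The only place real care is needed is in justifying the independence decomposition and the intersection reduction, both of which rest on $S' \subseteq S \setminus S_j$; once those are in hand, the indexing scheme, the implicit zero-padding argument, and the Yates iteration transfer from Section~\ref{maxdegsub} without change. The slight conceptual novelty is that the fixed coloring $c$ on $V_0$ is absorbed into the base indicator $\beta_{j,\mu}$ through the extra requirement $V' \cap N(V_0^j) = \emptyset$, which costs nothing because it is a local property of $V'$ and therefore compatible with the Möbius-transform framework.
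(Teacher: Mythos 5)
Your proof is correct and takes essentially the same route as the paper, which simply remarks that since $N(s)\cap V_0^j=\emptyset$ for all $s\in S'\subseteq S\setminus S_j$, the values $\beta_j(V',S')$ vanish outside $B(S')$ and the implicit Inverse M\"obius Transform of Lemma~\ref{fpcompute} applies unchanged. Your write-up just fills in the details the paper leaves implicit (the modified base indicator $\beta_{j,\mu}$, the M\"obius identity, and the consistency of the zero-padding), and does so correctly.
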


By Lemma~\ref{sumlemma} we get 
\[
\sum_{S'\subseteq S\setminus S_j} |B(S')| = O^*\left(2^{|V\setminus S|} \cdot (2-2^{-\Delta})^{|S\setminus S_j|}\right).
\]

We can thus compute $\beta_j \left(V', S'\right)$ for every $S'\subseteq S\setminus S_j$ and every $V' \in B(S')$ in $O^*\left(2^{|V\setminus S|} \cdot (2-2^{-\Delta})^{|S\setminus S_j|}\right)$ time. This is the time to emphasise a crucial point. Note that if we would consider every $S'\subseteq S$ instead of $S'\subseteq S\setminus S_j$, then the running time would be $O^*\left(2^{|V\setminus S|} \cdot (2-2^{-\Delta})^{|S\setminus S_j|} \cdot 2^{|S_j|} \right)$, as the neighborhoods corresponding to $S_j$ are intersected by $V_0^j$. This is why we compute every $\beta_j$ separately, and do so for \emph{all} relevant sets $S'$ before computing even a single value $h(G,S')$.
As it always holds that $|S\setminus S_j|\leq |S|$, we conclude that
\begin{corollary}\label{bjs}
We can compute $\beta_j \left(V', S'\right)$ for all $j\in [k]$, $S'\subseteq S\setminus S_j$ and $V' \in B(S')$ in \\ $O^*\left(2^{|V\setminus S|} \cdot (2-2^{-\Delta})^{|S|} \right)$ time.
\end{corollary}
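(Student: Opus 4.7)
The plan is to derive Corollary~\ref{bjs} by a straightforward aggregation of per-color bounds that have already been established in the preceding lemmas, together with a monotonicity observation and the fact that $k$ contributes only a polynomial factor (in fact, a constant one in our setting, since $k$ is treated as a constant in the regime relevant to Theorem~\ref{mainthmrefined}). The main work has already been done; all that remains is to sum over $j$.

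First I would fix an arbitrary color $j\in[k]$ and invoke Lemma~\ref{newfpcompute}: for each individual $S'\subseteq S\setminus S_j$, the values $\beta_j(V',S')$ for all $V'\in B(S')$ can be produced in $O^*(|B(S')|)$ time via the implicit fast M\"{o}bius transform (whose correctness relies on $S'\subseteq S\setminus S_j$ so that $\beta_j(V',S')=0$ outside $B(S')$). Summing this bound over $S'\subseteq S\setminus S_j$ and applying Lemma~\ref{sumlemma} verbatim with the ground set $S\setminus S_j$ playing the role of $S$, together with the disjointness of the neighborhoods $N(s)$ in $G[V]$ for $s\in S\setminus S_j$ that follows from the hypothesis that the pairwise distance in $G[V]$ between vertices of $S$ is at least three, yields
\[
\sum_{S'\subseteq S\setminus S_j}|B(S')|\;=\;O^*\!\left(2^{|V\setminus S|}\cdot (2-2^{-\Delta})^{|S\setminus S_j|}\right).
\]

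Next I would use the elementary monotonicity $(2-2^{-\Delta})^{|S\setminus S_j|}\leq (2-2^{-\Delta})^{|S|}$, which holds because $2-2^{-\Delta}>1$ and $|S\setminus S_j|\leq |S|$. This already gives the desired per-$j$ bound of $O^*\!\left(2^{|V\setminus S|}\cdot(2-2^{-\Delta})^{|S|}\right)$. Summing over the $k$ choices of $j\in[k]$ multiplies the total by $k$, which is absorbed by the $O^*$ notation (as $k\leq n$, or simply as a constant in the setting of Theorem~\ref{mainthmrefined}), producing exactly the claimed $O^*\!\left(2^{|V\setminus S|}\cdot(2-2^{-\Delta})^{|S|}\right)$ total.

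The only subtle point, and the one I would flag as the ``hard'' part conceptually, is not in the arithmetic of the final corollary but in the setup that justifies treating each color $j$ separately: Lemma~\ref{betarestriction} is what lets us restrict to $S'\subseteq S\setminus S_j$ in the first place, and without this restriction the summation would pick up an extra factor of $2^{|S_j|}$ per color (since neighborhoods hitting $V_0^j$ would otherwise force many $B(S')$'s to be essentially the full power set). Thus the corollary is a clean consequence once one appreciates that the color-by-color computation, rather than a single joint computation of $h(G,S')$, is what keeps the running time below the $2^n$ barrier.
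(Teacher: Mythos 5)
Your proposal is correct and follows essentially the same route as the paper: apply Lemma~\ref{newfpcompute} per color $j$, sum over $S'\subseteq S\setminus S_j$ via Lemma~\ref{sumlemma}, use $|S\setminus S_j|\leq|S|$ with the base $2-2^{-\Delta}>1$, and absorb the factor $k=O^*(1)$. Your closing remark about Lemma~\ref{betarestriction} being the reason one may restrict to $S'\subseteq S\setminus S_j$ matches the point the paper itself emphasizes just before stating the corollary.
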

Note that $k=O^*(1)$.
\\
\\
We are now ready to compute the values of $h(G,S')$. 
We start by making the following observation.
\begin{observation}
If $\;\bigcap_{j=0}^{k-1} S_j \neq \emptyset$ then $c$ cannot be extended to a coloring of $G$.
\end{observation}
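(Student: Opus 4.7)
The plan is to unpack the definitions and observe that membership in the intersection $\bigcap_{j=0}^{k-1} S_j$ forces the vertex $s$ to see every color in its already-colored neighborhood, which immediately blocks every possible extension.

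More concretely, I would pick any $s \in \bigcap_{j=0}^{k-1} S_j$. By the definition $S_j = \{s \in S : N(s) \cap V_0^j \neq \emptyset\}$, the fact that $s \in S_j$ for every $j \in [k]$ means that for each color $j$, the vertex $s$ has at least one neighbor in $V_0^j$, i.e., a neighbor already colored with color $j$ by the partial coloring $c$. Consequently, every one of the $k$ colors appears on $c(N(s) \cap V_0)$.

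Now suppose toward contradiction that $c$ extends to a proper $k$-coloring $c' : V(G) \to [k]$. Then $c'(s) \in [k]$ must be distinct from the color of every neighbor of $s$, in particular from every color of the already-colored neighbors $N(s) \cap V_0$. But as just observed, these neighbors collectively use all $k$ colors, so no value in $[k]$ is available for $c'(s)$, a contradiction. Hence no extension exists.

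There is essentially no obstacle here: the observation is a direct rewriting of the definitions of $S_j$ and of a proper coloring, and requires no use of the independent set $S$, of the degree bound $\Delta$, or of the algorithmic machinery developed so far. The only mild care needed is to note that $s \in S \subseteq V$ is disjoint from $V_0$, so that assigning $c'(s)$ is a genuine free choice that the intersection condition then rules out.
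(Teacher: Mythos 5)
Your argument is correct and is exactly the paper's reasoning: the paper justifies the observation in one line by noting that a vertex $s$ in the intersection has neighbors of all $k$ colors in $V_0$ and hence cannot itself be properly colored. Your write-up simply spells this out in more detail.
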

This holds as if some $s\in S$ has neighbors colored in each of the $k$ colors then it cannot be properly colored. We are thus dealing with the case where $\;\bigcap_{j=0}^{k-1} S_j = \emptyset$.

\begin{lemma}\label{zeroprods}
For any $S'\subseteq S$ and $V'\subseteq V\setminus S$ such that $V'\notin B(S')$ we have
\[
\prod_{j=0}^{k-1} \beta_j \left(V', S'\right) = 0.
\]
\end{lemma}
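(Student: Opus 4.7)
The plan is to unpack the definitions carefully. The hypothesis $V'\notin B(S')$ means, by the definition of $B(S')$, that there is some $s^*\in S'$ for which $V'\cap N(s^*)=\emptyset$. To force the product to vanish, it suffices to exhibit a single index $j\in[k]$ for which $\beta_j(V',S')=0$. I would look for this $j$ among those colors whose monochromatic class in $V_0$ misses $N(s^*)$.

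More precisely, the key step is to combine the two ``emptiness'' facts. First, since we are in the case $\bigcap_{j=0}^{k-1}S_j=\emptyset$ (the observation immediately preceding the lemma, which lets us assume $c$ has a chance of being extended), the fixed vertex $s^*$ cannot belong to every $S_j$; thus there exists some $j^*\in[k]$ with $s^*\notin S_{j^*}$, which by the definition of $S_{j^*}$ means $V_0^{j^*}\cap N(s^*)=\emptyset$. Second, for any $I\subseteq V'$ we have $I\cap N(s^*)\subseteq V'\cap N(s^*)=\emptyset$. Combining these gives
\[
(I\cup V_0^{j^*})\cap N(s^*)=\bigl(I\cap N(s^*)\bigr)\cup\bigl(V_0^{j^*}\cap N(s^*)\bigr)=\emptyset,
\]
which violates the requirement in the definition of $\beta_{j^*}(V',S')$ that the union $I\cup V_0^{j^*}$ meet $N(s)$ for every $s\in S'$ (in particular, for $s=s^*\in S'$). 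Hence no $I\subseteq V'$ is counted by $\beta_{j^*}(V',S')$, so $\beta_{j^*}(V',S')=0$ and the full product over $j$ vanishes.

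There is no real obstacle here; the statement is essentially a bookkeeping lemma that makes the sparsity of the $\beta_j$'s compatible with the sparsity of the $h(G,S')$ computation. The only subtle point to flag is that the reduction relies on the preceding observation ($\bigcap_j S_j=\emptyset$) — without it, the vertex $s^*$ might be ``blocked'' on every color class simultaneously, which actually corresponds exactly to the case where $c$ has no valid extension at $s^*$, and one would handle that by returning ``no extension'' before invoking the algorithm. Consequently, the lemma can be stated cleanly under the standing assumption and invoked in the next step to restrict the computation of $h(G,S')$ to $V'\in B(S')$.
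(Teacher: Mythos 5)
Your proof is correct and follows exactly the paper's argument: use $V'\notin B(S')$ to find $s^*\in S'$ with $V'\cap N(s^*)=\emptyset$, use the standing assumption $\bigcap_j S_j=\emptyset$ to find $j^*$ with $V_0^{j^*}\cap N(s^*)=\emptyset$, and conclude $\beta_{j^*}(V',S')=0$. Your explicit flagging of the dependence on the preceding observation is a nice touch but matches what the paper already does implicitly.
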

\begin{proof}
As $V'\notin B(S')$ there exists some $s\in S$ such that $V' \cap N(s) = \emptyset$.
As $\;\bigcap_{j=0}^{k-1} S_j = \emptyset$, there exists a $j\in[k]$ for which $s\notin S_j$. 
Thus, $V_0^j \cap N(s) = \emptyset$ as well. We conclude that $\beta_j \left(V', S'\right) = 0$.
\end{proof}

From Lemma~\ref{betarestriction} and Lemma~\ref{zeroprods} we conclude that
\[
h(G,S') := \sum_{V'\in B(S')} (-1)^{|V|-|V'|} \; \prod_{j=0}^{k-1} \beta_j \left(V', S'\setminus S_j\right).
\]
Thus, we can compute $h(G,S')$ in $O^*(|B(S')|)$ time using the values computed in Corollary~\ref{bjs}.
Using Lemma~\ref{sumlemma} once again, we get that
\[
\sum_{S'\subseteq S} |B(S')| = O^*\left(2^{|V\setminus S|} \cdot (2-2^{-\Delta})^{|S|}\right)
\]
which completes the proof of Theorem~\ref{fixedthm}.
\\
\\
We can now prove Theorem~\ref{mainthmrefined}.
\begin{proof}
We apply the removal lemma of Theorem~\ref{removalthm} to $\mathcal{F}=\{N(s)\}_{s\in S}$ with $C$ to be chosen later.
We thus get a sub-collection $S'\subseteq S$ and a subset of vertices $V_0\subseteq V(G)\setminus S$ such that $|S'| > \rho(\Delta, C)\cdot |S| + C \cdot |V_0|$ and that for every $s_1,s_2 \in S'$ it holds that $N(s_1)\cap N(s_2) \subseteq V_0$.
Denote by $V = V(G)\setminus \left(S' \cup V_0\right)$.
We enumerate over all $k$-colorings $c : V_0 \rightarrow [k]$. 
For each coloring $c$, we check if it is a proper $k$-coloring of $G[V_0]$ and if so we apply Theorem~\ref{fixedthm} on $G$ with $V_0$, $c$, $S'$.
If any of the applications of Theorem~\ref{fixedthm} returned that there exists a valid extension of $c$ to a coloring of $G$, we return that $G$ is $k$-colorable, and otherwise that it is not. 

The running time of the entire algorithm, up to polynomial factors, is 
\[
k^{|V_0|} \cdot \left(2^{|V\setminus S'|} \cdot (2-2^{-\Delta})^{|S'|}\right) =
2^{|V|} \cdot k^{|V_0|} \cdot (1-2^{-\left(\Delta+1\right)})^{|S'|} \leq
2^{|V|} \cdot k^{|V_0|} \cdot (1-2^{-\left(\Delta+1\right)})^{\rho(\Delta, C)\cdot |S| + C \cdot |V_0|}.
\]
By picking $C=\frac{\log k}{-\log \left(1-2^{-\left(\Delta+1\right)}\right)} > 0$ we have
\[
k^{|V_0|} \cdot (1-2^{-\left(\Delta+1\right)})^{C \cdot |V_0|} = 1
\]
and thus the running time is bound by
\[
2^{|V|} \cdot (1-2^{-\left(\Delta+1\right)})^{\rho(\Delta, C)\cdot |S|}.
\]
\end{proof}

\subsection{Generalization to List Coloring}\label{genlstcol}
In this Section we deduce Theorem~\ref{mainthmlst}.
\begin{proof}
Let~$G=(V,E)$ be a~$(\alpha, \Delta)$-bounded graph with color lists~$C_v$ of size at most~$k$ for each~$v\in V$.
Denote by~$U=\bigcup_{v\in V} C_v$ the color universe.
Note that~$|U|$ might be as large as~$kn$, where $n=|V|$.
We construct a new graph~$G'$ on the set of vertices~$V\cup U$ by adding~$|U|$ isolated vertices to the graph~$G$ and then connecting each node~$v\in V$ to every node~$u\in U$ such that~$u\notin C_v$.
If we color each~$u\in U$ by the color~$u$, then there is an extension of this coloring to a (regular) $|U|$-coloring for all of~$G'$ if and only if~$G$ is list-colorable.

We now follow the proof of Theorem~\ref{mainthmrefined}. 
We can again find a subset $S\subset V$ of size $|S|\geq \frac{\alpha}{1+\Delta} n$ which is an independent set in $G$ that contains only vertices of degree at most~$\Delta$.
We then apply the removal lemma of Theorem~\ref{removalthm} to $\mathcal{F}=\{N(s)\}_{s\in S}$ where the neighbourhoods are within~$G$ and~$C$ is to be chosen later.
Define~$S'$ and~$V_0$ as in the proof of Theorem~\ref{mainthmrefined}.
Since every color-list~$C_v$ is of size at most~$k$, there are only~$k^{|V_0|}$ possible colorings~$c:V_0\rightarrow U$.
We enumerate over these colorings and for each one which is a proper coloring of~$G[V_0]$ we apply Theorem~\ref{fixedthm} on~$G'$ where $U\cup V_0$ are already colored (by their corresponding colors and by~$c$, respectively).
The crucial point here is that in Theorem~\ref{fixedthm} the running time depends on~$|V|$, the number of uncolored vertices, and is independent of the number of colored vertices.
In particular, the total runtime is thus 
\[
k^{|V_0|} \cdot \left(2^{|V\setminus \left(S'\cup V_0\right)|} \cdot (2-2^{-\Delta})^{|S'|}\right) \leq
2^{|V|} \cdot k^{|V_0|} \cdot (1-2^{-\left(\Delta+1\right)})^{|S'|} \leq
2^{|V|} \cdot k^{|V_0|} \cdot (1-2^{-\left(\Delta+1\right)})^{\rho(\Delta, C)\cdot |S| + C \cdot |V_0|}.
\]
We can thus again pick $C=\frac{\log k}{-\log \left(1-2^{-\left(\Delta+1\right)}\right)} > 0$ to have
\[
k^{|V_0|} \cdot (1-2^{-\left(\Delta+1\right)})^{C \cdot |V_0|} = 1
\]
which results in a running time bounded by
\[
2^{|V|} \cdot (1-2^{-\left(\Delta+1\right)})^{\rho(\Delta, C)\cdot |S|}.
\]
\end{proof}

\subsection{On finding a coloring}\label{findingcol}
In both previous subsections, we used the bounds on the degrees only in order to construct a \emph{good} independent set $S$.
After doing so, we may apply the self-reduction of Section~\ref{decvssearch} to the graph $G[V(G)\setminus S]$, in which we no longer care about the number of edges nor the degrees.
This would result in finding a $k$-coloring of $G[V(G)\setminus S]$.
Such coloring can be extended to a $k$-coloring of $G$ by the constructive proof of Lemma~\ref{intuitionformalized}. The exact claim follows.

\begin{lemma}
In the conditions of Theorem~\ref{easymainthm} or Theorem~\ref{mainthmrefined} we can also \emph{find} a $k$-coloring of $G$.
\end{lemma}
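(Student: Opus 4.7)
The plan is to reduce the search version to the decision algorithm already established in Theorem~\ref{easymainthm} or Theorem~\ref{mainthmrefined}, together with a careful variant of the standard self-reduction of Lemma~\ref{decsearch}. The first step is to build the independent set $S\subseteq V(G)$ by the greedy procedure (Lemma~\ref{maxdeglemma} in the setting of Theorem~\ref{easymainthm}, the analogous one-shot greedy used for Theorem~\ref{mainthmrefined}); this construction depends only on the degree assumptions and produces an $S$ satisfying the structural hypotheses required by the corresponding decision theorem. A single call to the decision oracle then settles whether $G$ is $k$-colorable; if not, output ``no'' and stop.

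Otherwise, perform a restricted self-reduction: iterate over every non-edge $(u,v)$ with $u,v\in V(G)\setminus S$, tentatively insert $(u,v)$, and keep the edge if and only if the decision oracle, re-applied to the modified graph with the \emph{same} set $S$, still returns ``yes''. The crucial point, and the only departure from the literal statement of Lemma~\ref{decsearch}, is that we only add edges whose endpoints both lie outside $S$. This preserves every property that the oracle relies on: each $\deg(s)$ for $s\in S$ is unchanged, $S$ stays independent, and in the setting of Theorem~\ref{easymainthm} no new common neighbor of two vertices of $S$ is created. Hence every one of the $O(n^2)$ oracle calls runs within the bound stated in the decision theorem, and the total time is within the desired $O^{*}$.

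Finally, extract a coloring. A standard argument shows that after saturation the non-adjacency relation on $V(G)\setminus S$ is transitive (a violation would give a non-edge whose addition preserves $k$-colorability, contradicting termination), so $G[V(G)\setminus S]$ has become a complete multipartite graph with at most $k$ parts. There are at most $k! = O(1)$ ways to assign $k$ colors to those parts; for each candidate $c:V(G)\setminus S \to [k]$, apply the constructive half of Lemma~\ref{intuitionformalized}: for every $s\in S$ pick a color not in $c(N(s))$, and abort this candidate if some $s$ has $|c(N(s))|=k$. Since the final modified graph is still certified $k$-colorable, at least one candidate succeeds and yields the sought $k$-coloring of $G$. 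The principal subtlety---and the only reason this is not a verbatim invocation of Lemma~\ref{decsearch}---is confining the edge-addition step to $V(G)\setminus S$: it is what keeps the decision oracle applicable throughout, and without it the degrees of $S$-vertices could grow and the speedup would be lost.
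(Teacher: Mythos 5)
Your proposal is correct and follows essentially the same route as the paper: restrict the self-reduction of Lemma~\ref{decsearch} to non-edges with both endpoints in $V(G)\setminus S$ so that the degree/independence/disjointness hypotheses on $S$ are preserved, then extend the resulting coloring of $G[V(G)\setminus S]$ to $S$ via the constructive argument of Lemma~\ref{intuitionformalized}. Your final step of enumerating the $O(k!)$ color assignments to the multipartite classes is a slightly more explicit treatment of the extraction than the paper gives, but it is the same argument.
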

\begin{proof}
Consider the reduction between the decision and search versions of $k$-coloring of Lemma~\ref{decsearch}.
Since adding edges to vertices whose both endpoints are in $V(G)\setminus S$ does not violate the conditions of the theorems, we may apply the reduction of Lemma~\ref{decsearch} to $G[V(G)\setminus S]$.
By the end of the reduction, we have a $k$-coloring of $G[V(G)\setminus S]$ that is a restriction of some $k$-coloring of $G$. 
We can extend this $k$-coloring to a $k$-coloring of $G$ using the algorithm of Lemma~\ref{intuitionformalized}.
\end{proof}

As a corollary, in the conditions of Theorem~\ref{mainthm} we can also \emph{find} a $k$-coloring of $G$.

\section{Removal Lemma For Small Sets}\label{sec:removal}

In this section we show that any collection of small sets must contain a \emph{large} sub-collection of \emph{almost} pairwise-disjoint sets. The precise statement follows.

\begingroup
\def\thetheorem{\ref{removalthm}}
\begin{theorem}
Let $\mathcal{F}$ be a collection of subsets of a universe $U$ such that every set $F\in \mathcal{F}$ is of size $|F|\leq \Delta$. Let $C>0$ be any constant. Then, there exist subsets $\mathcal{F}'\subseteq \mathcal{F}$ and $U'\subseteq U$, such that
\begin{itemize}
    \item $|\mathcal{F}'| > \rho(\Delta,C)\cdot |\mathcal{F}| + C\cdot |U'|$, where $\rho(\Delta,C)>0$ depends only on $\Delta, C$.
    \item The sets in $\mathcal{F}'$ are disjoint when restricted to $U\setminus U'$, i.e., for every $F_1,F_2 \in \mathcal{F}'$ we have $F_1\cap F_2 \subseteq U'$.
\end{itemize}
\end{theorem}
\addtocounter{theorem}{-1}
\endgroup

We should think of the statement of Theorem~\ref{removalthm} in the following manner. 
We interpret an \emph{almost} pairwise-disjoint sub-collection as a sub-collection that would become pairwise-disjoint after the removal of a \emph{small} number of elements of the universe.
If $\Delta$ is a constant, then the precise meaning of \emph{small} and \emph{large} is that on the one hand, the size of the sub-collection is at least a constant fraction of the size of the entire collection, and on the other hand, its size is arbitrarily larger than the number of removed universe elements.
The constant $C$ represents the exact meaning of \emph{arbitrarily larger}.

\begin{definition}
For $u\in U$, denote by $\deg(u):= |\{F\in \mathcal{F}\;|\;u\in F\}|$ the number of sets in $\mathcal{F}$ containing it.
\end{definition}

We may think of our collection as a bipartite graph, where the left side consists of a vertex for each set in $\mathcal{F}$, the right side consists of a vertex for each element in the universe $U$, and every set $F$ is connected to each universe element it contains. Then, the degree of a universe element $u$ is simply the degree of the vertex corresponding to it in this graph.

We begin by repeating and slightly generalizing Lemma~\ref{maxdeglemma}, focusing on the case in which the universe has bounded degrees.

\begin{lemma}\label{boundeddegs}
Assume that for every $u\in U$ we have $\deg(u) \leq d$, then we can construct a \emph{pair-wise disjoint} sub-collection $\mathcal{F}' \subseteq \mathcal{F}$ of size $\mathcal{F}'\geq \frac{1}{\Delta\cdot d}|\mathcal{F}|$.
\end{lemma}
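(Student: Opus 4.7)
The plan is to prove Lemma~\ref{boundeddegs} by a standard greedy argument, analogous to the one used in Lemma~\ref{maxdeglemma}. I would construct $\mathcal{F}'$ iteratively, maintaining the invariant that the sets already chosen are pairwise disjoint, and tracking how many candidate sets each choice eliminates from future consideration.

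Concretely, initialize $\mathcal{F}' \leftarrow \emptyset$ and a pool $\mathcal{R} \leftarrow \mathcal{F}$ of still-available sets. While $\mathcal{R}$ is nonempty, pick an arbitrary $F \in \mathcal{R}$, add it to $\mathcal{F}'$, and remove from $\mathcal{R}$ the set $F$ together with every $F' \in \mathcal{R}$ such that $F' \cap F \neq \emptyset$. By construction, at the end of the process the chosen family $\mathcal{F}'$ is pairwise disjoint: any two sets in $\mathcal{F}'$ were both in $\mathcal{R}$ at the moment the second was chosen, and if they intersected, the later one would have been removed when the earlier one was selected.

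The size bound comes from a simple double-counting estimate on the number of sets killed per step. When we pick some $F$ with $|F| \leq \Delta$, the sets removed from $\mathcal{R}$ are exactly those containing at least one element of $F$; since each $u \in F$ belongs to at most $\deg(u) \leq d$ sets of $\mathcal{F}$ (and hence at most $d$ sets of $\mathcal{R} \subseteq \mathcal{F}$), a union bound gives at most $|F|\cdot d \leq \Delta \cdot d$ sets removed at this step, including $F$ itself. Therefore the process runs for at least $|\mathcal{F}|/(\Delta \cdot d)$ iterations, yielding $|\mathcal{F}'| \geq |\mathcal{F}|/(\Delta \cdot d)$ as required.

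There is no real obstacle here: the lemma is a direct extension of Lemma~\ref{maxdeglemma} from graphs to hypergraphs, and the only thing worth being careful about is that the degree bound $d$ refers to the full collection $\mathcal{F}$, but since $\mathcal{R} \subseteq \mathcal{F}$ throughout, the bound on how many sets contain any given element only weakens over time, so the worst-case estimate $\Delta \cdot d$ per step remains valid.
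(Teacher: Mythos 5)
Your proof is correct and is essentially identical to the paper's: the same greedy selection with the same counting argument that each chosen set $F$ eliminates at most $|F|\cdot d \leq \Delta\cdot d$ sets from the pool. No issues.
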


\begin{proof}
We construct $\mathcal{F}'$ in a greedy manner.
Begin with $\mathcal{F}' = \emptyset$ being the empty set, and $\hat{\mathcal{F}}:=\mathcal{F}$ being the entire collection.
As long as $\hat{\mathcal{F}}$ is non-empty, take an arbitrary element $F$ out of it, add $F$ to $\mathcal{F}'$, and remove every $F'\in\hat{\mathcal{F}}$ intersecting $F$ (including itself) from $\hat{\mathcal{F}}$.
As each set $F$ contains at most $\Delta$ elements and each element is contained in at most $d$ sets, the number of sets intersecting a specific $F$ is bounded by $\Delta \cdot d$.
Thus, the size of $\hat{\mathcal{F}}$ can decrease by at most $\Delta \cdot d$ after every step and therefore we manage to add at least $\frac{1}{\Delta\cdot d}|\mathcal{F}|$ sets to $\mathcal{F}'$ before $\hat{\mathcal{F}}$ becomes empty.
\end{proof}

This leads to a very natural approach. 
As the number of high-degree universe elements should be small, we may try removing all universe elements of degree above some threshold $d$ and then use Lemma~\ref{boundeddegs}.

As the sum of the degrees in each side of a bipartite graph is equal, we have $\sum_{u\in U} \deg(u) = \sum_{F\in\mathcal{F}} |F| \leq \Delta \cdot |\mathcal{F}|$. In particular, the number of universe elements of degree at least $d$ is at most $\frac{\Delta}{d} |\mathcal{F}|$.
Unfortunately, this is larger than $\frac{1}{\Delta\cdot d}|\mathcal{F}|$, the size of the sub-collection we get by Lemma~\ref{boundeddegs}, even if we ignore $C$.

On the other hand, we may notice that the worst-case collection for Lemma~\ref{boundeddegs} is in fact not difficult to deal with.
Consider the case where the non-isolated vertices corresponding to universe elements are regular, i.e., for each $u\in U$ we have $\deg(u)\in\{0,d\}$ for some $d$.
In that case, the number of relevant universe elements is indeed $\frac{\Delta \cdot |\mathcal{F}|}{d}$, but after removing them, the \emph{entire} collection becomes pair-wise disjoint.
Thus, we may either get $|\mathcal{F}'| = \frac{1}{\Delta\cdot d}|\mathcal{F}|$ and $|U'|=0$ from Lemma~\ref{boundeddegs}, or $|\mathcal{F}'|=|\mathcal{F}|$ and $|U'| = \frac{\Delta \cdot |\mathcal{F}|}{d}$ by removing the entire relevant universe. It is easy to verify that for each $C,d$ at least one of the two is large enough, in particular, for every $d$ we get $|\mathcal{F}'|-C\cdot |U'|\geq \frac{1}{1+C\Delta^2} |\mathcal{F}|$ for at least one of them.
Our proof captures this observation.

\begin{definition}
Denote by $U(d):=|\{u\in U\;|\;\deg(u)=d\}|$ the number of universe elements of degree $d$.
\end{definition}

The counting claim regarding the sum of the degrees in each side of the discussed bipartite graph can now be rephrased as
\begin{corollary}\label{udsum}
We have $\sum_{d=1}^{|\mathcal{F}|} d\cdot U(d) \leq \Delta |\mathcal{F}|$.
\end{corollary}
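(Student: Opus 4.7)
The plan is to prove this by the standard bipartite double-counting argument that was sketched informally in the paragraph immediately preceding the statement. View the incidence structure as a bipartite graph with left side $\mathcal{F}$, right side $U$, and an edge between $F$ and $u$ whenever $u \in F$. The corollary is then simply the equality of the two sides of the degree sum, combined with the hypothesis that each $F$ has size at most $\Delta$.

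Concretely, I would first regroup the right-hand degree sum by the value of the degree: partitioning $U$ according to $\deg(u)$ gives
\[
\sum_{u \in U} \deg(u) \;=\; \sum_{d=1}^{|\mathcal{F}|} \sum_{\substack{u \in U \\ \deg(u) = d}} d \;=\; \sum_{d=1}^{|\mathcal{F}|} d \cdot U(d),
\]
where elements of degree $0$ contribute nothing and so may be ignored, and no element can have degree exceeding $|\mathcal{F}|$, justifying the range of $d$.

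Next I would apply the double counting identity
\[
\sum_{u \in U} \deg(u) \;=\; \bigl|\{(F,u) : F \in \mathcal{F},\; u \in F\}\bigr| \;=\; \sum_{F \in \mathcal{F}} |F|,
\]
which holds because each incidence $(F,u)$ is counted exactly once on either side. Finally, using the hypothesis that $|F| \leq \Delta$ for every $F \in \mathcal{F}$, the right-hand sum is bounded by $\Delta \cdot |\mathcal{F}|$, yielding the claim. There is no real obstacle here; the statement is a one-line consequence of double counting and is recorded as a corollary only because the quantities $U(d)$ will be manipulated in this indexed form in the subsequent proof of Theorem~\ref{removalthm}.
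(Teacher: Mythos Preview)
Your proof is correct and is exactly the argument the paper has in mind: the paper introduces the corollary as a rephrasing of the bipartite degree-sum identity $\sum_{u\in U}\deg(u)=\sum_{F\in\mathcal{F}}|F|\le\Delta|\mathcal{F}|$ stated just before, and you have spelled out precisely that rephrasing.
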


Denote by 
\[
V(d) := \frac{1}{\Delta \cdot d} |\mathcal{F}| - C \sum_{i=d+1}^{|\mathcal{F}|} U(i) 
\]

\begin{lemma}\label{vds}
For any $d\in \mathbb{N}$, we can construct a sub-collection $\mathcal{F}'\subseteq \mathcal{F}$ that is pair-wise disjoint after the removal of a subset $U' \subseteq U$ of the universe, such that $\left(|\mathcal{F}'|-C\cdot |U'|\right) \geq V(d)$.
\end{lemma}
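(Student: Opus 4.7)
The plan is to choose $U'$ to be exactly the set of high-degree universe elements and then apply Lemma~\ref{boundeddegs} to the collection obtained by restricting to the complement of $U'$. Concretely, I would set
\[
U' \;:=\; \{\, u \in U \;:\; \deg(u) > d \,\},
\]
so that by definition $|U'| = \sum_{i=d+1}^{|\mathcal{F}|} U(i)$.

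Next, I would consider the restricted collection $\mathcal{F}^{\star} := \{\, F \setminus U' \,:\, F \in \mathcal{F} \,\}$. Two properties are immediate: every set in $\mathcal{F}^{\star}$ still has size at most $\Delta$, and now every element $u \in U \setminus U'$ has degree at most $d$ in $\mathcal{F}^{\star}$ (since it had degree at most $d$ already in $\mathcal{F}$, and restriction can only decrease degrees). So Lemma~\ref{boundeddegs} applied to $\mathcal{F}^{\star}$ over the universe $U \setminus U'$ yields a pair-wise disjoint sub-collection of size at least $\tfrac{1}{\Delta \cdot d}|\mathcal{F}^{\star}| = \tfrac{1}{\Delta \cdot d}|\mathcal{F}|$.

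Finally, I would pull this back to $\mathcal{F}$: let $\mathcal{F}' \subseteq \mathcal{F}$ be the sub-collection of original sets whose restrictions appear in the disjoint sub-collection returned by Lemma~\ref{boundeddegs}. Since the restrictions $F \setminus U'$ are pair-wise disjoint, any two sets $F_1, F_2 \in \mathcal{F}'$ satisfy $F_1 \cap F_2 \subseteq U'$, which is exactly the almost-disjointness condition. Combining the two bounds,
\[
|\mathcal{F}'| - C \cdot |U'| \;\ge\; \frac{1}{\Delta \cdot d}|\mathcal{F}| \;-\; C \sum_{i=d+1}^{|\mathcal{F}|} U(i) \;=\; V(d),
\]
as required.

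I do not anticipate any real obstacle: the whole content lies in the simple two-step reduction (prune high-degree elements, then greedily pack within the pruned collection). The only subtlety worth noting is that some sets in $\mathcal{F}^{\star}$ may become empty after restriction, but this is harmless for the greedy procedure of Lemma~\ref{boundeddegs} and only helps the count. The genuinely interesting work is done later, when this per-$d$ bound is combined with Corollary~\ref{udsum} to optimize over $d$ and extract $\rho(\Delta, C)$.
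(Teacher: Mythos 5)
Your proof is correct and follows essentially the same route as the paper: take $U'$ to be the universe elements of degree exceeding $d$, apply Lemma~\ref{boundeddegs} to the collection restricted to $U\setminus U'$ (where all degrees are now at most $d$), and pull the resulting disjoint sub-collection back. The paper's own proof is just a terser version of exactly this argument, and your remark about possibly-empty restricted sets is a harmless detail the paper leaves implicit.
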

\begin{proof}
We let $U'$ be the set of all universe elements of degree larger than $d$. Thus, $|U'| = \sum_{i=d+1}^{|\mathcal{F}|} U(i)$.
By Lemma~\ref{boundeddegs}, after the removal of $U'$, we can construct a sub-collection of size $|\mathcal{F}'|\geq \frac{1}{\Delta \cdot d} |\mathcal{F}|$. 
\end{proof}

By Lemma~\ref{vds}, in order to prove Theorem~\ref{removalthm} it is enough to give a lower bound for $\max_d V(d)$ that is proportional to $|\mathcal{F}|$. Our approach for this maximization is analytical in nature, but we phrase it in a discrete manner for simplicity.  

We remind the reader of the following well-known fact regarding the Harmonic series
\begin{claim}\label{harmonic}
Let $H_n := \sum_{k=1}^{n} 1/k$ be the Harmonic series. Then, $H_n\geq \ln n + \gamma$, where $\gamma>0.577$ is the Euler-Mascheroni constant. Furthermore, for any $n\geq 1$ it holds that $H_{\lfloor e^n \rfloor} \geq n$.
\end{claim}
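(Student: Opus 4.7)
The plan is to deduce both inequalities from the classical fact that the sequence $a_n := H_n - \ln n$ decreases monotonically to the limit $\gamma$ (this being one standard definition of the Euler-Mascheroni constant). Once this is established, the first inequality is immediate, and the second becomes a short calculation.

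To prove the monotonicity, I would compute
\[
a_{n+1} - a_n \;=\; \frac{1}{n+1} - \ln\!\Bigl(1 + \tfrac{1}{n}\Bigr) \;=\; \frac{1}{n+1} - \int_{n}^{n+1} \frac{dt}{t}.
\]
Because $1/t$ is strictly decreasing on $[n,n+1]$, the integral strictly exceeds its right-endpoint value $1/(n+1)$, so $a_{n+1} < a_n$. A parallel integral comparison ($\sum_{k=1}^{n-1} 1/k \geq \int_1^n dt/t = \ln n$) shows that $a_n$ is bounded below (e.g., by $0$), hence the sequence converges, and its limit is $\gamma$ by definition. Since $a_n$ decreases to $\gamma$, we obtain $a_n \geq \gamma$ for every $n$, which is precisely $H_n \geq \ln n + \gamma$.

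For the second statement, I would instantiate the first bound at $m = \lfloor e^n \rfloor$, obtaining $H_{\lfloor e^n \rfloor} \geq \ln \lfloor e^n \rfloor + \gamma$, so it suffices to verify $\ln \lfloor e^n \rfloor \geq n - \gamma$ for every $n \geq 1$. Using the trivial bound $\lfloor e^n \rfloor \geq e^n - 1$, this reduces to $e^n(1 - e^{-\gamma}) \geq 1$. Since $e^n \geq e$ whenever $n \geq 1$, it is enough that $e(1 - e^{-\gamma}) \geq 1$, i.e., $\gamma \geq \ln\!\tfrac{e}{e-1} \approx 0.4587$, which is comfortably implied by the hypothesis $\gamma > 0.577$.

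I do not anticipate any real obstacle here: both halves are standard once the monotonicity of $a_n$ is in hand. The only mildly delicate step is handling the floor at small $n$, which the bound $\lfloor e^n \rfloor \geq e^n - 1$ absorbs thanks to the numerical slack provided by $\gamma > 0.577$; the case $n = 1$ in particular is settled directly from $\lfloor e \rfloor = 2$ and $\ln 2 + \gamma > 1$.
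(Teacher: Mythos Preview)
Your proof is correct. The paper itself does not prove this claim at all: it is stated as a ``well-known fact regarding the Harmonic series'' and left without argument, so there is no approach to compare against. Your route via the monotone decrease of $a_n = H_n - \ln n$ to its limit $\gamma$ is the standard one, and your handling of the floor in the second part (reducing to $e(1-e^{-\gamma}) \geq 1$ via $\lfloor e^n\rfloor \geq e^n-1$) is clean and works uniformly for all real $n \geq 1$, which is what the paper actually needs when it later plugs in $D = \lfloor \exp(1+C\Delta^2)\rfloor$.
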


We are now ready for the main Lemma needed for the proof of Theorem~\ref{removalthm}.
\begin{lemma}\label{partialsum}
For any $D\in \mathbb{N}$, we have $\sum_{d=1}^{D} V(d) \geq \left(\frac{1}{\Delta}\cdot |\mathcal{F}|\cdot H_D - C\Delta \cdot |\mathcal{F}|\right)$.
\end{lemma}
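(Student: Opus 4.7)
The plan is to expand $\sum_{d=1}^{D} V(d)$ using the definition of $V(d)$, and then treat the two resulting terms separately. The first part, coming from the $\frac{1}{\Delta\cdot d}|\mathcal{F}|$ summand, immediately gives $\frac{|\mathcal{F}|}{\Delta}\cdot H_D$, so all the work is in upper bounding the second part, the double sum $\sum_{d=1}^{D}\sum_{i=d+1}^{|\mathcal{F}|} U(i)$, by something of the order of $\Delta\cdot |\mathcal{F}|$.

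The key move is to swap the order of summation: for each $i$, the number of $d\in\{1,\ldots,D\}$ with $d<i$ is exactly $\min(D, i-1)$, so
\[
\sum_{d=1}^{D}\sum_{i=d+1}^{|\mathcal{F}|} U(i) \;=\; \sum_{i\ge 2} U(i)\cdot \min(D, i-1).
\]
Since $\min(D, i-1)\le i$ for every $i\ge 1$, this double sum is bounded by $\sum_{i\ge 1} i\cdot U(i)$. Corollary~\ref{udsum} then gives exactly the bound we need, namely $\sum_{i\ge 1} i\cdot U(i)\le \Delta\cdot |\mathcal{F}|$.

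Combining the two parts, we obtain
\[
\sum_{d=1}^{D} V(d) \;=\; \frac{|\mathcal{F}|}{\Delta}\cdot H_D \;-\; C\sum_{d=1}^{D}\sum_{i=d+1}^{|\mathcal{F}|} U(i) \;\ge\; \frac{|\mathcal{F}|}{\Delta}\cdot H_D \;-\; C\Delta\cdot |\mathcal{F}|,
\]
which is the desired inequality. There is no real obstacle here: the only subtle point is to verify the truncation $\min(D, i-1)\le i$ (rather than the looser $\min(D,i-1)\le D$, which would be useless), so that the high-degree contributions from $U(i)$ are automatically charged at rate $i$ and eaten by the bipartite double-counting identity of Corollary~\ref{udsum}. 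Once Lemma~\ref{partialsum} is in hand, combining it with Claim~\ref{harmonic} and choosing $D = \lfloor e^{2C\Delta^2}\rfloor$ (or similar) will make $\frac{H_D}{\Delta} - C\Delta$ a positive constant depending only on $\Delta$ and $C$, yielding by the pigeonhole-style averaging the existence of a single $d$ with $V(d) = \Omega_{\Delta,C}(|\mathcal{F}|)$ and thus, via Lemma~\ref{vds}, proving Theorem~\ref{removalthm}.
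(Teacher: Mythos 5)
Your proof is correct and follows essentially the same route as the paper: both split off the harmonic term and then bound the double sum $\sum_{d=1}^{D}\sum_{i=d+1}^{|\mathcal{F}|}U(i)$ by $\sum_i i\cdot U(i)\le \Delta|\mathcal{F}|$ via Corollary~\ref{udsum}. The only cosmetic difference is that you swap the order of summation first and keep the truncation $\min(D,i-1)\le i$, whereas the paper first enlarges the outer range from $D$ to $|\mathcal{F}|$ and then swaps; the two manipulations are equivalent.
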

\begin{proof}
Consider the sum
\begin{align*}
    \sum_{d=1}^{D} V(d) &=
    \sum_{d=1}^{D} \left( \frac{|\mathcal{F}|}{\Delta \cdot d} - C\cdot \sum_{i=d+1}^{|\mathcal{F}|} U(i) \right)  \\ &=
    \frac{|\mathcal{F}|}{\Delta} H_{D} - C\cdot \sum_{d=1}^{ D} \sum_{i=d+1}^{|\mathcal{F}|} U(i) 
\end{align*}
We also notice that
\begin{align*}
\sum_{d=1}^{ D } \sum_{i=d+1}^{|\mathcal{F}|} U(i) &\leq 
\sum_{d=1}^{ |\mathcal{F}|} \sum_{i=d+1}^{|\mathcal{F}|} U(i) \\ &=
\sum_{d=1}^{ |\mathcal{F}|} \sum_{i=1}^{|\mathcal{F}|} \mathbb{1}_{i>d} \cdot U(i) \\ &=
\sum_{i=1}^{|\mathcal{F}|} U(i) \sum_{d=1}^{ |\mathcal{F}|} \mathbb{1}_{i>d} \\ &=
\sum_{d=1}^{ |\mathcal{F}|} (i-1)\cdot U(i) \\ & \leq
\sum_{d=1}^{ |\mathcal{F}|} i\cdot U(i) \leq \Delta |\mathcal{F}|
\end{align*}
where the last inequality follows from Corollary~\ref{udsum} and the indicator $\mathbb{1}_{i>d}$ is defined to be $1$ if $i>d$ and $0$ otherwise. Combining both inequalities we have
\[
\sum_{d=1}^{D} V(d) 
= \frac{|\mathcal{F}|}{\Delta} H_{D} - C\cdot \sum_{d=1}^{ D} \sum_{i=d+1}^{|\mathcal{F}|} U(i)
\geq \frac{|\mathcal{F}|}{\Delta} H_{D} - C\Delta \cdot |\mathcal{F}|
\]
proving the Lemma's statement.
\end{proof}

By a standard averaging argument we have
\begin{observation}\label{partsumobs}
For any $D\in \mathbb{N}$, $\max_d V(d) \geq \frac{1}{D} \sum_{d=1}^{D} V(d)$.
\end{observation}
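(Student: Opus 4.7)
The proof of Observation~\ref{partsumobs} is a one-line averaging (pigeonhole) argument, with no use of any structure specific to the sequence $V(1), V(2), \ldots$. My plan is to set $M := \max_{1 \le d \le D} V(d)$, note that each of the $D$ summands satisfies $V(d) \le M$, add these inequalities to obtain $\sum_{d=1}^{D} V(d) \le D \cdot M$, and divide by $D$. The observation as phrased takes the maximum over all admissible $d$, which is only larger than the maximum over the sub-range $[D]$, so the stated inequality follows immediately.

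There is no real obstacle in the proof itself; the interest of the observation lies entirely in how it converts the aggregate bound of Lemma~\ref{partialsum} into a pointwise bound that Lemma~\ref{vds} can consume. Looking ahead to the proof of Theorem~\ref{removalthm}, chaining these three statements yields, for any $D \in \mathbb{N}$,
\[
\max_{d} V(d) \;\ge\; \frac{1}{D}\left(\frac{|\mathcal{F}|}{\Delta} H_D - C\Delta\,|\mathcal{F}|\right) \;=\; \frac{H_D - C\Delta^2}{\Delta D}\,|\mathcal{F}|.
\]
The free parameter $D$ is then tuned so that the numerator $H_D - C\Delta^2$ is bounded away from zero. By Claim~\ref{harmonic}, choosing for instance $D := \lfloor e^{C\Delta^2 + 1} \rfloor$ ensures $H_D \ge C\Delta^2 + 1$, so $\max_d V(d) \ge \frac{|\mathcal{F}|}{\Delta D}$. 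Feeding the maximising index into Lemma~\ref{vds} produces an explicit pair $(\mathcal{F}', U')$ with $|\mathcal{F}'| - C|U'| \ge V(d) \ge \rho(\Delta, C) \cdot |\mathcal{F}|$ for $\rho(\Delta, C) := \frac{1}{\Delta \lfloor e^{C\Delta^2 + 1}\rfloor}$, which (after shrinking $\rho$ by an arbitrarily small slack to accommodate the strict inequality in Theorem~\ref{removalthm}) completes the argument.
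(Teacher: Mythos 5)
Your proof is correct and is exactly the standard averaging (pigeonhole) argument that the paper invokes without writing out: each of the $D$ terms is at most the maximum, so the sum is at most $D$ times the maximum. The downstream discussion of how the observation combines with Lemma~\ref{partialsum} and Lemma~\ref{vds} also matches the paper's proof of Theorem~\ref{removalthm}, but it is not needed to establish the observation itself.
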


We are now ready to finish the proof of Theorem~\ref{removalthm}.
\begin{proof}
By Observation~\ref{partsumobs} and Lemma~\ref{partialsum}, we have that for every $D\in \mathbb{N}$,
\[
\max_d V(d) \geq \frac{1}{D} \left(\frac{|\mathcal{F}|}{\Delta} H_{D} - C\Delta \cdot |\mathcal{F}|\right).
\]

Slightly rearranging the inequality gives
\[
\frac{1}{|\mathcal{F}|}\max_d V(d) \geq \frac{1}{D} \left(\frac{1}{\Delta} H_D - C\Delta\right)
\]
then choosing an optimal $D=\lfloor exp(1+C\Delta^2)\rfloor$ and using Claim~\ref{harmonic} yields
\[
\frac{1}{|\mathcal{F}|}\max_d V(d) 
\geq \frac{1}{exp(1+C\Delta^2)} \left(\frac{1}{\Delta} \left(1+C\Delta^2\right) - C\Delta\right)
> \frac{1}{\Delta\cdot e^{1+C\Delta^2}}.
\]
We complete the proof by using Lemma~\ref{vds}.
\end{proof}

We also observe that the sets $\mathcal{F}',U'$ satisfying the statement of Theorem~\ref{removalthm} can be computed efficiently.
\begin{observation}\label{computeremoval}
Assuming $\Delta, C$ are constants, sets $\mathcal{F}',U'$ satisfying the statement of Theorem~\ref{removalthm} can be computed in time $O(|\mathcal{F}|)$.
\end{observation}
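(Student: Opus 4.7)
The plan is to follow the constructive proof of Theorem~\ref{removalthm} step by step, verifying that each step runs in linear time when $\Delta$ and $C$ are constants. Recall that the proof produces the optimal pair by picking $D:=\lfloor e^{1+C\Delta^2}\rfloor$, computing for each $d\in\{1,\dots,D\}$ the pair $(\mathcal{F}'_d, U'_d)$ promised by Lemma~\ref{vds}, and then returning the best one. Since $D$ depends only on $\Delta$ and $C$, it is a constant, so it suffices to show that for every fixed $d\leq D$ the pair $(\mathcal{F}'_d,U'_d)$ can be constructed in $O(|\mathcal{F}|)$ time.

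The first step will be a linear-time preprocessing pass. I iterate once through $\mathcal{F}$ and for each $F$ through its (at most $\Delta$) elements, in order to build: (i)~the degree $\deg(u)$ for every $u\in U$ appearing in some $F$, using a hash table indexed by universe elements, and (ii)~an inverted index mapping each such $u$ to the list of sets containing it. Since $\sum_{F\in\mathcal{F}}|F|\leq \Delta|\mathcal{F}|$ and $\Delta=O(1)$, this preprocessing costs $O(|\mathcal{F}|)$ time and space. Elements of $U$ not appearing in any $F$ are irrelevant and can be ignored.

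For a fixed $d$, I define $U'_d$ to be the set of universe elements of degree strictly greater than $d$; marking these requires one pass over the degree table and costs $O(|\mathcal{F}|)$. Then I execute the greedy algorithm of Lemma~\ref{boundeddegs} on the remaining incidences: maintain an ``available'' flag for each $F\in\mathcal{F}$, scan $\mathcal{F}$ in some fixed order, and whenever the current $F$ is available, add it to $\mathcal{F}'_d$ and mark as unavailable every set $F'$ such that $F'\cap F\not\subseteq U'_d$. The crucial observation is that this marking can be done by iterating over each $u\in F\setminus U'_d$ and walking through $u$'s incidence list; the work is bounded by $\sum_{u\in F\setminus U'_d}\deg(u)\leq |F|\cdot d\leq \Delta D=O(1)$. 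Since each set is processed at most once either by being picked or by being marked unavailable, the total work of the greedy for this $d$ is $O(|\mathcal{F}|)$.

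Finally, since there are only $D=O(1)$ values of $d$ to try, the overall time is $D\cdot O(|\mathcal{F}|)=O(|\mathcal{F}|)$, and comparing the $D$ candidates to select the one achieving $\max_d V(d)$ is trivial. The only mild obstacle is bookkeeping: ensuring that each ``mark unavailable'' operation is $O(1)$ and that no set is re-examined after being removed. This is handled uniformly by the availability flag together with the bounded-degree charge argument above, which caps the total marking work at $\Delta D \cdot |\mathcal{F}'_d|\leq \Delta D\cdot|\mathcal{F}|$ per value of $d$.
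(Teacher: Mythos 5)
Your proposal is correct and follows the same route as the paper's proof: build the bipartite incidence structure in linear time, note that only the constantly many thresholds $1\leq d\leq D=\lfloor e^{1+C\Delta^2}\rfloor$ need be tried, and observe that the constructions of Lemma~\ref{boundeddegs} and Lemma~\ref{vds} are linear-time for each fixed $d$. Your write-up merely adds the implementation details (inverted index, availability flags, and the $O(\Delta d)$ charging argument for the greedy marking) that the paper leaves implicit.
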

\begin{proof}
Computing the bipartite graph representing the collection $\mathcal{F}$ takes linear time.
Then, the proofs of Lemma~\ref{boundeddegs} and Lemma~\ref{vds} are both constructive and run in linear time for any specific degree threshold $d$.
By the proof of Theorem~\ref{removalthm} we notice that it is enough to consider only degree thresholds in the constant-sized range $1\leq d \leq D=\lfloor exp(1+C\Delta^2)\rfloor = O(1)$. 
Thus, we may find a threshold satisfying the statement in linear time.
\end{proof}

In Appendix~\ref{removallb} we present a construction showing that in the settings of Theorem~\ref{removalthm} we must have $\rho(\Delta,C)\leq (C+1)^{-\Delta}$. 
The construction is due to discussions with Noga Alon.

\section{Reducing $k$-coloring to $(k-1)$-list-coloring}\label{sec:red}
In this section we use Theorem~\ref{mainthm} in order to prove the following.

\begingroup
\def\thetheorem{\ref{reduct}}
\begin{theorem}
Given an algorithm solving $(k-1)$-list-coloring in time $O\left(\left(2-\varepsilon\right)^n\right)$ for some constant $\varepsilon>0$, we can construct an algorithm solving $k$-coloring in time $O\left(\left(2-\varepsilon'\right)^n\right)$ for some (other) constant $\varepsilon'>0$.
Furthermore, the reduction is deterministic.
\end{theorem}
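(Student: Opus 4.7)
The plan is to dichotomize $G$ by degree profile. Fix constants $\alpha>0$ and $\Delta\in\mathbb{N}$ to be chosen at the end. If $G$ is $(\alpha,\Delta)$-bounded, I invoke Theorem~\ref{mainthm} to compute $\chi(G)$ and answer whether $\chi(G)\leq k$ in time $O\!\left((2-\varepsilon_{\alpha,\Delta})^n\right)$. Otherwise fewer than $\alpha n$ vertices have degree at most $\Delta$, so at least $(1-\alpha)n$ vertices have degree strictly greater than $\Delta$; I will handle this \emph{dense} case by a small enumeration followed by calls to the assumed $(k-1)$-list-coloring algorithm.

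For the dense case, I construct a small dominating set $D\subseteq V(G)$ as follows. Include each vertex in an initial set $D_0$ independently with probability $p:=1/\sqrt{\Delta}$, and set $D:=D_0\cup\{v:v\notin D_0,\;N(v)\cap D_0=\emptyset\}$. By construction, every vertex outside $D$ has a neighbour inside $D$. The expected size satisfies
\[
\mathbb{E}\!\left[|D|\right]\;\leq\;pn+\alpha n+n(1-p)^{\Delta+1}\;\leq\;\left(\alpha+\tfrac{1}{\sqrt{\Delta}}+e^{-\sqrt{\Delta}}\right)n\;=:\;\delta n,
\]
where the middle term bounds the low-degree vertices (any of which could remain undominated in the worst case) and the last term bounds the high-degree undominated vertices. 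The method of conditional expectations derandomizes this construction and yields such a $D$ deterministically in polynomial time. Now, for each of the $k^{|D|}$ functions $c:D\to[k]$ that is a proper coloring of $G[D]$, define $L_v:=[k]\setminus c\!\left(N(v)\cap D\right)$ for every $v\in V(G)\setminus D$; since $v$ has at least one neighbour in $D$, we have $|L_v|\leq k-1$, and $c$ extends to a $k$-coloring of $G$ if and only if $G[V(G)\setminus D]$ with the lists $\{L_v\}$ admits a $(k-1)$-list-coloring. Running the assumed algorithm on each candidate gives total running time
\[
O^*\!\left(k^{\delta n}\,(2-\varepsilon)^{(1-\delta)n}\right)\;=\;O^*\!\left(\bigl((2-\varepsilon)\cdot(k/(2-\varepsilon))^{\delta}\bigr)^{n}\right).
\]

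Since $2-\varepsilon<2$, the base above is strictly less than $2$ for every sufficiently small $\delta>0$; fix such a $\delta_0$. Picking $\alpha:=\delta_0/3$ and $\Delta$ large enough that $\alpha+1/\sqrt{\Delta}+e^{-\sqrt{\Delta}}\leq\delta_0$ ensures the dense branch runs in $(2-\varepsilon'')^n$ for some $\varepsilon''>0$; taking $\varepsilon':=\min(\varepsilon'',\varepsilon_{\alpha,\Delta})>0$ bounds both branches by $O\!\left((2-\varepsilon')^n\right)$. The main obstacle is precisely this joint tuning of $\alpha$ and $\Delta$: the low-degree term $\alpha n$ in the dominating-set bound forces $\alpha$ to be small, which in turn shrinks $\varepsilon_{\alpha,\Delta}$ via Theorem~\ref{mainthm}, but both quantities remain positive and can be balanced. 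Once this continuity argument is made explicit, everything else is deterministic with polynomial overhead, so the reduction is deterministic as required.
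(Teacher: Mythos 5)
Your proposal is correct and follows essentially the same route as the paper: the same dichotomy between $(\alpha,\Delta)$-bounded graphs (handled by Theorem~\ref{mainthm}) and dense graphs (handled by a derandomized small dominating set, enumeration of its $k$-colorings, and one $(k-1)$-list-coloring call per candidate), with the same final balancing of $\alpha$ and $\Delta$. The only difference is the choice $p=1/\sqrt{\Delta}$ instead of the paper's $p=\ln\Delta/\Delta$, which still makes the dominating-set density tend to zero as $\Delta\to\infty$, so the argument goes through unchanged.
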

\addtocounter{theorem}{-1}
\endgroup

Beigel and Eppstein \cite{beigel20053} show that $4$-list-coloring (as a special case of a $(4,2)$-CSP) can be solved in time $O(1.81^n)$. Therefore we conclude that 

\begingroup
\def\thetheorem{\ref{5col}}
\begin{theorem}
$5$-coloring can be solved in time $O\left(\left(2-\varepsilon\right)^n\right)$ for some constant $\varepsilon>0$.
\end{theorem}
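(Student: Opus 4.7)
The proof I propose is a direct composition of two ingredients already in hand, with essentially no additional work needed beyond verifying that the hypotheses of Theorem~\ref{reduct} are met.

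The plan is to instantiate Theorem~\ref{reduct} with $k=5$. This reduces the task of solving $5$-coloring in time $O((2-\varepsilon')^n)$ to the task of exhibiting an algorithm for $4$-list-coloring that runs in time $O((2-\varepsilon)^n)$ for some constant $\varepsilon>0$. So the only real question is whether such a $4$-list-coloring algorithm exists unconditionally.

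For this I would invoke the result of Beigel and Eppstein~\cite{beigel20053}. Although their paper is usually cited for its $4$-coloring bound, their algorithm actually handles the more general $(4,2)$-CSP framework, of which $4$-list-coloring is a special case (each vertex is a $4$-ary variable, each edge is a binary inequality constraint, and list restrictions are imposed by excluding forbidden singleton values from each variable's domain). Their stated running time is $O(1.81^n)$, and since $1.81 < 2$, this fits the hypothesis of Theorem~\ref{reduct} with, say, $\varepsilon = 0.19$. One small thing to check, just to be careful, is that the Beigel--Eppstein $(4,2)$-CSP algorithm really does accommodate per-variable domain restrictions of size at most $4$ rather than exactly $4$; this is standard (a variable with a smaller domain only makes the instance easier and can be handled by padding or direct simulation), but it is the one place I would write a sentence of justification rather than a citation.

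Having both ingredients in place, the proof is then a single line: feed the Beigel--Eppstein $O(1.81^n)$ algorithm for $4$-list-coloring into Theorem~\ref{reduct} as the black-box subroutine, and the theorem's conclusion delivers an algorithm for $5$-coloring running in $O((2-\varepsilon')^n)$ time for some constant $\varepsilon'>0$. Since Theorem~\ref{reduct} is asserted to be deterministic, the resulting $5$-coloring algorithm is deterministic as well, matching the paper's earlier remark. There is no genuine obstacle in this theorem itself; all of the difficulty has been absorbed into Theorem~\ref{reduct} (which in turn rests on Theorem~\ref{mainthm} and the removal lemma Theorem~\ref{removalthm}).
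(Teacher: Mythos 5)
Your proposal is correct and is exactly the paper's argument: the paper derives Theorem~\ref{5col} by feeding the Beigel--Eppstein $O(1.81^n)$ algorithm for $4$-list-coloring (viewed as a $(4,2)$-CSP) into Theorem~\ref{reduct} with $k=5$. Your extra remark about per-variable domains of size at most $4$ is a reasonable point of care but does not change the argument.
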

\addtocounter{theorem}{-1}
\endgroup

We begin by illustrating the idea intuitively.
By Theorem~\ref{mainthm}, it suffices to solve $k$-coloring for graphs in which most vertices have high degrees.
We show that in this case, the graph has a small \emph{dominating set}, this is a subset $R$ of vertices such that every vertex not in $R$ is adjacent to at least one vertex of $R$.
Given a $k$-coloring of the dominating set, the problem of extending the coloring to a $k$-coloring of the entire graph becomes a problem of $(k-1)$-list-coloring the rest of the graph. This is because each vertex not in the dominating set has a neighbor in it, and thus has at least one of the $k$ colors which it cannot use.
Assuming the dominating set is small enough, we can enumerate over the $k$-colorings of vertices in it, and then solve the remaining $(k-1)$-list-coloring problem.

\begin{lemma}\label{domset}
Let $G$ be a graph. Assume that there exists a subset of vertices $V'\subseteq V(G)$ of size $|V'|\geq (1-\alpha)\cdot |V(G)|$ such that for every $v\in V'$ we have $\deg(v)\geq \Delta-1$. Then, $G$ has a dominating set $R\subseteq V(G)$ of size $|R|\leq \left(\left(1-\alpha\right)\cdot \frac{1 + \ln \Delta}{\Delta} +  \alpha \right)\cdot |V(G)|$. Furthermore, there is an efficient deterministic algorithm to find such a dominating set.
\end{lemma}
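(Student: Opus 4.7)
\textbf{Proof plan for Lemma~\ref{domset}.}
The plan is to split $V(G)$ into the set $H$ of vertices of degree at least $\Delta-1$ and the set $L=V(G)\setminus H$, put $L$ into $R$ for free (paying $|L|\le\alpha n$), and then handle the remaining uncovered vertices via a greedy covering argument. By hypothesis $|H|\ge(1-\alpha)n$ and $|L|\le\alpha n$. After inserting $L$ into $R$, the only vertices we still need to dominate are those in $H':=H\setminus N[L]$. Observe that for every $v\in H'$ we have $N[v]\subseteq H$ (since $v$ has no neighbor in $L$ and $v\notin L$) and $|N[v]|\ge\Delta$. This is the crucial structural property that lets the greedy argument proceed as if the graph had minimum closed neighborhood $\Delta$ on the relevant vertices, even though $G$ itself may have many low-degree vertices.

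Next I would run the standard greedy dominating-set routine restricted to the uncovered portion of $H'$: at step $i$, with $U_i\subseteq H'$ denoting the currently uncovered vertices, add to $R$ a vertex $w\in H$ maximizing $|N[w]\cap U_i|$. The averaging estimate
\[
\sum_{w\in H}|N[w]\cap U_i| \;=\; \sum_{v\in U_i}|N[v]\cap H| \;=\; \sum_{v\in U_i}|N[v]| \;\ge\; \Delta\,|U_i|
\]
shows that some $w\in H$ satisfies $|N[w]\cap U_i|\ge\Delta|U_i|/|H|$, so $|U_{i+1}|\le|U_i|(1-\Delta/|H|)\le|U_i|e^{-\Delta/|H|}$. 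After $t=\lceil(|H|/\Delta)\ln\Delta\rceil$ greedy steps this yields $|U_t|\le|H|e^{-t\Delta/|H|}\le|H|/\Delta$. At this point I simply dump all of $U_t$ into $R$.

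Adding up the three contributions gives
\[
|R| \;\le\; |L| + t + |U_t| \;\le\; \alpha n + \frac{|H|\ln\Delta}{\Delta} + \frac{|H|}{\Delta} + O(1) \;\le\; \left((1-\alpha)\cdot\frac{1+\ln\Delta}{\Delta} + \alpha\right)|V(G)|,
\]
as required (the additive $O(1)$ from the ceiling is easily absorbed, e.g.\ by rounding up the statement's bound, which is vacuous otherwise). The algorithm is manifestly deterministic and runs in polynomial time, since computing $H$, $L$, $H'$ is linear, and each greedy step can be implemented in polynomial time by scanning all vertices.

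The only mildly delicate point, and the step I would be most careful about, is ensuring that the averaging bound $\max_{w}|N[w]\cap U_i|\ge\Delta|U_i|/|H|$ really holds with denominator $|H|$ rather than $n$; this is exactly what buys the factor $(1-\alpha)$ in front of $(1+\ln\Delta)/\Delta$ and makes the bound match the claimed form. The containment $N[v]\subseteq H$ for $v\in H'$, which is why the sum $\sum_{w\in H}|N[w]\cap U_i|$ already captures the full degree of each $v\in U_i$, is the key observation. A probabilistic alternative (pick each vertex of $H$ independently with probability $p=\ln\Delta/\Delta$ and add the uncovered remainder) yields the same bound in expectation and can be derandomized via the method of conditional expectations, but the greedy version above is already deterministic and efficient, so I would present it that way.
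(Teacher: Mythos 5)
Your greedy argument is correct in substance but takes a genuinely different route from the paper. The paper proves the lemma probabilistically: it samples $R_0$ by including each vertex independently with probability $p=\ln\Delta/\Delta$, adds all vertices left undominated (bounding their expected number by $(1-p)^{\Delta}|V'|$ plus a term for $V(G)\setminus V'$), and then derandomizes via the method of conditional expectations in an appendix. You instead put the low-degree vertices $L$ into $R$ outright, observe that every still-undominated vertex $v\in H'$ has $N[v]\subseteq H$ with $|N[v]|\ge\Delta$, and run the classical greedy set-cover iteration restricted to $H$, which is deterministic from the start. Your key point --- that the averaging denominator is $|H|$ rather than $n$, which is what produces the factor $(1-\alpha)$ --- is exactly right and is the analogue of the paper splitting the undominated vertices into $R_1$ (inside $V'$) and $R_2$ (outside). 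Your approach buys a self-contained deterministic proof with no appendix; the paper's approach buys a slightly sharper bound when $\delta(G)$ is large (the exponent $\delta(G)+1$ on the $R_2$ term, which it notes is not needed) and reuses the same random-sampling template in the later randomized reduction of Section~\ref{sec:red2}.

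Two small wrinkles to patch. First, your final chain bounds $|L|\le\alpha n$ while simultaneously keeping $|H|$ in the terms $t+|U_t|\le |H|(1+\ln\Delta)/\Delta+O(1)$; since you only know $|H|\ge(1-\alpha)n$, the last inequality does not follow as written when $|H|>(1-\alpha)n$. Instead write $|L|=n-|H|$ and note that $(n-|H|)+|H|(1+\ln\Delta)/\Delta$ is non-increasing in $|H|$ whenever $(1+\ln\Delta)/\Delta\le 1$ (i.e.\ $\Delta\ge 2$; for $\Delta=1$ the claimed bound is $\ge n$ and trivial), so the worst case is $|H|=(1-\alpha)n$, giving exactly the stated bound. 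Second, the additive $O(1)$ from the ceiling in $t$ means the stated inequality can be off by one; this is harmless for every use of the lemma in the paper, but strictly speaking you should either absorb it (e.g.\ by a marginally smaller choice of $t$ together with a separate check of the boundary case) or note that the lemma is only needed up to $O(1)$ additive slack.
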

\begin{proof}
Denote by $\delta(G)$ the minimum degree of a vertex in $G$.
Let $R_0$ be a random subset of $V(G)$ chosen by picking each $v\in V(G)$ independently with probability $p$. We have $E[|R_0|]=p\cdot|V(G)|$.
Let $v\in V'$ be a vertex of degree at least $\Delta-1$, the probability of not adding $v$ or any one of its neighbors to $R$ is at most $(1-p)^\Delta$.
Denote by $R_1$ the set of vertices that are in $V'$ but not in $R_0$ and do not have a neighbor in $R_0$. 
By the previous observation, $E[|R_1|]\leq (1-p)^\Delta \cdot |V'|$.
Similarly, denote by $R_2$ the set of vertices that are in $V(G)\setminus V'$, not in $R_0$ and do not have a neighbor in $R_0$. 
We have $E[|R_2|]\leq (1-p)^{\delta(G)+1} \cdot |V(G)\setminus V'|$.

The set $R=R_0\cup R_1 \cup R_2$ is a dominating set. We have
\begin{align*}
E[|R|] &\leq E[|R_0|]+E[|R_1|]+E[|R_2|] \\& \leq 
p\cdot|V(G)| + (1-p)^\Delta \cdot |V'| + (1-p)^{\delta(G)+1} \cdot |V(G)\setminus V'| \\ &\leq
\left(p + \left(1-\alpha\right)\cdot (1-p)^\Delta + \alpha \cdot (1-p)^{\delta(G)+1} \right) \cdot |V(G)|.
\end{align*}
Furthermore, we can efficiently compute $E[|R|]$ even after conditioning on whether or not vertices are chosen to $R_0$. Thus, the \emph{method of conditional expectations} results in an efficient deterministic algorithm that finds a set $R$ of size at most the above expectation. We elaborate on the matter in Appendix~\ref{derand}.

While not optimal for many parameters, for the sake of our use of this lemma it suffices to pick $p=\frac{\ln \Delta}{\Delta}$, for which we get
\begin{align*}
|R| &\leq 
\left(\frac{\ln \Delta}{\Delta} + \left(1-\alpha\right)\cdot (1-\frac{\ln \Delta}{\Delta})^\Delta + \alpha \cdot (1-\frac{\ln \Delta}{\Delta})^{\delta(G)+1} \right) \cdot |V(G)| \\ &\leq
\left(\frac{\ln \Delta}{\Delta} + \left(1-\alpha\right)\cdot e^{-\ln \Delta} + \alpha \cdot (1-\frac{\ln \Delta}{\Delta}) \right) \cdot |V(G)| \\ &=
\left(\left(1-\alpha\right)\cdot \frac{1 + \ln \Delta}{\Delta} +  \alpha \right) \cdot |V(G)|.
\end{align*}

\end{proof}

As evident in the proof of Lemma~\ref{domset}, a lower bound on the minimum degree $\delta(G)$ of the graph can result in a slightly better bound on the size of the dominating set we can construct. While it is not necessary for proving the statement of Theorem~\ref{reduct}, we include the following observation for completeness.

\begin{lemma}\label{removelowdeg}
Given an algorithm solving $k$-coloring for graphs of minimum degree $\delta(G)\geq k$, we can construct an algorithm solving $k$-coloring for every graph with the same running time (up to an additive polynomial factor).
\end{lemma}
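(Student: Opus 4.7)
The plan is to apply the standard peel-off reduction: iteratively remove vertices of low degree from $G$ until the surviving subgraph $G'$ has minimum degree at least $k$, then invoke the assumed algorithm on $G'$ and lift the answer back to $G$. The first step I would justify is the key observation that if $v\in V(G)$ satisfies $\deg(v)<k$, then $G$ is $k$-colorable if and only if $G-v$ is $k$-colorable. One direction is immediate by restriction; for the other, any $k$-coloring of $G-v$ extends to $G$ by assigning $v$ any color absent from its fewer than $k$ neighbors, and at least one such color exists by pigeonhole.

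Given this, I would describe the algorithm as follows. While the current graph contains a vertex $v$ with degree less than $k$, push $v$ onto a stack and delete it; note that deletions can create new low-degree vertices, so the process continues until none remain. Let $G'$ denote the final graph; either $G'$ is empty, or $\delta(G')\geq k$ and the assumed algorithm applies. I then run the assumed algorithm on $G'$. If it reports that $G'$ is not $k$-colorable, I report that $G$ is not $k$-colorable, which is valid by the key observation applied inductively along the peeling sequence. Otherwise, I pop the stack in reverse order of deletion and color each popped vertex with any color not appearing on its already-colored neighbors, which exists because each such vertex had degree less than $k$ at the moment it was removed.

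Correctness then follows by an immediate induction on the number of peeling steps. For the running time, locating a low-degree vertex and performing the deletion takes polynomial time per step (in fact one can maintain a list of low-degree vertices so that the entire peeling phase runs in $O(n+m)$ total), there are at most $n$ steps, and the lifting phase is linear; hence the overhead above a single invocation of the assumed algorithm on $G'$ is additively polynomial in $n$, as required. I do not expect any real obstacle here: the only subtlety worth emphasizing is that the hypothesis constrains the minimum degree of the graph actually fed to the subroutine, not of the original input, which is precisely why the peeling is done up front.
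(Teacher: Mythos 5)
Your proposal is correct and matches the paper's proof essentially verbatim: both peel low-degree vertices onto a stack, invoke the assumed algorithm on the resulting graph of minimum degree at least $k$, and then greedily extend the coloring while popping the stack. No issues.
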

\begin{proof}
Denote by $\mathcal{A}$ the algorithm solving $k$-coloring for graphs with minimum degree $\delta(G)\geq k$. 
Given a graph $G$, we initiate a stack $\sigma$ and run the following iterative process.
As long as there is a vertex $v$ in $G$ of degree $\deg(v) < k$, we push $v$ into $\sigma$ and remove it and its adjacent edges from $G$.
When we finish, our graph is of minimal degree $\delta(G)\geq k$ and thus we can run $\mathcal{A}$.
If $G$, which is currently an induced sub-graph of the input graph, is not $k$-colorable, then the input graph is not $k$-colorable as well.
Otherwise, we extend the coloring $c$ of $G$ returned by $\mathcal{A}$ iteratively as follows.
As long as $\sigma$ is not empty, pop a vertex $v$ out of it. 
Re-insert $v$ and its adjacent edges back into $G$. 
As by construction it is of degree $\deg(v)<k$, we must have at least one color $i$ that is not used for any of $v$'s neighbors. 
Extend $c$ to $v$ by setting $c(v)=i$.
When the stack $\sigma$ is empty, $c$ is a $k$-coloring of the entire input graph.


\end{proof}

\begin{lemma}\label{usedomset}
Let $G$ be a graph with a dominating set $R$. We can solve $k$-coloring for $G$ by solving $k^{|R|}$ instances of $(k-1)$-list-coloring on graphs with $|V(G)|-|R|$ vertices.
\end{lemma}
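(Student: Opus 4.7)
The plan is to reduce a $k$-coloring instance on $G$ to many $(k-1)$-list-coloring instances on the induced subgraph $G[V(G)\setminus R]$ by brute-forcing the coloring of the (presumably small) dominating set $R$. Concretely, I would enumerate over all $k^{|R|}$ functions $c_R: R \to [k]$, discard those that are not proper colorings of $G[R]$, and for each surviving $c_R$ build an instance of $(k-1)$-list-coloring on $H := G[V(G)\setminus R]$ whose solutions correspond exactly to extensions of $c_R$ to a proper $k$-coloring of $G$.

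For a fixed $c_R$, the list for a vertex $v \in V(G)\setminus R$ is defined as
\[
C_v \;:=\; [k] \setminus c_R(N(v)\cap R).
\]
Since $R$ is a dominating set, every $v \in V(G)\setminus R$ has at least one neighbor in $R$, and thus $|c_R(N(v)\cap R)| \geq 1$, which gives $|C_v| \leq k-1$ as required for a $(k-1)$-list-coloring instance. The key step is verifying the correspondence: a coloring $c: V(G)\setminus R \to [k]$ with $c(v) \in C_v$ for all $v$ and $c(u) \neq c(v)$ along every edge of $H$ combines with $c_R$ to give a proper $k$-coloring of $G$, since conflicts along edges internal to $R$ are ruled out by $c_R$ being proper, edges inside $V(G)\setminus R$ by $c$ being a proper coloring of $H$, and edges between $R$ and $V(G)\setminus R$ precisely by the list constraint $c(v) \notin c_R(N(v)\cap R)$. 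Conversely, every proper $k$-coloring of $G$ restricts to such a pair $(c_R,c)$.

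Therefore, $G$ is $k$-colorable if and only if at least one of the $k^{|R|}$ list-coloring instances is solvable, and each instance is on $|V(G)|-|R|$ vertices as claimed. I do not anticipate any real obstacle here: the only subtlety is making sure the list sizes are genuinely at most $k-1$, which is exactly the content of the domination hypothesis, and that we do not forget to first check that $c_R$ itself is a proper coloring of $G[R]$ before invoking the list-coloring solver.
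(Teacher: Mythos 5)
Your proposal is correct and follows essentially the same route as the paper: enumerate the $k^{|R|}$ colorings of the dominating set, and for each proper one assign each $v\in V(G)\setminus R$ the list $[k]\setminus c_R(N(v)\cap R)$, which has size at most $k-1$ precisely because $R$ is dominating. The correspondence you verify between solutions of the list-coloring instances and extensions to proper $k$-colorings of $G$ is the same one the paper uses.
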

\begin{proof}
A $k$-coloring $c:R\rightarrow [k]$ of $G[R]$ can be \emph{extended} to a $k$-coloring $c':V(G)\rightarrow [k]$ of $G$ with $c'|_{R}=c$, if and only if there is valid coloring of $G[V(G)\setminus R]$ such that a vertex $v\in V(G)\setminus R$ can only be colored with a color from $[k]\setminus c(R\cap N(v))$. As each $v\in V(G)\setminus R$ has at least one neighbor in $R$, we have $|R\cap N(v)|\geq 1$ and in particular $|[k]\setminus c(R\cap N(v))|\leq k-1$. Thus, we are left with a $(k-1)$-list-coloring problem on $G[V(G)\setminus R]$.
\end{proof}

We are now ready to prove Theorem~\ref{reduct}.
\begin{proof}
Let $\Delta, \alpha>0$ be constants to be chosen later.
Given a graph $G$ with $n$ vertices, we check whether it is \mbox{$(\alpha,\Delta)$-bounded}. If it is, then we use the algorithm of Theorem~\ref{mainthm} to solve $k$-coloring in $O\left(\left(2-\varepsilon_{\alpha,\Delta}\right)^n \right)$ time.
Otherwise, there are more than $(1-\alpha)n$ vertices of degree larger than $\Delta$ and by Lemma~\ref{domset} we can find a dominating set $R$ of $G$ of size $|R|\leq \left(\left(1-\alpha\right)\cdot \frac{1 + \ln \Delta}{\Delta} +  \alpha \right)n$.
Using Lemma~\ref{usedomset} and the given $(k-1)$-list-coloring algorithm, we can solve $k$-coloring for $G$ in time
\[
k^{|R|} \cdot \left(2-\varepsilon\right)^{n-|R|} = 
\left(\frac{k}{2-\varepsilon}\right)^{|R|} \cdot \left(2-\varepsilon\right)^{n} \leq
\left(\frac{k}{2-\varepsilon}\right)^{\left(\left(1-\alpha\right)\cdot \frac{1 + \ln \Delta}{\Delta} +  \alpha \right)n} \cdot \left(2-\varepsilon\right)^{n}.
\]

Combining both cases, we get an algorithm running in time $O\left(2-\varepsilon'\right)^{n}$ for
\[
\varepsilon' := \min\left(
\varepsilon_{\alpha,\Delta}
\;,\;\;
2 - \left(\frac{k}{2-\varepsilon}\right)^{\left(\left(1-\alpha\right)\cdot \frac{1 + \ln \Delta}{\Delta} +  \alpha \right)} \cdot \left(2-\varepsilon\right)
\right).
\]

When $\alpha\rightarrow 0$ and $\Delta \rightarrow \infty$ the second expression converges to $\varepsilon>0$. Therefore, for any choice of a small enough constant $\alpha$ and large enough integer $\Delta$ we have $\varepsilon'>0$.
\end{proof}

\section{Reducing $k$-coloring to $(k-2)$-list-coloring}\label{sec:red2}
We now refine the reduction of Section~\ref{sec:red} and show that

\begingroup
\def\thetheorem{\ref{reduct2}}
\begin{theorem}
Given an algorithm solving $(k-2)$-list-coloring in time $O\left(\left(2-\varepsilon\right)^n\right)$ for some constant $\varepsilon>0$, we can construct an algorithm solving $k$-coloring with high probability in time $O\left(\left(2-\varepsilon'\right)^n\right)$ for some (other) constant $\varepsilon'>0$.
\end{theorem}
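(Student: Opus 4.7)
The plan is to follow the structure of Theorem~\ref{reduct} but strengthen the reduction via a randomized step that ensures two forbidden colors can be extracted at most uncolored vertices. Fix constants $\alpha, \Delta, p, \beta > 0$ depending only on $k,\varepsilon$, to be tuned below. If the input $G$ is $(\alpha,\Delta)$-bounded we directly apply Theorem~\ref{mainthm}; otherwise at least $(1-\alpha)n$ vertices have degree exceeding $\Delta$, and we handle this case using randomization.

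Sample $R\subseteq V(G)$ by including each vertex independently with probability $p$, and enumerate all $k^{|R|}$ colorings $c:R\to[k]$. For each $c$, define the \emph{bad} set $B_c = \{v\in V(G)\setminus R : |c(N(v)\cap R)|\le 1\}$, consisting of vertices that see at most one distinct color among their colored neighbors. If $|B_c| > \beta n$ skip $c$; otherwise enumerate the $k^{|B_c|}$ extensions of $c$ to $B_c$, and for each extension $c'$ invoke the assumed $(k-2)$-list-coloring algorithm on $V(G)\setminus(R\cup B_c)$ with the lists $L_v = [k]\setminus c'(N(v)\cap(R\cup B_c))$; by construction each such list has size at most $k-2$. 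The whole procedure is repeated polynomially many times over fresh randomness to amplify the success probability to $1-2^{-\Omega(n)}$.

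The expected total time is
\[
\mathbb{E}_R\Big[\sum_{c:R\to[k]} k^{|B_c|}(2-\varepsilon)^{n-|R|-|B_c|}\Big] = (2-\varepsilon)^n \cdot \mathbb{E}_{R,c}\big[\lambda^{|R\cup B_c|}\big],
\]
where $\lambda = k/(2-\varepsilon) > 1$, $c$ is uniform on $R\to[k]$, and we used $|R|+|B_c|=|R\cup B_c|$ because the events $v\in R$ and $v\in B_c$ are disjoint. For any high-degree vertex $v$ one checks that $\Pr[v\in R\cup B_c] \le p + k(1-p(k-1)/k)^{\deg(v)} + o(1)$, which is made arbitrarily small by choosing $\Delta$ large relative to $1/p$; the at most $\alpha n$ low-degree vertices contribute only trivially. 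A standard moment-generating-function computation (exploiting that each vertex's indicator depends on the random choices in only a bounded-radius neighborhood, so that correlations can be absorbed into lower-order terms) then yields a total bound of the form $(2-\varepsilon')^n$ for a suitable $\varepsilon'>0$ whenever $p$ and $\alpha$ are chosen small enough, analogously to the final calculation in the proof of Theorem~\ref{reduct}.

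The main obstacle is correctness. To catch a valid $k$-coloring $c^*$ of $G$ we need $|B_{c^*|_R}|\le\beta n$ to hold with non-vanishing probability over the random $R$, so that the iteration with $c = c^*|_R$ survives the threshold check and the $(k-2)$-list-coloring call then succeeds. The delicate point is that $c^*$ may color all neighbors of some high-degree vertex $v$ with a single color, in which case $v$ is bad for \emph{every} $R$; if too many such vertices exist, we never catch $c^*$. The hard part will be to show that whenever $\chi(G)\le k$ with $k\ge 3$ one can always find a valid $c^*$ in which only $o(n)$ vertices have monochromatic neighborhoods: either by iteratively re-coloring a single neighbor of each such $v$ to diversify $c^*(N(v))$ (using the color freedom that a monochromatic high-degree neighborhood automatically provides), or by observing that many forced monochromatic neighborhoods in every valid $c^*$ impose a large bipartite-like / independent-set structure on $G$ that reduces $k$-coloring on $G$ to a genuinely easier subproblem. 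Together with the time bound and standard amplification of the randomized procedure, this yields the desired $O\left((2-\varepsilon')^n\right)$-time high-probability algorithm for $k$-coloring.
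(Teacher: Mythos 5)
Your first half (the $(\alpha,\Delta)$-bounded case via Theorem~\ref{mainthm}, the random sample $R$, enumeration of colorings of $R$ and of the small "bad" set, and the reduction of the remainder to $(k-2)$-list-coloring) matches the paper's Lemma~\ref{red2smallb} in substance. But the proof is incomplete exactly at the point you flag as "the hard part": you need that some valid $k$-coloring $c^*$ has only $o(n)$ (or at most $\beta n$) high-degree vertices with (nearly) monochromatic neighborhoods, and you offer only two undeveloped speculations for why this should hold. Neither is a proof, and the first (locally re-coloring one neighbor of each offending vertex) has no reason to terminate or even to be locally feasible --- re-coloring a neighbor can be blocked by its own neighborhood, and can create new monochromatic neighborhoods elsewhere. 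The statement you would need is not established anywhere, and the paper does not prove it either; it takes a different route in this case.

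The paper's resolution of the many-bad-vertices case is a \emph{contraction} argument (Lemma~\ref{red2largeb} and Algorithm~$\mathcal{A}_3$): if a constant fraction of vertices $v$ have degree $>\Delta$ and, under $c^*$, all but $O(\Delta')$ of their neighbors share a single color, then picking a uniformly random such $v$ and a uniformly random $r$-subset $S\subseteq N(v)$ yields, with probability at least $\tfrac14(\beta-\alpha)$, a monochromatic independent set $S$; contracting $S$ to one vertex preserves $k$-colorability and shrinks the instance by $r-1$ vertices. This success probability only needs to exceed $(2-\varepsilon)^{-(r-1)}$, which is arranged by taking $r$ large, and the whole procedure is wrapped in a random recursion (run the "small-$\beta$" algorithm with probability $(2-\varepsilon)^{-n}$, otherwise contract and recurse) that is repeated $O^*((2-\varepsilon)^n)$ times to amplify the success probability while keeping expected total time $O((2-\varepsilon')^n)$. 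Without this (or some substitute for your unproven structural claim), your algorithm can simply fail to ever reach a surviving iteration on graphs where every valid $k$-coloring forces $\Omega(n)$ monochromatic high-degree neighborhoods, so the argument as written does not establish the theorem.
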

\addtocounter{theorem}{-1}
\endgroup

Once again, we use the $4$-list-coloring algorithm of Beigel and Eppstein \cite{beigel20053} to conclude

\begingroup
\def\thetheorem{\ref{6col}}
\begin{theorem}
$6$-coloring can be solved with high probability in time $O\left(\left(2-\varepsilon\right)^n\right)$ for some constant $\varepsilon>0$.
\end{theorem}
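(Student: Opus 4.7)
The plan is to deduce Theorem~\ref{6col} as a direct application of Theorem~\ref{reduct2} with $k=6$. Theorem~\ref{reduct2} asserts that any $(k-2)$-list-coloring algorithm with running time $O((2-\varepsilon)^n)$ can be turned into a $k$-coloring algorithm, with high probability, running in time $O((2-\varepsilon')^n)$ for some $\varepsilon'>0$. Hence it suffices to exhibit a $4$-list-coloring algorithm that beats the $2^n$ barrier, and the rest of the work is packaged inside Theorem~\ref{reduct2}.

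For the required inner algorithm I would invoke the $(4,2)$-CSP algorithm of Beigel and Eppstein~\cite{beigel20053}, which runs in time $O(1.81^n)$. As noted in the excerpt preceding Theorem~\ref{5col}, $4$-list-coloring is a special case of $(4,2)$-CSP, so this yields a $4$-list-coloring algorithm with running time $O((2-\varepsilon)^n)$ for $\varepsilon = 2 - 1.81 = 0.19 > 0$. Feeding this into Theorem~\ref{reduct2} with $k=6$ produces an algorithm for $6$-coloring with running time $O((2-\varepsilon')^n)$ for some constant $\varepsilon'>0$.

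The randomization in Theorem~\ref{6col} comes entirely from the reduction of Theorem~\ref{reduct2}; the Beigel--Eppstein algorithm is deterministic and worst-case, so it contributes no error. In particular, the one-sided error probability of the combined algorithm is the (exponentially small) error probability guaranteed by Theorem~\ref{reduct2}. The main technical obstacle therefore does not lie in this corollary at all, but in the proof of Theorem~\ref{reduct2} itself, which must strengthen the dominating-set argument of Section~\ref{sec:red} so that each outside vertex sees \emph{two} distinctly colored neighbors in the sampled set (allowing two forbidden colors to be removed from each list); this step is inherently randomized and is where the high-probability guarantee originates. Given Theorem~\ref{reduct2}, the derivation of Theorem~\ref{6col} is a one-line substitution.
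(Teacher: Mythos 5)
Your proposal is correct and matches the paper's own derivation exactly: the paper likewise obtains Theorem~\ref{6col} by plugging the Beigel--Eppstein $O(1.81^n)$ algorithm for $(4,2)$-CSP (hence $4$-list-coloring) into Theorem~\ref{reduct2} with $k=6$, with all the randomization residing in that reduction.
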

\addtocounter{theorem}{-1}
\endgroup

We begin by outlining the way in which the previous reduction can be improved. 
Consider the reduction of Section~\ref{sec:red} and specifically the proof of Theorem~\ref{reduct}.
In the case where the graph is \mbox{$(\alpha,\Delta)$-bounded}, we may still use Theorem~\ref{mainthm} and gain an exponential improvement.
We now focus on the other case, in which most vertices are of degrees larger than $\Delta$.
Let $\Delta'$ be some constant to be chosen later. 
We think of $\Delta'$ as large yet arbitrarily smaller than $\Delta$.
Consider an arbitrary $k$-coloring $c:V(G)\rightarrow [k]$ of $G$.
For a vertex $v\in V(G)$ we denote by $N_i(v) := N(v)\cap c^{-1}(i)$ the set of $v$'s neighbors that are colored by $i$ in $c$.
We say that a vertex $v$ is \emph{good} if there are at least two distinct colors $i\neq j$ for which $|N_i(v)|,|N_j(v)| > \Delta'$.
As in the proof of Lemma~\ref{domset}, a small random subset of vertices (whose size depends on $\Delta'$) is likely to hit at least one neighbor of $v$ of color $i$ and at least one neighbor of $v$ of color $j$.
Denote by $\beta$ the fraction of \emph{bad} (i.e., not \emph{good}) vertices in $V(G)$.
If $\beta$ is small enough, a reduction almost identical to the previous one works. 
Uniformally pick a random small set $R_0$ of graph vertices, enumerate over the colorings of the vertices in $R_0$. In one of the colorings (the one corresponding to $c$ restricted to $R_0$) we expect having in $R_0$ neighbors of at least two different colors for almost all vertices of $V(G)\setminus R_0$.
With a cautious implementation, this gives a reduction to $(k-2)$-list-coloring.
Thus, the interesting case is when $\alpha$ is very small yet $\beta$ is large.
For a bad vertex $v\in V(G)$ of degree larger than $\Delta$, we must have single color $i$ such that $|N_i(v)| \geq \left(1-\frac{\left(k-1\right)\cdot \Delta'}{\Delta}\right) |N(v)|$.
Thus, almost all of the neighbors of a \emph{bad} vertex can be colored by the same color.
We therefore aim to gain by picking a large subset of $v$'s neighbors and \emph{contract} them to a single vertex. 
It is likely that $c$ remains a valid coloring after the contraction.
Furthermore, if the contracted set is an independent set, a coloring of the resulting graph is also a coloring of the original graph.
The algorithmic harnessing of the above observation is somewhat involved, as we cannot identify \emph{good} and \emph{bad} vertices easily.

\begin{lemma}\label{red2smallb}
Let $G$ be a $k$-colorable graph, $\Delta'$ be some constant, let $c$ be a $k$-coloring of $G$.
Assume that we are also given $\beta$, an upper bound on the fraction of \emph{bad} vertices in $G$ with respect to $c,\Delta'$.
Given an algorithm $\mathcal{A}$ solving $(k-2)$-list-coloring in time $O\left(\left(2-\varepsilon\right)^n\right)$, 
we can construct an algorithm $\mathcal{A}_1$ that runs in 
\[
O\left( 
\left(\frac{k}{2-\varepsilon}\right)^{\left(\frac{6 + \ln \Delta'}{\Delta'} +  \beta \right)n} \cdot \left(2-\varepsilon\right)^{n}
\right)
\]
time, and returns a $k$-coloring of $G$ with probability at least $\frac{1}{2}$.
\end{lemma}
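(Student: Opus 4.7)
The plan is to emulate the approach of Theorem~\ref{reduct} but exploit the defining property of good vertices: they have two color classes in $c$ of size greater than~$\Delta'$. We aim to construct a small set $R\subseteq V(G)$ such that, under $c$, every vertex $v\notin R$ has at least two distinct colors represented on $N(v)\cap R$. Given such an $R$, the algorithm $\mathcal{A}_1$ enumerates over all $k^{|R|}$ candidate colorings of $R$. For the branch matching $c|_R$, every vertex $v\in V(G)\setminus R$ has at most $k-2$ admissible colors, so solving the remaining problem is a $(k-2)$-list-coloring instance on $n-|R|$ vertices, which we hand to~$\mathcal{A}$. At least one branch therefore returns a valid $k$-coloring of~$G$.

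To build $R$, sample $R_0\subseteq V(G)$ by including every vertex independently with probability $p=(\ln\Delta')/\Delta'$. Using $c$, fix for each good vertex $v$ two witness colors $i_v\neq j_v$ with $|N_{i_v}(v)|,|N_{j_v}(v)|>\Delta'$, and set
\[
R_1:=\{v\in V(G)\setminus R_0\ :\ v\text{ is good and }R_0\cap N_{i_v}(v)=\emptyset\text{ or }R_0\cap N_{j_v}(v)=\emptyset\}.
\]
Let $B$ be the set of bad vertices (identifiable directly from $c$ and $\Delta'$, with $|B|\leq\beta n$), and take $R:=R_0\cup R_1\cup B$. By construction, any $v\notin R$ is good, lies outside $R_0$, and $R_0$ meets both $N_{i_v}(v)$ and $N_{j_v}(v)$; hence $R$ exposes two distinct colors on $N(v)$ under~$c$, as required.

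For each good vertex $v$ we have $\Pr[v\in R_1]\leq \Pr[R_0\cap N_{i_v}(v)=\emptyset]+\Pr[R_0\cap N_{j_v}(v)=\emptyset]\leq 2(1-p)^{\Delta'+1}\leq 2/\Delta'$, so $\mathbb{E}[|R_1|]\leq 2n/\Delta'$, while $\mathbb{E}[|R_0|]=pn=n\ln\Delta'/\Delta'$ and $|B|\leq\beta n$ deterministically. A multiplicative Chernoff bound on the sum of independent indicators $|R_0|$ together with Markov's inequality on $|R_1|$, combined via a union bound, yield $|R|\leq\bigl(\tfrac{6+\ln\Delta'}{\Delta'}+\beta\bigr)n$ with probability at least~$\tfrac12$. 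In that event, the running time is
\[
k^{|R|}\cdot(2-\varepsilon)^{n-|R|}=\Bigl(\tfrac{k}{2-\varepsilon}\Bigr)^{|R|}\cdot(2-\varepsilon)^n,
\]
matching the claimed bound; with the complementary probability $\mathcal{A}_1$ simply aborts.

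The main subtlety is allocating the failure probability between the random components so that the total fits inside the stated size bound at the $\tfrac12$ threshold. Because $|R_0|$ is a sum of independent Bernoullis it concentrates sharply by Chernoff, so essentially the whole probabilistic slack can be absorbed into a Markov-type tail bound on $|R_1|$, while the deterministic bound on $|B|$ adds no extra randomness. The small numerical constants (such as the $6$) are the slack produced by this split and are not meant to be sharp.
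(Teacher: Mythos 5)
There is a genuine gap: your construction of the set $R$ is not something the algorithm can actually compute. The coloring $c$ exists only as a promise for the analysis — the algorithm $\mathcal{A}_1$ receives $G$, $k$, $\Delta'$ and the numerical bound $\beta$, but not $c$ itself (if it had $c$, there would be nothing left to do, and indeed the paper later invokes $\mathcal{A}_1(G,k,\Delta',\beta')$ with no coloring passed in). Your $R$ is built from the set $B$ of bad vertices, the witness color classes $N_{i_v}(v),N_{j_v}(v)$ of each good vertex, and the set $R_1$ of good vertices whose witness classes are missed by $R_0$ — all of which are defined in terms of $c$ and hence invisible to the algorithm. A related symptom: for candidate colorings of your fixed $R$ other than $c|_R$, some vertices outside $R$ may see fewer than two distinct colors on $N(v)\cap R$, so those branches do not even produce lists of size at most $k-2$.

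The repair, which is what the paper does, is to make the second enumeration set a function of the \emph{guessed} coloring rather than of $c$: after sampling $R_0$ and fixing a candidate proper coloring $c'$ of $G[R_0]$, compute $B(R_0,c')$, the set of vertices of $V(G)\setminus R_0$ that do not see two distinct colors of $c'$ in $R_0$. This set is computable, and every branch in which it is small yields a legitimate $(k-2)$-list-coloring instance after also enumerating colorings of $B(R_0,c')$. Your probabilistic analysis then applies verbatim to the single branch $c'=c|_{R_0}$: there, $B(R_0,c|_{R_0})$ is contained in the union of the bad vertices and the good vertices whose two witness classes are not both hit by $R_0$, so $|B(R_0,c|_{R_0})|\le \beta n + O(n/\Delta')$ with constant probability, and the algorithm simply discards branches where the computed set exceeds the threshold $\beta n + \tfrac{5}{\Delta'}n$. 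With that change (and tightening the Markov threshold so that the union of the two failure events stays below $\tfrac12$), your argument matches the paper's.
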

\begin{proof}
Let $R_0$ be a random subset of $G$'s vertices, picking each vertex independently with probability $p$.
Let $v$ be a \emph{good} vertex and $i,j$ two colors for which $|N_i(v)|,|N_j(v)|>\Delta'$. 
We have a probability of at most $2\cdot (1-p)^{\Delta'}$ that either $N_i(V)\cap R_0$ or $N_j(v)\cap R_0$ is empty.
Thus, the expected number of \emph{good} vertices without neighbors in $R_0$ of two different colors (according to $c$) is bounded by $2 (1-p)^{\Delta'} n$.
By Markov's inequality, with probability greater than $\frac{3}{5}$ their number is at most $5 (1-p)^{\Delta'} n$.
For any $R_0 \subseteq V(G)$ and a partial coloring $c' : R_0\rightarrow [k]$, denote by $B(R_0,c')$ the set of all vertices in $V(G)\setminus R_0$ that do not have neighbors of two different colors (according to $c'$) in $R_0$.
By the above, we have that 
\[
| B(R_0, \restr{c}{R_0}) | \leq \beta n + 5 (1-p)^{\Delta'} n
\]
with probability at least $\frac{3}{5}$.
We pick $p = \frac{\ln \Delta'}{\Delta'}$ and have
\[
\beta n + 5\cdot (1-p)^{\Delta'} \cdot n <
\beta n + \frac{5}{\Delta'} n.
\]
We also note that $|R_0| \sim Bin(n,p) = Bin(n,\frac{\ln \Delta'}{\Delta'})$. 
Thus by applying the standard Chernoff bound \mbox{$\Pr\left(X>(1+\delta)\mu \right)< e^{-\frac{\delta^2\mu}{3}}$} with $\delta=\frac{1}{\ln \Delta'}$ we have
\[
Pr(|R_0|>\frac{1+\ln \Delta'}{\Delta'} n) < e^{-\frac{n}{3\Delta' \ln \Delta'}} < \frac{1}{10}.
\]

We therefore consider Algorithm~1.

\begin{figure}[t]
\begin{algorithm}[H]
\caption{Algorithm $\mathcal{A}_1(G,k,\Delta',\beta)$}
Pick $R_0$, a random subset of $V$ where vertices are picked i.i.d. with probability $\frac{\ln \Delta'}{\Delta'}$ \;
\If{$|R_0|>\frac{1+\ln \Delta'}{\Delta'} n$}
{Return that no coloring was found and halt\;}
\For{Every function $c' : R_0 \rightarrow [k]$}
{
    \If{$c'$ is a valid coloring of $G[R_0]$}
    {
        Compute $R:=B(R_0, c')$\;
        \If{$|R| < \beta n + \frac{5}{\Delta'} n$}
        {
            \For{Every function $c'' : R \rightarrow [k]$}
            {
                \If{$c'\cup c''$ is a valid coloring of $G[R_0\cup R]$}
                {
                    \For{$v\in V(G)\setminus (R_0\cup R)$}
                    {
                    	$L(v):=[k]\setminus(c'\cup c'')\left(N(v)\cap(R_0\cup R)\right)$\;
                    }
                    Run $\mathcal{A}$ on $V(G)\setminus (R_0\cup R)$ with the lists $L(\cdot)$\;
                    If it returns a coloring $c'''$, return $c'\cup c'' \cup c'''$ and halt\;
                }
            }
        }
    }
}
Return that no coloring was found\;
\end{algorithm}
\end{figure}

The correctness is quite straightforward.
Every coloring returned by the algorithm is valid, and with probability at least $\frac{1}{2}$ we reach the inner for loop with both $c' \cup c'' = \restr{c}{R_0\cup R}$ and thus $\mathcal{A}$ will return a valid solution.
The inner loops run at most $k^{\frac{1+\ln \Delta'}{\Delta'} n + \beta n + \frac{5}{\Delta'} n}$ times and thus we get the desired running time.
\end{proof}

If we choose a large enough constant $\Delta'$ and $\beta$ is small enough, Lemma~\ref{red2smallb} gives an exponential improvement.
We next deal with the case where $\beta$ is not small enough, and then finally discuss our concrete algorithm (that cannot compute or use the value of $\beta$).

\begin{lemma}\label{red2largeb}
Let $G$ be a $k$-colorable graph, $\Delta', r\geq 2$ be some integers, and let $c$ be a $k$-coloring of $G$.
Let $\beta$ be a lower bound on the fraction of \emph{bad} vertices in $G$ with respect to $c,\Delta'$.
Denote by $\Delta := r (k-1) \Delta' + r^2$ and by $\alpha$ the fraction of $G$'s vertices of degrees at most $\Delta$.

If we pick a random vertex $v\in V(G)$ and then a random subset $S\subseteq N(v)$ of size exactly $r$, then the probability that $c(u)$ is identical for all $u\in S$ is at least $\frac{1}{4}(\beta-\alpha)$. 
\end{lemma}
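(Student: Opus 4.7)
The plan is to restrict attention to bad vertices of sufficiently high degree and to exploit the fact that such vertices are almost monochromatic in their neighbourhood under $c$. I would first formalize the observation that a bad vertex has an overwhelming dominant color class, then invoke the degree threshold $\Delta = r(k-1)\Delta' + r^2$ to turn this into a quantitative bound on the hypergeometric probability of a random $r$-subset landing entirely inside that dominant class.

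By definition a vertex $v$ is bad iff at most one color $i$ satisfies $|N_i(v)| > \Delta'$; denote this dominant color by $i(v)$ (say the argmax). The remaining $k-1$ classes then contribute at most $(k-1)\Delta'$ in total to $\deg(v)$, so $|N_{i(v)}(v)| \geq \deg(v) - (k-1)\Delta'$. Let $V^\star \subseteq V(G)$ be the set of bad vertices of degree strictly greater than $\Delta$. Since the fraction of bad vertices is at least $\beta$ and the fraction of vertices of degree at most $\Delta$ is at most $\alpha$, a union bound yields $|V^\star|/n \geq \beta - \alpha$.

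Now fix $v \in V^\star$ and set $m := |N_{i(v)}(v)|$, $d := \deg(v)$. Writing the hypergeometric probability as a product,
\[
\frac{\binom{m}{r}}{\binom{d}{r}} \;=\; \prod_{j=0}^{r-1}\frac{m-j}{d-j},
\]
each factor is at least $(m-r+1)/d \geq 1 - \frac{(k-1)\Delta' + r - 1}{d}$ (valid since $m \geq r-1$, using $(m-j)/(d-j) \geq (m-r+1)/d$ for $0 \leq j \leq r-1$). The threshold $\Delta = r(k-1)\Delta' + r^2 = r\bigl((k-1)\Delta'+r\bigr)$ is exactly calibrated so that $d > \Delta$ forces each factor to exceed $1 - 1/r$. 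Multiplying $r$ such factors gives $\binom{m}{r}/\binom{d}{r} > (1 - 1/r)^r$, and since $(1-1/r)^r$ is increasing in integer $r \geq 2$ with minimum value $1/4$ attained at $r=2$, we conclude $\binom{m}{r}/\binom{d}{r} \geq 1/4$.

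Combining, the unconditional probability that $v \in V^\star$ and that $S \subseteq N_{i(v)}(v)$ (which forces $S$ to be monochromatic under $c$) is at least $(\beta - \alpha)\cdot \tfrac{1}{4}$. The only delicate step is the hypergeometric lower bound; the point is that $\Delta$ is chosen precisely to absorb both the non-dominant color mass $(k-1)\Delta'$ and the sampling-without-replacement correction $r-1$, with the constant $1/4$ being tight exactly at $r = 2$.
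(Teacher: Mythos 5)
Your proof is correct and takes essentially the same route as the paper: both first lower-bound by $\beta-\alpha$ the probability that $v$ is bad with $\deg(v)>\Delta$, then bound the probability that all $r$ sampled neighbours land in the dominant color class by a product of $r$ factors, each at least $1-1/r$ thanks to the calibration of $\Delta$, giving $(1-1/r)^r\geq 1/4$. The only cosmetic difference is that you write the sampling probability as a closed-form hypergeometric ratio while the paper bounds it via sequential draws without replacement; the estimates are identical.
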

\begin{proof}
With probability at least $\beta-\alpha$ the vertex $v$ is \emph{bad} and of degree larger than $\Delta$. 
In this case, there exists a single color $i$ such that for all $j\neq i$ we have $|N_j(v)|<\Delta'$.
We construct $S$ iteratively by picking a random neighbor of $v$ that is not already in $S$ for $r$ times.
After $\ell<r$ iterations, the probability of a random vertex of $N(v)\setminus S$ to be in $N_i(v)$ is at least \[
\frac{|N_i(v)| - r}{|N(v)|} = 1 - \frac{r + \sum_{j\neq i} |N_j(v)|}{|N(v)|} 
\geq
1 - \frac{r + (k-1)\cdot \Delta'}{\Delta} =
1 - \frac{1}{r}.
\]
Thus, the probability that all $r$ neighbors are in $N_i(v)$ is at least 
\[
\left(1-\frac{1}{r}\right)^r \geq \frac{1}{4}.
\]
\end{proof}

Intuitively, if $\frac{1}{4}(\beta-\alpha) > 2^{-(r-1)}$ it is beneficial to use Lemma~\ref{red2largeb} and contract the set $S$, decreasing the number of vertices by $(r-1)$.

For constants $\Delta', r$ we set 
\begin{align*}
    \Delta &:= r^2 + r(k-1)\Delta' \\
    \beta' &:= 8 \cdot \frac{(2-\varepsilon )^{-(r-1)}}{1 - \left(2-\varepsilon\right)^{-r}}\\
    \alpha' &:= \frac{1}{2} \beta'.
\end{align*}
Furthermore, we pick $\Delta',r$ to be large enough to satisfy
\[
\left(\frac{k}{2-\varepsilon}\right)^{\left(\frac{6 + \ln \Delta'}{\Delta'} +  \beta' \right)} \cdot \left(2-\varepsilon\right) < 2
.
\]

Let $\varepsilon''$ be $\left(2 - \left(\frac{k}{2-\varepsilon}\right)^{\left(\frac{6 + \ln \Delta'}{\Delta'} +  \beta' \right)} \cdot \left(2-\varepsilon\right)\right) \in (0,\varepsilon)$.
Let $\varepsilon'$ the minimum between $\varepsilon_{k,\Delta,\alpha'}$ of Theorem~\ref{mainthm} and $\varepsilon''$.

\begin{figure}[t]
\begin{algorithm}[H]
\caption{Algorithm $\mathcal{A}_2(G,k,\Delta',\Delta,\alpha',\beta')$}
\eIf{$G$ is \mbox{$(\alpha',\Delta)$-bounded}}
{
   Run Algorithm $\mathcal{A}_{k,\Delta,\alpha'}(G)$\;
}
{
    Run Algorithm $\mathcal{A}_1(G,k,\Delta',\beta')$\;
}
\end{algorithm}
\end{figure}

We first combine Algorithm $\mathcal{A}_{k,\Delta,\alpha'}$ of Theorem~\ref{mainthm} and Algorithm $\mathcal{A}_1$ of Lemma~\ref{red2smallb} and define Algorithm~2 that covers both the case when $\beta$ is small and the case when $\alpha$ is large.

The following Lemma immediately follows 
\begin{lemma}\label{algapp}
Algorithm $\mathcal{A}_2$ runs in $O\left(\left(2-\varepsilon'\right)^n\right)$ time, and if $(\beta-\alpha) \leq (\beta'-\alpha')$ it returns a $k$-coloring of $G$ with probability at least $\frac{1}{2}$.
\end{lemma}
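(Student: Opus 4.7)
The plan is to handle the two branches of $\mathcal{A}_2$ separately, observing that each is already covered by a previously established result; the substance of the lemma lies in correctly routing the hypothesis through the dispatch rule.

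I would first dispose of the easy branch, where $G$ is $(\alpha',\Delta)$-bounded. Here $\mathcal{A}_2$ simply invokes the deterministic algorithm guaranteed by Theorem~\ref{mainthm} (concretely, its $k$-coloring form Theorem~\ref{weakermainthm}), which runs in time $O((2-\varepsilon_{k,\Delta,\alpha'})^n)$ and always succeeds if a $k$-coloring exists. Since $\varepsilon'$ is defined as $\min(\varepsilon_{k,\Delta,\alpha'},\varepsilon'')$, both the running-time bound and the success probability (which is $1\geq 1/2$) follow immediately, without using the hypothesis $\beta-\alpha\leq\beta'-\alpha'$ at all.

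Next I would analyze the branch where $G$ is not $(\alpha',\Delta)$-bounded. The key observation — essentially the only real content of the lemma — is that being in this branch gives a strict inequality $\alpha < \alpha'$ for the true fraction of vertices of degree at most $\Delta$. Combined with the hypothesis $\beta - \alpha \leq \beta' - \alpha'$, this yields
\[
\beta \;\leq\; \beta' - (\alpha' - \alpha) \;<\; \beta',
\]
so $\beta'$ is a legitimate upper bound on the true bad-vertex fraction of $G$ with respect to the fixed $k$-coloring $c$ and the threshold $\Delta'$. Lemma~\ref{red2smallb} then applies to $\mathcal{A}_1(G,k,\Delta',\beta')$, giving success probability at least $1/2$ and running time
\[
O\!\left(\left(\tfrac{k}{2-\varepsilon}\right)^{(\frac{6+\ln\Delta'}{\Delta'}+\beta')n}(2-\varepsilon)^n\right) = O((2-\varepsilon'')^n),
\]
by the defining formula for $\varepsilon''$; the bound $\varepsilon'\leq\varepsilon''$ then closes the running-time argument.

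The one step that requires any thought at all is the short derivation $\beta < \beta'$ in the second branch, where the asymmetric hypothesis $\beta - \alpha \leq \beta' - \alpha'$ interacts with the dispatch condition of $\mathcal{A}_2$. The particular numerical choice $\alpha' = \tfrac{1}{2}\beta'$ plays no role in this proof — any split compatible with the dispatch threshold would suffice. Everything else is a routine combination of the two cited running times with the definition of $\varepsilon'$ as their minimum, so I would not expect any genuine obstacle beyond making sure the quantifiers in the hypothesis are stated precisely enough to license the substitution $\alpha < \alpha'$ in the non-bounded branch.
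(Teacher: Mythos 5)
Your proof is correct and follows essentially the same route as the paper's: split on the dispatch condition, note that the bounded branch is handled deterministically by Theorem~\ref{mainthm}, and in the unbounded branch combine $\alpha<\alpha'$ with the hypothesis to get $\beta\leq\alpha+(\beta'-\alpha')<\beta'$ so that Lemma~\ref{red2smallb} applies with $\beta'$ as the upper bound. Your version merely makes explicit the running-time bookkeeping via $\varepsilon'=\min(\varepsilon_{k,\Delta,\alpha'},\varepsilon'')$, which the paper leaves implicit.
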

\begin{proof}
If $\alpha \geq \alpha'$ then we run Algorithm $\mathcal{A}_{k,\Delta,\alpha'}$ and thus correctness follows from Theorem~\ref{mainthm}.
Otherwise, $\beta \leq \alpha + (\beta'-\alpha') < \beta'$ and thus correctness follows from Lemma~\ref{red2smallb}.
\end{proof}

We finally prove Theorem~\ref{reduct2} by constructing Algorithm~3.

\begin{figure}[t]
\begin{algorithm}[H]
\caption{Algorithm $\mathcal{A}_3(G,k)$}
    Set \emph{flag} to $1$ with probability $\left(2-\varepsilon\right)^{-|V(G)|}$, and to $0$ otherwise\;
    \eIf{\emph{flag} is $1$ or $|V(G)|\leq r$}
    {
        Run Algorithm $\mathcal{A}_2(G,k,\Delta',\Delta,\alpha',\beta')$ and return its output\;
    }
    {
       Choose a random $v$ uniformly out of $V(G)$\;
       \If{$\deg(v) < \Delta$}{Halt\;}
       Choose uniformally a random subset $S\subseteq N(v)$ of size exactly $r$\;
       \If{$S$ is not an independent set in $G$}{Halt\;}
       Contract $S$ to a single vertex in $G$\;
       Run $\mathcal{A}_3(G)$ recursively. If the recursive call returned a coloring, we convert it to a coloring of the original graph by expanding the contracted vertex back into $S$ and giving all of its vertices the color of the contracted vertex\;
    }
\end{algorithm}
\end{figure}

\begin{lemma}
If $G$ is $k$-colorable then $\mathcal{A}_3(G)$ returns a coloring with probability\\ at least $\left(2-\varepsilon\right)^{-(|V(G)|+1)}$.
\end{lemma}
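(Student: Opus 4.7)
The plan is to argue by induction on $n := |V(G)|$. The base case, $n \leq r$ (a constant), is handled directly: the algorithm always enters the branch invoking $\mathcal{A}_2$, which on constant-size inputs can be augmented to succeed with probability $1$ (e.g., by brute force), easily exceeding the constant target $(2-\varepsilon)^{-(n+1)}$.

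For the inductive step, fix any proper $k$-coloring $c$ of $G$ and let $\alpha,\beta$ be, respectively, the fraction of vertices of degree at most $\Delta$ and the fraction of $c$-bad vertices (with respect to $\Delta'$). The overall success probability splits cleanly as
\[
\Pr[\mathcal{A}_3(G)\text{ succeeds}] \;\geq\; (2-\varepsilon)^{-n}\cdot q \;+\; \bigl(1-(2-\varepsilon)^{-n}\bigr)\cdot p_B,
\]
where $q$ is the success probability of $\mathcal{A}_2$ on $G$ and $p_B$ is that of the else-branch. By Lemma~\ref{red2largeb}, a uniformly random $v$ satisfies $\deg(v)\geq\Delta$ and a uniformly random $r$-subset $S\subseteq N(v)$ is monochromatic under $c$ with probability at least $\tfrac14\max(0,\beta-\alpha)$; such an $S$ is automatically independent (since $c$ is proper), and contracting $S$ gives a $k$-colorable graph on $n-r+1$ vertices (color the contracted vertex by the common color). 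Applying the inductive hypothesis to this smaller graph yields $p_B \geq \tfrac14\max(0,\beta-\alpha)\cdot (2-\varepsilon)^{-(n-r+2)}$.

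I then split into three regimes of $(\alpha,\beta)$. If $\alpha\geq\alpha'$, then $G$ is $(\alpha',\Delta)$-bounded and $\mathcal{A}_2$ invokes the deterministic algorithm of Theorem~\ref{mainthm}, giving $q=1$ and so the first summand already yields $(2-\varepsilon)^{-n}\geq (2-\varepsilon)^{-(n+1)}$. If $\alpha<\alpha'$ and $\beta-\alpha>\beta'-\alpha'$, substituting $\beta'-\alpha'=\beta'/2=4(2-\varepsilon)^{-(r-1)}/(1-(2-\varepsilon)^{-r})$ into the bound on $p_B$ and using $n>r$ (so $1-(2-\varepsilon)^{-n}\geq 1-(2-\varepsilon)^{-r}$), the factors of $1-(2-\varepsilon)^{-r}$ cancel and the branch B term alone yields $(1-(2-\varepsilon)^{-n})\,p_B\geq (2-\varepsilon)^{-(n+1)}$. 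The remaining regime, $\alpha<\alpha'$ and $\beta-\alpha\leq\beta'-\alpha'$, is where Lemma~\ref{algapp} kicks in to give $q\geq\tfrac12$, contributing at least $\tfrac12(2-\varepsilon)^{-n}$, which must then be combined with the branch B contribution to reach the target.

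The main obstacle I anticipate is this third regime. The branch A contribution $\tfrac12(2-\varepsilon)^{-n}$ is short of $(2-\varepsilon)^{-(n+1)}=(2-\varepsilon)^{-1}(2-\varepsilon)^{-n}$ by a factor of $2/(2-\varepsilon)$, while the branch B contribution can vanish when $\beta\leq\alpha$. Closing this gap relies on the precise arithmetic built into the choices $\beta'=8(2-\varepsilon)^{-(r-1)}/(1-(2-\varepsilon)^{-r})$ and $\alpha'=\beta'/2$ fixed before the algorithm was defined; these were tuned so that the factor $1/(1-(2-\varepsilon)^{-r})$ appearing in Regime~2 absorbs exactly the $(2-\varepsilon)^{-1}$ slack corresponding to the ``$+1$'' in the target exponent, with the boundary case $\beta-\alpha=\beta'-\alpha'$ acting as the tight witness. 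Beyond this bookkeeping I expect no new ideas beyond Lemmas~\ref{algapp}, \ref{red2largeb} and the inductive hypothesis.
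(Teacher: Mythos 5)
Your overall route---induction on $n$, conditioning on the flag, and splitting according to whether $\beta-\alpha$ exceeds $\beta'-\alpha'$---is exactly the paper's, and your handling of the large-$(\beta-\alpha)$ regime (Lemma~\ref{red2largeb}, a monochromatic hence independent $S$, contraction losing $r-1$ vertices, then the inductive hypothesis) matches the paper's second case, including the cancellation of the $1-(2-\varepsilon)^{-r}$ factors. Your treatment of the base case and of the regime $\alpha\geq\alpha'$ is also consistent with the paper.

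The problem is your third regime, and you have in fact put your finger on the one genuinely delicate point. When $\beta-\alpha\leq\beta'-\alpha'$ the only guaranteed contribution is $\tfrac12(2-\varepsilon)^{-n}$ from the flag branch (via Lemma~\ref{algapp}), which misses the target $(2-\varepsilon)^{-(n+1)}$ by a factor of $2/(2-\varepsilon)>1$, while the else-branch can contribute nothing (e.g.\ when $\beta\leq\alpha$). Your proposed resolution---that the tuning of $\beta'$ and $\alpha'$ lets the factor $1/\bigl(1-(2-\varepsilon)^{-r}\bigr)$ absorb the slack---does not work: that factor only enters the large-$(\beta-\alpha)$ regime, and no choice of constants makes $\tfrac12(2-\varepsilon)^{-n}\geq(2-\varepsilon)^{-(n+1)}$ when $\varepsilon>0$. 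To be fair, the paper's own proof makes the same silent jump in its first case. The clean repair is to weaken the induction target to $\tfrac12(2-\varepsilon)^{-n}$: the flag branch then meets it exactly, the contraction regime still closes since $(2-\varepsilon)^{-(r-1)}\cdot\tfrac12(2-\varepsilon)^{-(n-r+1)}=\tfrac12(2-\varepsilon)^{-n}$, and the weakened guarantee still suffices for Theorem~\ref{reduct2} after doubling the number of repetitions. So: same approach as the paper, a correctly identified gap (present in the original as well), but an incorrect explanation of how to close it.
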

\begin{proof}
We prove the claim by induction on $|V(G)|$.
The base case $|V(G)|\leq r$ follows from the correctness of Algorithm $\mathcal{A}_2$. 
We now prove the induction step by considering two cases.
If $(\beta-\alpha) \leq (\beta'-\alpha')$ then with probability $\left(2-\varepsilon\right)^{-|V(G)|}$ we set \emph{flag} to $1$ and run Algorithm $\mathcal{A}_2$. We then produce a coloring with probability at least $\frac{1}{2}$ by Lemma~\ref{algapp}.
Otherwise, with probability $1-\left(2-\varepsilon\right)^{-|V(G)|} \geq 1-\left(2-\varepsilon\right)^{-r}$ we set \emph{flag} to $0$.
We have
\[
\frac{1}{4}(\beta-\alpha) > 
\frac{1}{4}(\beta' - \alpha') =
\frac{1}{8} \beta' =
\frac{(2-\varepsilon )^{-(r-1)}}{1 - \left(2-\varepsilon\right)^{-r}},
\]
and thus by Lemma~\ref{red2largeb} we both set \emph{flag} to $0$ and pick a set $S$ such that $G$ remains $k$-colorable after the contraction with probability greater than $(2-\varepsilon )^{-(r-1)}$.
The contraction decreases $|V(G)|$ by exactly $r-1$ and therefore by the induction hypothesis the probability of the recursive call to return a coloring is at least $(2-\varepsilon )^{-(|V(G)|-(r-1)+1)}$.
In both cases, the induction hypothesis holds for $|V(G)|$.
\end{proof}

\begin{lemma}
The expected running time of Algorithm $\mathcal{A}_3$ is $O\left(\left(\frac{2-\varepsilon'}{2-\varepsilon}\right)^{n}\right)$.
\end{lemma}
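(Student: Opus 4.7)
\end{lemma}

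\textbf{Proof proposal.} Set $\lambda := (2-\varepsilon')/(2-\varepsilon)$. The analysis hinges on the observation that $\lambda > 1$, which follows from the definitions given earlier in the section: by construction $\varepsilon' \le \varepsilon''$, and $\varepsilon'' \in (0,\varepsilon)$ is stated explicitly because $\left(\tfrac{k}{2-\varepsilon}\right)^{(6+\ln \Delta')/\Delta' + \beta'} > 1$ whenever $k > 2-\varepsilon$. Letting $T(n)$ denote the expected running time of $\mathcal{A}_3$ on an $n$-vertex graph, the goal reduces to proving $T(n) = O(\lambda^n)$.

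I would read the recurrence directly off the pseudocode of $\mathcal{A}_3$. A single invocation performs only polynomial work before either halting or making at most one recursive call on an $(n-r+1)$-vertex graph obtained by contracting $S$. The only super-polynomial expense is the possible call to $\mathcal{A}_2$, which happens with probability $(2-\varepsilon)^{-n}$ (from setting the flag) or deterministically at the base case $n \le r$; by Lemma~\ref{algapp} any such call costs $O((2-\varepsilon')^n)$. Upper-bounding the halting branches by the worst case of always recursing yields
\[
T(n) \;\le\; \operatorname{poly}(n) + (2-\varepsilon)^{-n}\cdot O\!\left((2-\varepsilon')^n\right) + T(n-r+1) \;=\; \operatorname{poly}(n) + O(\lambda^n) + T(n-r+1).
\]

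Unrolling this recurrence $O(n/(r-1))$ times, down to a base case of size at most $r$ handled in $O(1)$ by $\mathcal{A}_2$, bounds the expectation by a geometric sum
\[
T(n) \;=\; O(\lambda^n)\sum_{i\ge 0}\lambda^{-i(r-1)} + \operatorname{poly}(n).
\]
Because $\lambda > 1$ and $r \ge 2$, the ratio $\lambda^{-(r-1)} < 1$, so the sum converges to a constant depending only on $\lambda$ and $r$, and the exponential term dominates the polynomial, giving $T(n) = O(\lambda^n) = O\!\left(\left(\tfrac{2-\varepsilon'}{2-\varepsilon}\right)^n\right)$.

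The derivation is essentially mechanical, so I do not expect a real obstacle. The only delicate point is verifying $\lambda > 1$. That strict inequality is the whole reason $\varepsilon'$ was defined as a minimum involving $\varepsilon'' < \varepsilon$ rather than $\varepsilon$ itself: the sampling probability $(2-\varepsilon)^{-n}$ is calibrated so that its product with the $(2-\varepsilon')^n$ cost of $\mathcal{A}_2$ contributes exactly $\lambda^n$ per recursion level, and $\lambda > 1$ is precisely what makes the resulting geometric series converge to a constant rather than diverge.
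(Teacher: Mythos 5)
Your proposal is correct and follows essentially the same route as the paper: both set up the recurrence $T(n) \le (2-\varepsilon)^{-n}\cdot O\left(\left(2-\varepsilon'\right)^n\right) + T(n-(r-1))$ and unroll it into a geometric sum that converges because $\varepsilon' < \varepsilon$ (equivalently, your $\lambda>1$). Your explicit justification of $\lambda>1$ via the definition of $\varepsilon''$ matches the paper's closing remark that ``the last equality holds as $\varepsilon'<\varepsilon$.''
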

\begin{proof}
With probability $\left(2-\varepsilon\right)^{-|V(G)|}$ we set \emph{flag} to $1$ and run Algorithm $\mathcal{A}_2$ which takes $O\left(\left(2-\varepsilon'\right)^{|V(G)|}\right)$ time.
Otherwise, we recursively run $\mathcal{A}_3$ on a graph with $|V(G)|-(r-1)$ vertices.
Thus, the expected running time is
\begin{align*}
T(n) &= \left(2-\varepsilon\right)^{-n} \cdot O\left(\left(2-\varepsilon'\right)^{n}\right)
+
\left(1-\left(2-\varepsilon\right)^{-n}\right) \cdot T\left(n-\left(r-1\right)\right)
\\&\leq
O\left(\left(\frac{2-\varepsilon'}{2-\varepsilon}\right)^{n}\right) + T\left(n-\left(r-1\right)\right)
\\ &=\ldots=
O\left(\left(\frac{2-\varepsilon'}{2-\varepsilon}\right)^{n}\right).
\end{align*}
The last equality holds as $\varepsilon' < \varepsilon$.
\end{proof}

\begin{proof}[Proof of Theorem~\ref{reduct2}]
We run $\mathcal{A}_3(G)$ for $n\cdot \left(2-\varepsilon\right)^{n+1}$ times. If any of them found a coloring we return it and otherwise say that the graph is not $k$-colorable.
If $G$ is $k$-colorable, the probability we never find a coloring is bounded by
\[
\left(1-\left(2-\varepsilon\right)^{-(n+1)}\right)^{n\cdot \left(2-\varepsilon\right)^{n+1}}
<
e^{-n}
.
\]
The expected running time of all iterations together is $O\left(n\left(2-\varepsilon'\right)^n\right)$.
We can terminate the run of the algorithm if it takes much longer than its expected run-time as with high probability it does not happen. 
\end{proof}


\section{Conclusions and Open Problems}\label{conclusions}
The main algorithmic contribution of the paper is Theorem~\ref{mainthm}.
We use it in order to answer a few fundamental questions regarding the running time of $k$-coloring algorithms. 
In particular, we present the first $O\left(\left(2-\varepsilon\right)^n\right)$ algorithms solving $5$-coloring and $6$-coloring, for some $\varepsilon>0$.
While the $\varepsilon$ we can get using our tools is very small, this serves as the first proof that $5$-coloring can be solved faster than we can currently compute the chromatic number in general. The upper bound in Appendix~\ref{removallb} shows that the magnitude of $\varepsilon$ is a necessary consequence of using the removal lemma. 

The main open problem that we leave unsettled is 
\begin{open}
Can we solve $k$-coloring in $O^*\left(\left(2-\varepsilon_k \right)^n\right)$ time for some $\varepsilon_k>0$, for every $k$?
\end{open}

Theorem~\ref{mainthm} makes some progress towards answering it, by giving some additional conditions on the input graph under which the answer is affirmative.
In particular, we show that it holds for every graph that does not contain almost only vertices of super-constant degrees.
In \cite{golovnev2016families} very different techniques (using modifications of the FFT algorithm) were used to get a statement similar to Theorem~\ref{mainthm} for graphs with bounded average degree. 
It seems like their methods do not extend to the case of $(\alpha,\Delta)$-bounded graphs, nevertheless, it is intriguing to find out whether a combination of their techniques with these presented in this paper can lead to further improvements.

While it is believed that $O^*(2^n)$ is the right bound for computing the chromatic number, we have no strong evidence to support this. 
There are reductions from popular problems and conjectures (like SETH) to other partitioning problems \cite{cygan2016problems} or other parameterizations of the coloring problem \cite{jaffke2017fine}. 
It is interesting whether it can be showed that an $O^*\left(\left(2-\varepsilon\right)^n\right)$ algorithm for computing the chromatic number would refute any other popular conjecture.
This question was raised several times, including in the book of Fomin and Kratsch \cite{fomin2010exact}.

Another technical contribution of the paper is Theorem~\ref{removalthm}.
We believe that the presented removal lemma could serve as a tool in the design of other exponential time algorithms. 
It would be interesting to find more problems for which it can be used.

\section*{Acknowledgements}
The author would like to deeply thank Noga Alon for important discussions and insights regarding the subset removal lemma, and Haim Kaplan and Uri Zwick for many helpful discussions and comments on the paper.
The author would also like to thank anonymous reviewers for helpful comments.

\bibliographystyle{plain}
\bibliography{references}

\appendix
\section{Appendix}
\subsection{Upper Bound for Section~\ref{sec:removal}}\label{removallb}
In this section we provide a construction showing that in Theorem~\ref{removalthm} we must have $\rho(\Delta,C) \leq (C+1)^{-\Delta}$.
This bound is due to discussions with Noga Alon.

\begin{theorem}\label{removalupperbound}
For any positive \emph{integers} $C,\Delta,n$ we can construct a collection $\mathcal{F}$ of $(C+1)^{\Delta}\cdot n$ sets of size $\Delta$, such that for every subsets $\mathcal{F}'\subseteq \mathcal{F}$ and $U'\subseteq U$ satisfying that $\forall F_1,F_2\in \mathcal{F}'.\;F_1\cap F_2\subseteq U'$, we have $|\mathcal{F}'|-C\dot |U'|\geq n$.
\end{theorem}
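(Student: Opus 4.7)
The printed inequality $|\mathcal{F}'|-C|U'|\geq n$ appears to have its direction reversed: the preceding text claims the upper bound $\rho(\Delta,C)\leq(C+1)^{-\Delta}$ on the removal lemma's parameter, which (with $|\mathcal{F}|=(C+1)^\Delta n$) requires $|\mathcal{F}'|-C|U'|\leq n$ for every valid pair, and the universal $\geq n$ is literally violated by the trivial pair $\mathcal{F}'=U'=\emptyset$. I therefore outline a plan to prove the corrected form $|\mathcal{F}'|-C|U'|\leq n$ for every valid pair, which yields exactly the quoted upper bound on $\rho$.

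The plan is to take $\mathcal{F}$ as $n$ pairwise-disjoint copies of a ``unit'' gadget $\mathcal{F}_0$, each on its own private universe $U_0$ of size $k:=\lceil((C+1)^\Delta-1)/C\rceil$, with $\mathcal{F}_0$ realised as $(C+1)^\Delta$ sets drawn (with multiplicities) from the $\Delta$-subsets of $U_0$ so as to hit the target count exactly. Since the copies live on disjoint universes, any valid pair $(\mathcal{F}',U')$ for $\mathcal{F}$ decomposes as the disjoint union of valid pairs per copy and satisfies $|\mathcal{F}'|-C|U'|=\sum_{i=1}^{n}(|\mathcal{F}'_i|-C|U'_i|)$, so it suffices to prove the per-unit bound $|\mathcal{F}'_0|-C|U'_0|\leq 1$. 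The extreme case $(\mathcal{F}_0,U_0)$ saturates this: $(C+1)^\Delta-Ck\leq 1$ by the choice of $k$. For an intermediate valid pair, each element of $U_0\setminus U'_0$ is incident to at most one set of $\mathcal{F}'_0$ (otherwise two such sets would intersect outside $U'_0$), which combined with a count of how many sets of $\mathcal{F}_0$ can be supported entirely on $U'_0$ gives an upper bound on $|\mathcal{F}'_0|$ as a function of $|U'_0|$ that, after subtracting the penalty $C|U'_0|$, stays at most $1$.

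The main obstacle will be controlling the intermediate $|U'_0|$ regime uniformly: as $|U'_0|$ varies between $0$ and $k$, both $|\mathcal{F}'_0|$ and the penalty $C|U'_0|$ change, and one must verify the difference never exceeds $1$. I plan a monotonicity/convexity argument in $|U'_0|$, together with a careful choice of multiplicities on the $\binom{k}{\Delta}$ distinct $\Delta$-subsets of $U_0$ (to avoid any intermediate pair beating the extreme), to close this gap; the key quantitative input is that each extra element added to $U'_0$ can ``unlock'' at most $C$ additional sets into $\mathcal{F}'_0$ without violating the disjointness-outside-$U'_0$ condition, exactly matching the $C|U'_0|$ penalty. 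Summing the per-unit bound across the $n$ disjoint copies then yields $|\mathcal{F}'|-C|U'|\leq n$ globally, which (as noted in the first paragraph) establishes the quoted upper bound $\rho(\Delta,C)\leq(C+1)^{-\Delta}$.
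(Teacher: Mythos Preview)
Your high-level reduction is exactly right and matches the paper: correct the $\ge$ to $\le$, build $\mathcal{F}$ as $n$ disjoint copies of a unit gadget $\mathcal{F}_0$ on a private universe, and reduce to proving $|\mathcal{F}'_0|-C|U'_0|\le 1$ per copy. Your universe size $k=\lceil((C+1)^\Delta-1)/C\rceil$ is even the same as the paper's (it is the vertex count of a complete $(C+1)$-ary tree of depth $\Delta-1$).

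The gap is that you never actually build the gadget. You write that $\mathcal{F}_0$ consists of ``$(C+1)^\Delta$ sets drawn (with multiplicities) from the $\Delta$-subsets of $U_0$'' and defer both the choice of multiplicities and the per-unit inequality to an unspecified ``monotonicity/convexity argument''. But the whole difficulty lives there. Your one concrete inequality --- that every element of $U_0\setminus U'_0$ lies in at most one set of $\mathcal{F}'_0$ --- only yields $|\mathcal{F}'_0|\le |A|+(k-|U'_0|)$ with $A$ the sets contained in $U'_0$; subtracting $C|U'_0|$ this is far too weak (it fails already at $U'_0=\emptyset$ unless the family is intersecting, which you have not imposed). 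And the asserted ``each extra element of $U'_0$ unlocks at most $C$ new sets'' is precisely the statement to be proved; it is false for a generic multiset of $\Delta$-subsets and only holds for a carefully engineered one.

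The paper's missing idea is a concrete construction with a built-in recursive structure: take $U_0$ to be the vertices of the complete $(C+1)$-ary tree $T$ of depth $\Delta-1$, and let $\mathcal{F}_0$ be the $(C+1)^{\Delta-1}$ root-to-leaf paths, each taken with multiplicity $C+1$ (so $|\mathcal{F}_0|=(C+1)^\Delta$ and every set has size $\Delta$). The per-unit bound is then a short tree argument: if the root $r\notin U'_0$ then $|\mathcal{F}'_0|\le 1$ since all sets share $r$; otherwise let $T_0$ be the component of $r$ in $T[U'_0]$, with $\ell$ leaves. Each leaf of $T_0$ supports at most $C+1$ sets of $\mathcal{F}'_0$ (via its $C+1$ children outside $U'_0$, or via the multiplicity if it is a leaf of $T$), so $|\mathcal{F}'_0|\le (C+1)\ell$. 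The elementary tree inequality $(C+1)\ell\le 1+C|T_0|\le 1+C|U'_0|$ finishes it. This tree structure is what makes your hoped-for ``at most $C$ per element'' bound true, and it is the piece your proposal is missing.
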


Denote by $T$ the complete $(C+1)$-ary tree of depth $\Delta-1$.
Let the universe $U_0$ be the set of $T$'s vertices.
Let $\mathcal{F}_0$ be the collection of $(C+1)\dot (C+1)^{\Delta-1}$ sets corresponding to root-to-leaf paths in $T$ taken with multiplicity $(C+1)$ each. Each set of $\mathcal{F}_0$ contains the $\Delta$ vertices in its corresponding path, each such path has $(C+1)$ identical sets corresponding to it in $\mathcal{F}_0$.

\begin{lemma}
For every $\mathcal{F}'_0\subseteq \mathcal{F}_0$ and $U'_0 \subseteq U_0$ satisfying that $\forall F_1,F_2\in \mathcal{F}'_0.\;F_1\cap F_2\subseteq U'_0$, we have $|\mathcal{F}'_0| - C\cdot |U'_0| \leq 1$.
\end{lemma}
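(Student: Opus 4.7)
The plan is to prove the bound by induction on $\Delta$. For the base case $\Delta = 1$, $T$ is a single vertex $\mathrm{root}$ and $\mathcal{F}_0$ consists of $C+1$ identical copies of $\{\mathrm{root}\}$: if $|\mathcal{F}'_0| \leq 1$ the bound is trivial, and otherwise any two distinct copies intersect in $\{\mathrm{root}\}$, forcing $\mathrm{root} \in U'_0$, and hence $|\mathcal{F}'_0| - C|U'_0| \leq (C+1) - C = 1$.

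For the inductive step, I would assume the claim holds for the complete $(C+1)$-ary tree of depth $\Delta - 2$ and consider $T$ of depth $\Delta - 1$. If $|\mathcal{F}'_0| \leq 1$ we are done, so assume $|\mathcal{F}'_0| \geq 2$. The key observation is that the root of $T$ lies on every root-to-leaf path, so the intersection of any two distinct copies in $\mathcal{F}'_0$ contains the root; this forces $\mathrm{root} \in U'_0$. Letting $T_1, \ldots, T_{C+1}$ be the $C+1$ subtrees hanging off the root (each itself a complete $(C+1)$-ary tree of depth $\Delta - 2$), I would partition $\mathcal{F}'_0$ as $\bigsqcup_{i=1}^{C+1} \mathcal{F}_i'$ according to which subtree each set's path traverses, and set $U_i' := U'_0 \cap V(T_i)$. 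Stripping the root from each set in $\mathcal{F}_i'$ yields a sub-multiset of the full multiplicity-$(C+1)$ path collection on $T_i$, and the pairwise-disjointness-after-removal property is inherited on $V(T_i)$.

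Applying the induction hypothesis to each $T_i$ then gives $|\mathcal{F}_i'| - C|U_i'| \leq 1$. Summing the $C+1$ resulting inequalities, using that the $\mathcal{F}_i'$ partition $\mathcal{F}'_0$ and that the $U_i'$ partition $U'_0 \setminus \{\mathrm{root}\}$, yields $|\mathcal{F}'_0| - C(|U'_0| - 1) \leq C+1$, which rearranges to the desired $|\mathcal{F}'_0| - C|U'_0| \leq 1$.

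The only subtlety is that the inductive claim must be phrased so that it applies to arbitrary sub-multisets of the full multiplicity-$(C+1)$ collection on $T_i$ (not merely to the fixed $\mathcal{F}_0$ as a whole), since this is the form in which it gets applied to each subtree. The combinatorial punchline is that the root's forced membership in $U'_0$ contributes exactly $C$ to the bookkeeping, which precisely absorbs the slack of $(C+1) - 1 = C$ obtained from summing the $C+1$ subtree bounds.
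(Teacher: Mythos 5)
Your induction is correct, but it is a genuinely different argument from the one in the paper. The paper argues directly and globally: it takes $T_0$ to be the connected component of $T[U'_0]$ containing the root, bounds the number of leaves $\ell$ of $T_0$ via $(C+1)\ell\le 1+C|T_0|$, and then bounds $|\mathcal{F}'_0|$ by charging each surviving path to a leaf of $T_0$ through which it passes (at most $C+1$ paths per leaf). You instead induct on the depth: forcing the root into $U'_0$ once $|\mathcal{F}'_0|\ge 2$, splitting $\mathcal{F}'_0$ among the $C+1$ subtrees, applying the hypothesis to each, and observing that the $C$ units bought by the root's membership in $U'_0$ exactly cancel the $+C$ slack from summing $C+1$ copies of the bound. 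Your closing remark about phrasing the inductive claim for arbitrary sub-multisets is exactly the right caveat, and it is automatically satisfied since the lemma already quantifies over all $\mathcal{F}'_0\subseteq\mathcal{F}_0$. If anything, your version is the more robust of the two: the paper's charging step implicitly assumes every path in $\mathcal{F}'_0$ meets a leaf of $T_0$, which can fail for a path that exits $U'_0$ at an internal vertex of $T_0$ (e.g.\ with $C=1$, $\Delta=3$ and $U'_0$ consisting of the root and one of its children, one can select three pairwise-compatible paths while $T_0$ has a single leaf), so the paper's intermediate inequality $|\mathcal{F}'_0|\le(C+1)\ell$ needs an extra accounting of such "early-exit" paths, whereas your recursion handles them without any special casing. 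The trade-off is that the paper's argument, once patched, gives a single explicit count, while yours distributes the bookkeeping across levels of the tree.
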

\begin{proof}
Denote by $r$ the root of $T$.
If $r\notin U'_0$ then $|\mathcal{F}'_0|\leq 1$ as all sets contain $r$.
Otherwise, denote by $T_0$ the connected component of $T[U'_0]$ (i.e., the induced sub-graph of $T$ on the vertex set $U'_0$) containing $r$.
Denote by $\ell$ the number of leaves in $T_0$, and by $|T_0|$ the total number of vertices in $T_0$.
As $T_0$ is a $(C+1)$-ary tree, we have $(C+1) \cdot \ell \leq 1 + C \cdot |T_0|$.
Consider a leaf $v$ of $T_0$ which is not a leaf of $T$. 
It has $(C+1)$ children in $T$ and by definition, all are not in $U'_0$.
Thus, at most one set in $\mathcal{F}'_0$ can contain each of these children.
In particular, at most $(C+1)$ sets in $\mathcal{F}'_0$ contain $v$.
Consider a leaf $v$ of $T_0$ which is also a leaf of $T$.
There is only one root-to-leaf path containing $v$, and it appears in $\mathcal{F}_0$ with multiplicity $(C+1)$. Hence, there are at most $(C+1)$ sets in $\mathcal{F}_0'$ containing $v$.
From both cases we conclude that $|\mathcal{F}'_0| \leq (C+1)\cdot \ell \leq 1 + C \cdot |T_0|  \leq 1 + C \cdot |U'_0|$.
Thus, $|\mathcal{F}'_0| - C\cdot |U'_0| \leq 1$.
\end{proof}

We prove Theorem~\ref{removalupperbound} by taking $n$ disjoint copies of $\mathcal{F}_0$ over different base sets.

\subsection{Derandomizing Lemma~\ref{domset}}\label{derand}
The construction of a dominating set in Lemma~\ref{domset} can be done in a deterministic manner using the \emph{method of conditional expectations} (\cite{spencer1994ten} \cite{raghavan1988probabilistic}) as follows.
We first note that, in the notation of Lemma~\ref{domset}, and for every disjoint subsets $V_0,V_1\subseteq V(G)$, we can efficiently compute 
\[
E[|R|\;|\; \forall v\in V_1. v\in R_0\;\wedge\;\forall v\in V_0. v\notin R_0]
\]
by using the linearity of expectation and considering the following cases:
\begin{itemize}
    \item If $v\in V_1$ then $Pr(v\in R)=1$.
    \item If $v\in V_0$:
    \begin{itemize}
        \item If $N(v)\cap V_1 \neq \emptyset$ then $Pr(v\in R)=0$.
        \item Else, $Pr(v\in R)=(1-p)^{|N(v)\setminus V_0|}$.
    \end{itemize}
    \item If $v\notin V_0\cup V_1$:
    \begin{itemize}
        \item If $N(v)\cap V_1 \neq \emptyset$ then $Pr(v\in R)=p$.
        \item Else, $Pr(v\in R)=p + (1-p)^{1+|N(v)\setminus V_0|}$.
    \end{itemize}
\end{itemize}

We next notice that if $u\notin V_0\cup V_1$ then
\begin{align*}
E[|R|\;|\; \forall v\in V_1. v\in R_0\;\wedge\;\forall v\in V_0. v\notin R_0] =
p\cdot &E[|R|\;|\; \forall v\in V_1\cup\{u\}. v\in R_0\;\wedge\;\forall v\in V_0. v\notin R_0]
\\ +
(1-p)\cdot &E[|R|\;|\; \forall v\in V_1. v\in R_0\;\wedge\;\forall v\in V_0\cup\{u\}. v\notin R_0].
\end{align*}

Thus, we can add $u$ into either $V_0$ or $V_1$ without increasing the above expectation.
We therefore can iteratively add every vertex of $V(G)$ to either $V_0$ or $V_1$ without increasing the conditional expectation. 
We finish with a concrete choice of $R_0$ such that $|R|$ is bounded by the original $E[|R|]$.

\end{document}